\theoremstyle{plain}
\newtheorem{thm}{Theorem}
\newtheorem{introthm}{Theorem}
\newtheorem{conj}{Conjecture}
\newtheorem{lemma}{Lemma}
\newtheorem{defin}{Definition}
\newtheorem*{test}{Hypothesis testing scenario}
\newcommand{\defn}[1]{
\begin{defin}
#1
\end{defin}}
\newcommand{\mc}[1]{\mathcal{#1}}
\newcommand{\tr}{\text{tr}}
\newcommand{\abs}[1]{\left|#1\right|}
\newcommand{\pr}{{\rm Pr}}
\newcommand{\bitstring}[1]{\left\{0,1\right\}^{#1}}
\newcommand{\set}[1]{\left\{#1\right\}}
\newcommand{\dist}[2]{||#1 - #2||_1}
\newcommand{\bbn}{\mathbb{N}}
\newcommand{\bbr}{\mathbb{R}}
\newcommand{\ts}{$t$-sparse }
\newcommand{\ets}{$\epsilon$-approximately $t$-sparse}
\newcommand{\esim}{$\epsilon$-simulation}
\newcommand{\esimable}{$\epsilon$-simulable}
\newcommand{\esimer}{$\epsilon$-simulator}
\newcommand{\esimed}{$\epsilon$-simulated}
\newcommand{\cmany}{$\mathcal{C}_{\rm PROD}$}
\newcommand{\ciqp}{$\mathcal{C}_{\rm IQP}$}
\newcommand{\clon}{$\mathcal{C}_{\rm LON}$}
\newcommand{\cuniv}{$\mathcal{C}_{\rm UNIV}$}
\newcommand{\cpolyneg}{$\mathcal{C}_{poly\mathcal{N}}$}
\newcommand{\cencode}{$\mathcal{C}_{e}$}
\newcommand{\pdist}{probability distribution}
\newcommand{\ket}[1]{\left\lvert #1 \right\rangle}
\newcommand{\density}[1]{\left\lvert #1 \right\rangle \left\langle #1 \right\rvert}
\begin{document}

\title{From estimation of quantum probabilities to simulation of quantum circuits}

\author{Hakop Pashayan}
\orcid{0000-0003-3782-7602}
\author{Stephen D. Bartlett}
\orcid{0000-0003-4387-670X}
\email{stephen.bartlett@sydney.edu.au}
\affiliation{Centre for Engineered Quantum Systems, School of Physics, The University of Sydney, Sydney, NSW 2006, Australia}

\author{David Gross}
\affiliation{Institute for Theoretical Physics, University of Cologne, Germany}

\date{24 December 2019}

\begin{abstract}

Investigating the classical simulability of quantum circuits provides a promising avenue towards understanding the computational power of quantum systems.  Whether a class of quantum circuits can be efficiently simulated with a probabilistic classical computer, or is provably hard to simulate, depends quite critically on the precise notion of ``classical simulation'' and in particular on the required accuracy. We argue that a notion of classical simulation, which we call \textsc{epsilon}-simulation (or \esim~for short), captures the essence of possessing ``equivalent computational power'' as the quantum system it simulates: It is statistically impossible to distinguish an agent with access to an $\epsilon$-simulator from one possessing the simulated quantum system.
We relate \esim~to various alternative notions of simulation predominantly focusing on a simulator we call a \emph{poly-box}. A poly-box outputs $1/poly$ precision additive estimates of Born probabilities and marginals. This notion of simulation has gained prominence through a number of recent simulability results. Accepting some plausible computational theoretic assumptions, we show that \esim~is strictly stronger than a poly-box by showing that IQP circuits and unconditioned magic-state injected Clifford circuits are both hard to $\epsilon$-simulate and yet admit a poly-box. In contrast, we also show that  these two notions are equivalent under an additional assumption on the sparsity of the output distribution (\emph{poly-sparsity}).
\end{abstract}

\maketitle

\section{Introduction and summary of main results}

Which quantum processes can be efficiently simulated using classical resources is a fundamental and longstanding problem~\cite{feynman1982simulating, aaronson2004improved, valiant2002quantum, terhal2002classical, bartlett2002efficient, gurvits2005complexity}.  Research in this area can be split into two broad classes: results showing the hardness of efficient classical simulation for certain quantum processes, and the development of efficient classical algorithms for simulating other quantum processes. Recently, there has been substantial activity on both sides of this subject. Works on boson sampling~\cite{aaronson2011computational}, instantaneous quantum polynomial (IQP) circuits~\cite{bremner2010classical, bremner2016average}, various translationally invariant spin models \cite{gao2017quantum, bermejo2017architectures}, quantum Fourier sampling \cite{fefferman2015power}, one clean qubit (also known as DQC1) circuits \cite{morimae2014hardness, morimae2017power}, chaotic quantum circuits \cite{boixo2018characterizing} and conjugated Clifford circuits \cite{bouland2017quantum} have focused on showing the difficulty of classically simulating these quantum circuits. On the other hand, there has been substantial recent progress in classically simulating various elements of quantum systems including matchgate circuits with generalized inputs and measurements~\cite{brod2016efficient} {{}}{(see also \cite{valiant2002quantum, terhal2002classical, Jozsa2008matchgates} for earlier works in this direction)}, circuits with positive quasi-probabilistic representations~\cite{veitch2012negative, veitch2013CV, mari2012positive}, stabilizer circuits supplemented with a small number of $T$ gates~\cite{bravyi2016improved}, stabilizer circuits with small coherent local errors~\cite{bennink2017unbiased}, noisy IQP circuits \cite{bremner2017achieving}, noisy boson sampling circuits \cite{oszmaniec2018classical}, low negativity magic state injection in the fault tolerant circuit model~\cite{howard2017application}, quantum circuits with polynomial bounded negativity \cite{pashayan2015estimating}, Abelian-group normalizer circuits~\cite{nest2012efficient, bermejo2014classical} and certain circuits with computationally tractable states and sparse output distributions \cite{schwarz2013simulating}. In addition, there has been some work on using small quantum systems to simulate larger quantum systems~\cite{bravyi2016trading} as well as using noisy quantum systems to simulate ideal ones~\cite{temme2017error}.

An important motivation for showing efficient classical simulability or hardness thereof for a given (possibly non-universal) quantum computer is understanding what properties of a quantum computer give rise to super-classical computational power.
 In this context, we desire classical simulability to imply that the computational power of the target quantum computer is ``contained in classical'', and the hardness of classical simulablility to imply that the target computational device can achieve at least some computational task beyond classical. Achieving these desiderata hinges crucially on the strength of the notion of simulation that is employed.
As an extreme example, if one uses a notion of simulation that is too weak, then efficient classical ``simulation'' of universal quantum circuits may be possible (even if BQP $\not \subseteq$ BPP). In such a case, the existence of a ``simulator'' does not imply that the computational power of the simulated system is contained within classical.
As an opposite extreme, if one uses a notion of simulation that is too strong, then efficient classical ``simulation'' of even classical circuits may be impossible \cite{nest2008classical}. In this case, the non-existence of such a simulator does not imply that the computational power of the ``un-simulable'' system is outside of classical. Once we establish the notion of simulation that is neither ``too strong'' nor ``too weak'', it will become evident that both too strong and too weak notions of simulations have been commonly used in the literature.
To this end, we require  a clear mathematical statement about which notion of simulation minimally preserves the computational power of the system it simulates.

{{}}{From a computer science perspective, the computational power of a device can be characterized by the set of problems such a device can solve. However, when it comes to quantum devices that produce probabilistic output from an exponentially growing space, even the question of what problems these devices solve or what constitutes a solution is subtle. Given an efficient description of a quantum circuit, the exact task performed by a quantum computer is to output a sample from the probability distribution associated with the measurement outcomes of that quantum circuit. This suggests that for ideal quantum computers, sampling from the \emph{exact} quantum distribution is what constitutes a solution.
On the other hand, it is unclear what well justified necessary requirement fail to be met by an arbitrarily small departure from exact sampling.
}
{{}}{Perhaps due to these subtleties, the choice of notion of ``classical simulation'' for sampling problems  lacks consensus and, under the umbrella term of \emph{weak simulation}, a number of different definitions have been used in the literature.}
{{}}{We will argue that some of these notions are too strong to be minimal and others are too weak to capture computational power.}
{{}}{The cornerstone of this argument will be the concept of \emph{efficient indistinguishability}; the ability of one agent to remain indistinguishable from another agent under the scrutiny of any interactive test performed by a computationally powerful referee whilst simultaneously employing resources that are polynomially equivalent.}

{{}}{Examples of definitions that we argue are too strong include simulators required to sample from exactly the target distribution or sample from a distribution that is exponentially close (in $L_1$-norm) to the target distribution \cite{jozsa2003role}. These also include a notion of simulation based on approximate sampling where the accuracy requirement is the very strong condition that every outcome probability is within a small \emph{relative} error of the target probability \cite{Terhal2004, bremner2010classical, morimae2017power}. From the perspective of efficient indistinguishibility, these notions of simulation are not minimal since they rule out weaker notions of simulation that are nonetheless efficiently indistinguishable from the target quantum system.}

{{}}{An example of a notion of approximate weak simulation that we argue is too weak requires that the classical algorithm sample from a distribution that is within some small fixed constant $L_1$-norm of the target distribution \cite{bremner2017achieving, oszmaniec2018classical, bremner2016average, gao2017quantum, bermejo2017architectures, morimae2017hardness, bouland2017quantum}. We argue that such a notion does not capture the full computational power of the target, since it cannot perform a task that can be performed by the target device, namely of passing some sufficiently powerful distinguishibility test.}
 
The focus of this paper will be on a notion of approximate weak simulation we call \emph{efficient polynomially small in $L_1$-norm} (\textsc{epsilon}) simulation (or \esim~for short). This has been used in prior works including Refs.~\cite{aaronson2011computational, aaronson2014equivalence, fefferman2015power, bravyi2016improved}. We will advocate for this notion of simulation (over other definitions of weak simulation) by showing that an \esimer~of a quantum computer achieves efficient indistinguishablity and any simulator that achieves efficient indistinguishablity satisfies the definition of an \esimer. Thus \esim~minimally captures computational power.
The notion of \esim~is also closely related to the definition of a sampling problem from Ref.~\cite{aaronson2014equivalence} (where the definition includes an exact statement of what constitutes a solution to the sampling problem). In this language, an \esimer~of a family of quantum circuits can be exactly defined as an efficient classical algorithm which can solve all sampling problems defined by the family of quantum circuits in the natural way. Thus, our result shows that a device can solve all sampling problems defined by a quantum computer if and only if the device is efficiently indistinguishable from the quantum computer.

{{}}{The conceptual significance of \esim~as a notion that minimally captures computational power motivates the study of its relation to other notions of simulation. This is particularly important for translating the existing results on simulability and hardness into statements about computational power relative to classical. Such a comparison to the above-mentioned approximate weak simulators is clear but a comparison to simulators defined in terms of Born probability estimation can be significantly more involved. Simulators which output sufficiently accurate Born probability estimates can be called as subroutines in an efficient classical procedure in order to output samples from a desired target distribution. Such a procedure can be used to ``lift'' these simulators to an \esimer~implying that the computational power of all families of quantum circuits simulable in this way is contained within classical.}

{{}}{Some commonly used notions of simulation such as \emph{strong simulation} and multiplicative precision simulation require the ability to estimate Born probabilities extremely accurately. These simulators can be lifted to \esimer s \cite{valiant2002quantum, terhal2002classical, Terhal2004}. We focus on another notions of simulation that has been prominent in recent literature \cite{pashayan2015estimating, howard2017application, bennink2017unbiased} which we call a \emph{poly-box}}
{{}}{. Compared to strong or multiplicative precision simulators, a poly-box has a much less stringent requirement on the accuracy of Born probability estimates that it produces. 
We discuss the significant conceptual importance of poly-boxes owing to the fact that they capture the computational power with respect to decision problems while simultaneously being weak enough to be admitted by IQP circuits,  unconditioned magic-state injected Clifford circuits and possibly other intermediate models for quantum computation.}

{{}}{Assuming some complexity theoretic conjectures, we show that a poly-box is a strictly weaker notion of simulation than \esim. However, if we impose a particular sparsity restriction on the target family of quantum circuits, then we show that a poly-box can be lifted to an \esimer, implying that the two notions are, up to efficient classical computation, equivalent under this sparsity restriction.}

\subsection{Outline of our main results}

\subsubsection{ {{}}{Indistinguishability and } \esim.}

In Sec.~\ref{simulation}, we  motivate the use of a particular notion of efficient simulation, which we call \esim. Essentially, we say that an algorithm can \emph{$\epsilon$-simulate} a family of quantum circuits, if for any $\epsilon>0$, it can sample from a distribution that is $\epsilon$-close in $L_1$-norm to the true output distribution of the circuit, and if the algorithm runs in time polynomial in $1/\epsilon$ and in the number of qubits.
We provide an operational meaning for this notion by showing that ``possessing an \esimer'' for a family of circuits is equivalent to demanding that even a computationally omnipotent referee cannot  distinguish the simulator's outputs from that of the target circuit family. Further, any simulator that satisfies efficient distinguishability also satisfies the definition of an \esimer. {{}}{This is captured by the following theorem presented in Sec.~\ref{simulation}.}

\begin{introthm}
{Bob has an \esimer~of Alice's quantum computer if and only if given the hypothesis testing scenario considered in Sec.~\ref{sec:motivating} there exists a strategy for Bob which jointly achieves indistinguishability and efficiency.}
 \end{introthm}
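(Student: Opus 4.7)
The plan is to prove the biconditional by establishing each direction separately, with both arguments hinging on the standard variational characterization of total variation distance: for two distributions $P,Q$ on a discrete set, $\dist{P}{Q}=2\sup_{S}|P(S)-Q(S)|$, so the $L_1$-distance exactly equals twice the optimal statistical distinguishing advantage of any (unbounded) hypothesis test.

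For the forward direction, I would suppose Bob possesses an \esimer~for Alice's circuit family. The candidate strategy is the obvious one: whenever Bob is required to produce output purporting to come from Alice's device, he instead runs the simulator and forwards its sample. Efficiency is immediate from the definition of \esim, since the simulator runs in time polynomial in both the circuit size and $1/\epsilon$. For indistinguishability, I would appeal to the variational identity above: since the sampled distribution is $\epsilon$-close in $L_1$ to Alice's true distribution, no test (computable or otherwise) performed by the referee can achieve a statistical bias larger than $\epsilon/2$ per round, so over any polynomial number of rounds the referee's distinguishing advantage remains within whatever indistinguishability threshold is specified in Sec.~\ref{sec:motivating}.

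For the backward direction, I would suppose Bob has a strategy that is simultaneously efficient and indistinguishable. The goal is to extract an \esimer~from this strategy. Efficiency of the strategy directly yields the runtime condition of \esim. The subtle part is converting indistinguishability into the $L_1$-closeness condition: here I would argue by contrapositive. If Bob's induced output distribution $P_{\text{Bob}}$ were more than $\epsilon$-far in $L_1$ from Alice's distribution $P_{\text{Alice}}$, then by the variational identity there exists a subset $S$ of the outcome space with $|P_{\text{Bob}}(S)-P_{\text{Alice}}(S)|>\epsilon/2$. The computationally omnipotent referee is permitted to compute membership in $S$ (and indeed to compute $S$ itself from the circuit description), so the referee can apply the optimal likelihood-ratio test and, by a standard Chernoff/Hoeffding argument over polynomially many samples, distinguish the two sources with probability approaching $1$, contradicting indistinguishability.

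The main obstacle I anticipate is not either implication in isolation but the bookkeeping of parameters between the two notions, namely aligning the $\epsilon$ that governs $L_1$-closeness in the definition of \esim~with the indistinguishability threshold and the polynomial sample complexity budgeted to the referee in the hypothesis testing scenario of Sec.~\ref{sec:motivating}. Once the scenario fixes how many rounds the referee is allowed and how large a bias counts as distinguishability, both directions reduce to concentration of the empirical bias around its expectation, and the choice of $\epsilon$ can be matched on the two sides so the equivalence holds cleanly.
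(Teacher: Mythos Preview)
Your high-level structure is right, and the backward direction is essentially what the paper does (the paper simply restricts to single-round protocols and reads off $\delta\propto\epsilon$ from the optimal test, so your multi-sample Chernoff argument is more elaborate than needed but not wrong). The forward direction, however, has a real gap that you explicitly flag as ``bookkeeping'' and then dismiss --- but it is in fact the main technical content of the proof.

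The issue is this: the hypothesis-testing scenario is an \emph{adaptive, multi-round} protocol in which the referee decides after each response whether to continue, and Bob does not know the eventual number of rounds $m$ in advance. Your proposed strategy (``run the simulator and forward the sample'') is underspecified: at what precision $\epsilon$ does Bob run the simulator in round $j$? If he uses a fixed $\epsilon$, the accumulated $L_1$ error after $m$ rounds can be of order $m\epsilon$, which is not bounded by any fixed $\delta$. If he uses $\epsilon\sim\delta/m$, he cannot, since $m$ is unknown. The paper's solution is a round-dependent schedule $\epsilon_j=\frac{24\delta}{\pi^2 j^2}$, chosen so that $\sum_j\epsilon_j\leq 4\delta$ irrespective of $m$; one then proves by a telescoping/triangle-inequality argument that the $L_1$ distance between the full transcripts $X_m$ and $X'_m$ is bounded by $\sum_j\epsilon_j$.

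The second part you are missing is that this schedule is not obviously \emph{efficient} in the required sense: Bob's per-round cost scales like $(N(a_j)/\epsilon_j)^\kappa\sim (j^2 N(a_j)/\delta)^\kappa$, which grows polynomially in the round index. One must verify that the \emph{total} cost $\sum_{j=1}^m (j^2 N(a_j)/\delta)^\kappa$ is still $O(\mathrm{poly}(N(\alpha),1/\delta))$ where $N(\alpha)=\sum_j N(a_j)$ is Alice's total cost. The paper carries out this calculation explicitly (using $m\leq N(\alpha)$ and some elementary estimates on power sums), and it is not a one-liner. So: your plan is on the right track, but the forward direction needs both the summable precision schedule and the efficiency accounting to go through.
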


\subsubsection{Efficient outcome estimation:  the poly-box.}

{{}}{A family of binary outcome quantum circuits, where each circuit is indexed by a bit-string, defines a decision problem as follows: Given a bit-string indexing a quantum circuit, decide which of the circuit's two possible outcomes is more likely}\footnote{Technically, one is also promised that the given bit-string will only ever index a circuit where the probability of the two outcomes are bounded away from 50\%.}.

{{}}{Here, the only quantity relevant to the computation is the probability associated with the binary measurement outcome (decision). Hence, in this setting, simulation can be defined in terms of the accuracy to which these probabilities can be estimated.}
{{}}{A commonly used notion of simulation known as \emph{strong simulation} requires the ability to  estimate Born probabilities extremely accurately. In Sec.~\ref{estim}, we will define a much weaker notion of simulation} (\emph{poly-box}\footnote{This notion is similar to a notion introduced by Ref.~\cite{nest2009simulating} where it was (using a terminology inconsistent with the present paper) referred to as weak simulation.}) 
{{}}{which is a device that computes an additive polynomial precisions estimate of the quantum probability (or marginal probability) associated with a specific outcome of a quantum circuit.}

We show that families of quantum circuits must admit a poly-box in order to be \esimable.
\begin{introthm}
If $\mc{C}$ is a family of quantum circuits that does not admit a  poly-box algorithm, then $\mc{C}$ is not \esimable. 
\end{introthm}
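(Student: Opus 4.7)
The plan is to prove the contrapositive: if $\mc{C}$ admits an \esimer, then $\mc{C}$ admits a poly-box. Intuitively, once we can efficiently sample from an $L_1$-close approximation of the true output distribution, we can estimate any Born rule (or marginal) probability by simple empirical frequency counting, and both the sampling error of the simulator and the Monte Carlo error can be driven polynomially small in polynomial time. The construction is therefore a textbook reduction (simulator $\Rightarrow$ estimator), and the interesting content is checking that the two sources of error compose correctly and that the runtime remains polynomial in $n$ and in the target additive precision $\delta$.

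Concretely, suppose we are given a circuit from $\mc{C}$ with true output distribution $p$, an outcome $x$ (or a partial assignment defining a marginal event $E$), and a desired precision $\delta > 0$. First I would invoke the \esimer~with error parameter $\epsilon := \delta/2$, producing an algorithm that samples from a distribution $q$ with $\|q-p\|_1 \leq \delta/2$ in time polynomial in $n$ and $1/\delta$. In particular, for any event $E$, the pointwise bound $|q(E)-p(E)| \leq \|q-p\|_1 \leq \delta/2$ holds. Next, I would draw $N = O(\log(1/\gamma)/\delta^2)$ independent samples from $q$ and set $\hat p(E)$ equal to the fraction of samples lying in $E$ (for the marginal case, this just means checking whether the sampled string agrees with $x$ on the specified qubits). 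By Hoeffding's inequality, $|\hat p(E) - q(E)| \leq \delta/2$ with probability at least $1-\gamma$. The triangle inequality then gives $|\hat p(E) - p(E)| \leq \delta$ with probability at least $1-\gamma$, which is exactly the guarantee demanded of a poly-box; the total runtime is $N$ times the cost of one simulator call, hence $\mathrm{poly}(n, 1/\delta, \log(1/\gamma))$.

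The main thing to verify carefully is that this construction matches whatever technical form of poly-box is defined in Section~\ref{estim}. In particular I would need to confirm that (i) the poly-box definition permits a confidence parameter $\gamma$ handled by standard amplification (using the median-of-means trick if a single-shot confidence is required), and (ii) marginal probabilities over arbitrary subsets of the measured qubits are indeed computable from samples by marginalization, which is immediate. Neither point is a genuine obstacle; both are routine. The only substantive step is the observation that $L_1$-closeness of distributions transfers to additive closeness on any event, which is immediate from $|q(E)-p(E)| \leq \|q-p\|_1$ (since $E$ and its complement partition the outcome space). With these pieces in place, the contrapositive is established: no poly-box implies no \esimer.
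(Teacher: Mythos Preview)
Your proposal is correct and takes essentially the same approach as the paper: the paper's proof is a one-liner stating that given an \esimer, a poly-box is constructed ``in the obvious way simply by observing the frequency with which the \esimer~outputs outcomes in $S$,'' and you have spelled out precisely those details (splitting the error budget between the $L_1$ closeness and the Hoeffding bound, then applying the triangle inequality). Your notational swap of $\delta$ for precision and $\gamma$ for confidence is harmless, and your runtime analysis matches the $O(\mathrm{poly}(n,\epsilon^{-1},\log\delta^{-1}))$ requirement in the paper's poly-box definition.
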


{{}}{We advocate the importance of this notion on the grounds that whether or not some given family of quantum circuits admits a poly-box informs our knowledge of the computational power of that family relative to classical. In particular:} 
\begin{itemize}
	\item {{}}{if a (possibly non-universal) quantum computer can be efficiently classically simulated in the sense of a poly-box, then such a quantum computer cannot solve decision problems outside of classical}
	\item {{}}{if a (possibly non-universal) quantum computer cannot be efficiently classically simulated in the sense of a poly-box, then such a quantum computer can solve a sampling problem outside of classical (Thm.~\ref{necessity of polybox})}
\end{itemize}

We give three examples of poly-boxes.  
The first one is an estimator based on Monte Carlo sampling techniques applied to a quasiprobability representation. 
This follows the work of Ref.~\cite{pashayan2015estimating}, where the it was found that the efficiency of this estimator depends on the amount of ``negativity'' in the quasiprobability  description  of the quantum circuit.
{{}}{As a second example, we} consider the family of circuits \cmany, for which the $n$-qubit input state $\rho$ is an arbitrary product state (with potentially exponential negativity), transformations consist of Clifford unitary gates, and measurements are of $k\leq n$ qubits in the computational basis.  
We present an explicit  poly-box for \cmany~in Sec.~\ref{estim}.   
As a third example, we also outline a construction of a  poly-box for Instantaneous Quantum Polynomial-time (IQP) circuits \ciqp~based on the work of Ref.~\cite{shepherd2010binary}.

\subsubsection{From estimation to simulation.}

For the case of very high precision probability estimation algorithms, prior work has addressed the question of how to efficiently lift these to algorithms for high precision approximate weak simulators. In particular, Refs.~\cite{valiant2002quantum, terhal2002classical, Terhal2004} (see also Appendix \ref{strong sim proof}) lift estimation algorithms with small \emph{relative} error. In Appendix \ref{strong sim proof}, we also present a potentially useful algorithm for lifting small additive error estimators.
In Sec.~\ref{samp} we focus on the task of lifting an algorithm for a poly-box, to an \esimer. Since a poly-box is a much less precise probability estimation algorithm (in comparison to strong simulation),  achieving this task in the general case is implausible (see Sec.~\ref{hardness}). In Sec.~\ref{samp}, we will show that a poly-box can efficiently be lifted to an \esimer~if we restrict the family of quantum distributions to those possessing a property we call \emph{poly-sparsity}. This sparsity property measures ``peakedness versus uniformness'' of distributions and is related to the scaling of the smooth max-entropy of the output distributions of quantum circuits.  Loosely, a \emph{poly-sparse} quantum circuit can have its outcome probability distribution well approximated by specifying the probabilities associated with polynomially many of the most likely outcomes.  We formalize this notion in Sec.~\ref{samp}.  

\begin{introthm}
{Let $\mc{C}$ be a family of quantum circuits with a corresponding family of probability distributions $\mathbb{P}$. Suppose there exists a poly-box over $\mc{C}$, and that $\mathbb{P}$ is poly-sparse. Then, there exists an \esimer~of $\mc{C}$.}
\end{introthm}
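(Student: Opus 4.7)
My strategy is to use the poly-sparsity assumption to reduce the task to sampling from a distribution of polynomial support, and to use the poly-box to identify that support and to estimate the relevant probabilities. First, pick a threshold $\delta = \delta(\epsilon,n)$, with $1/\delta$ polynomial in $n$ and $1/\epsilon$, such that poly-sparsity of $\mathbb{P}$ guarantees that the true distribution $P \in \mathbb{P}$ is within $\epsilon/3$ in $L_1$ of a distribution supported only on outcomes whose true probability is at least $\delta$. Thus it suffices to build an approximate sampler for this ``heavy'' part of $P$.

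The next step is to locate the heavy outcomes without touching the exponentially many $n$-bit strings directly. I would run a breadth-first search over a prefix tree: at each live node $s$ I query the poly-box for the marginal $p(s) = \sum_{x \text{ extends } s} p(x)$ to additive precision $\Delta \ll \delta$, extend $s$ to $s0$ and $s1$ whenever the estimate $\hat p(s) \geq \delta - \Delta$, and prune otherwise. The crucial combinatorial point is that true marginals at any single level sum to $1$, so at most $O(1/\delta)$ survivors can appear per level; the entire search therefore explores $\mathrm{poly}(n,1/\epsilon)$ nodes. Choosing $\Delta$ polynomially small ensures that no prefix of a truly heavy outcome (whose marginals are all $\geq \delta$) is ever pruned, while prefixes with marginal below $\delta - 2\Delta$ are certainly discarded. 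At the surviving leaves $x$ I invoke the poly-box one more time to get a high-precision estimate $\hat p(x)$ of $p(x)$, clip negative estimates to zero, renormalize, and sample from the resulting distribution $\tilde P$ by standard inverse-CDF sampling over a polynomially-sized list.

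The final step is an error decomposition of $\|\tilde P - P\|_1$ into three pieces: (i) the tail mass lost by truncating below $\delta$, controlled by poly-sparsity; (ii) the aggregate additive error from the polynomially many leaf-level poly-box calls, controlled by the final estimation precision; and (iii) a renormalization correction, bounded by (i) plus (ii). Setting $\delta$, $\Delta$, and the leaf precision to appropriate inverse polynomials in $n$ and $1/\epsilon$, and union-bounding over the polynomially many poly-box queries (each succeeding with high probability in time $\mathrm{poly}(n,1/\epsilon)$), delivers $\|\tilde P - P\|_1 \leq \epsilon$ in polynomial time, i.e.\ an \esimer. The main technical obstacle I expect is this coupled tuning of $\delta$, $\Delta$, and the leaf precision together with the failure probabilities of the individual poly-box calls, in particular ensuring that the purely \emph{additive} (as opposed to multiplicative) guarantees of the poly-box are sufficient to prevent prefixes of heavy outcomes from being spuriously pruned during the tree search while simultaneously keeping the number of survivors polynomial.
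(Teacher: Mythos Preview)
Your proposal is correct and is, at the algorithmic level, essentially the same approach the paper takes. The paper's own proof is a two-line black-box invocation of Lemma~\ref{SchwarzVdNLem} (Theorem~11 of Schwarz--Van den Nest): poly-sparsity supplies the $t$ in the promise, the poly-box supplies the additive estimators of marginals, and the lemma hands back an approximate sampler efficient in $k$, $t$, $\epsilon^{-1}$, $\log\delta^{-1}$. What you have sketched---the prefix-tree search that uses additive marginal estimates to locate the $O(1/\delta)$ heavy strings level by level, followed by leaf estimation, clipping, renormalization, and inverse-CDF sampling---is precisely the content of that cited lemma. So you have rediscovered the engine the paper outsources; this buys you a self-contained argument at the cost of having to carry out the parameter bookkeeping you flag in your last paragraph, whereas the paper simply cites it away.

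One small point to tighten: your opening sentence asserts that poly-sparsity directly gives a $\delta$ such that the mass below threshold $\delta$ is at most $\epsilon/3$. The definition only gives a $t$ such that $P$ is $(\epsilon/3)$-close to some $t$-sparse distribution. You need one extra line: among the top $t$ outcomes, those with $P(x)<\delta$ contribute at most $t\delta$ mass, and the remaining outcomes contribute at most $\epsilon/3$, so choosing $\delta=\Theta(\epsilon/t)$ (still inverse-polynomial by poly-sparsity) gives the threshold you want. With that in place, your survivor bound (``marginals at each level sum to $1$, hence $O(1/\delta)$ survivors'') and the observation that every prefix of a heavy leaf has marginal $\geq\delta$ go through exactly as you say.
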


We emphasize that the proof of this theorem is constructive, and allows for new simulation results for families of quantum circuits for which it was not previously known if they were efficiently simulable.  As an example, our results can be straightforwardly used to show that Clifford circuits with sparse outcome distributions and with small amounts of local unitary (non-Clifford) noise, as described in Ref.~\cite{bennink2017unbiased}, are \esimable.

\subsubsection{Hardness results.}

Finally, in Sec.~\ref{hardness}, we prove that the poly-box requirements of Theorem 2 is on its own not sufficient for $\epsilon$-simulability.  The challenge to proving such a result is identifying a natural family of non-poly-sparse quantum circuits for which a poly-box exists but for which \esim~is impossible.  

We prove that the family \cmany ~described above, which violates the poly-sparsity requirement, admits a poly-box.  Then, by assuming a now commonly used ``average case hardness'' conjecture~\cite{aaronson2011computational, bremner2016average, gao2017quantum, fefferman2015power, morimae2017hardness, bouland2017quantum}, we show that the ability to perform \esim~of \cmany ~implies the unlikely result that the polynomial hierarchy collapses to the third level.  Loosely, this result suggests that there exist quantum circuits where the probability of any individual outcome (and marginals) can be efficiently estimated, but the system cannot be \esimed.  Our hardness result closely follows the structure of several similar results, and in particular that of the IQP circuits result of Ref.~\cite{bremner2016average}.

Our proof relies on a conjecture regarding the hardness of estimating Born rule probabilities to within a small multiplicative factor for a substantial fraction of randomly chosen circuits from \cmany. This average case hardness conjecture (which we formulate explicitly as Conjecture~1) is a strengthening of the worst case hardness of multiplicative precision estimation of probabilities associated with circuits from \cmany. Worst case hardness {{}}{can be shown by applying the result of Refs.~\cite{Bravyi2005magic, jozsa2014classical, goldberg2017complexity, kuperberg2015hard, Fujii2017commuting}} and follows from an analogous argument to Theorem 5.1 of Ref.~\cite{bouland2017quantum}. 
	\begin{introthm}
		If there exists an \esimer~of \cmany ~and Conjecture~\ref{conj} holds, then the polynomial hierarchy collapses to the third level.
	\end{introthm}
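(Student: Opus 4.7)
The plan is to follow the now-standard Stockmeyer--Toda hardness template developed for BosonSampling and IQP circuits (e.g.\ Ref.~\cite{bremner2016average}), adapted to the family \cmany.

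I would begin by assuming, for contradiction, that an \esimer{} for \cmany{} exists. Fix a circuit $C \in \mc{C}_{\rm PROD}$ with true output distribution $p$, and let $q$ denote the distribution sampled by the simulator, so that $\| p - q \|_1 \leq \epsilon$ and the simulator runs in time polynomial in the circuit size and $1/\epsilon$. Viewing the simulator as a uniform family of classical randomized circuits taking a random string $r$ to an outcome $y$, the probability $q(y) = \pr_r[\text{simulator outputs } y]$ is a $\#P$-style counting problem. Stockmeyer's approximate counting theorem then says that for any specified $y$ and any $1/\text{poly}$ multiplicative precision, $q(y)$ can be approximated inside $\text{BPP}^{\text{NP}} \subseteq \Delta_3^p$.

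Next I would transfer the estimate of $q(y)$ into an estimate of the Born rule probability $p(y)$ on a large-weight set of outcomes. By a Markov-style argument applied to the total-variation bound, the set of $y$ for which $|q(y) - p(y)|$ is a negligible fraction of $p(y)$ has $p$-mass at least $1 - 1/\text{poly}$. Thus, conditioned on $y$ being drawn either from $p$ or (essentially equivalently) from the uniform measure over ``typical'' outcomes, the $\text{BPP}^{\text{NP}}$ procedure returns a multiplicative-precision estimate of $p(y)$. Crucially, the precision parameters ($\epsilon$, the simulator runtime, the Stockmeyer precision, and the fraction of ``good'' $y$) must all be tuned to a single coherent inverse-polynomial scale so that the resulting estimate of $p(y)$ is still within the multiplicative window forbidden by Conjecture~1.

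Then I would invoke Conjecture~\ref{conj}: multiplicative-precision estimation of $p(y)$ for a substantial (inverse-polynomial) fraction of random instances drawn from the natural distribution over circuits in \cmany{} and outcomes $y$ is $\#P$-hard under $\text{BPP}^{\text{NP}}$-reductions. The preceding step places this task inside $\text{BPP}^{\text{NP}}$, which yields $\text{P}^{\#P} \subseteq \text{BPP}^{\text{NP}}$. Combined with Toda's theorem $\text{PH} \subseteq \text{P}^{\#P}$, this forces $\text{PH} \subseteq \text{BPP}^{\text{NP}} \subseteq \Sigma_3^p$, i.e.\ the polynomial hierarchy collapses to its third level.

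The main obstacle I anticipate is lining up the ``typical $y$'' region produced by Stockmeyer and the total-variation argument with the ``random instance'' distribution in Conjecture~1: one has to verify that the fraction of $(C,y)$ pairs on which the above simulation argument fails is small enough that a uniform random draw from the hard distribution of Conjecture~1 still falls in the good region with the inverse-polynomial probability the conjecture asserts to be $\#P$-hard. A minor but important bookkeeping point is that the simulator's runtime depends on $1/\epsilon$, so $\epsilon$ must itself be chosen inverse-polynomial in the instance size without inflating the overall Stockmeyer precision budget beyond what the conjecture requires.
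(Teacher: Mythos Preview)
Your high-level template (Stockmeyer $\to$ Markov on the $L_1$ bound $\to$ Conjecture~\ref{conj} $\to$ Toda) is exactly the route the paper takes. However, the gap you yourself flag as ``the main obstacle'' is real and is precisely where the paper does the substantive work; your proposal does not supply the two ingredients that close it.

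First, the paper proves an \emph{anti-concentration} lemma: for any fixed outcome $x$ and uniformly random Clifford $U$, $\Pr_U[p_x \ge \alpha/2^n] > (1-\alpha)^2/2$. The proof uses that the Clifford group is a unitary $2$-design to compute $\mathbb{E}_U[p_x]$ and $\mathbb{E}_U[p_x^2]$ and then applies the Paley--Zygmund inequality. Your $p$-weighted Markov step (``the bad $y$ have small $p$-mass'') is correct, but Conjecture~\ref{conj} is stated over uniformly random $U$ with the outcome \emph{fixed} at $\ket{0}\!\bra{0}$. Large $p$-mass of good $y$ for a fixed circuit tells you nothing about the specific outcome $0$ unless you already know $p_0$ is not negligibly small for a large fraction of $U$ --- which is exactly what anti-concentration provides.

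Second, the paper uses the group structure to \emph{absorb the outcome randomness into the Clifford randomness}: writing $U_x=\bigotimes_i X^{x_i}$, the map $(U,x)\mapsto U_x U$ sends (uniform Clifford)$\times$(uniform bitstring) to uniform Clifford. The paper therefore runs the Markov step with $x$ \emph{uniform} (not $p$-weighted), obtaining an additive bound $|p_x-q_x|<\epsilon/(\beta 2^n)$ for a $(1-\beta)$ fraction of $x$; combined with anti-concentration this becomes multiplicative, and the absorption trick then converts the joint $\Pr_{U,x}$ statement into a $\Pr_U$ statement at the fixed outcome $0$, matching the format of Conjecture~\ref{conj}. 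Your $p$-weighted variant does not interface with this trick: the simulator output $q_y(U)$ carries no symmetry under $(U,y)\mapsto(U_yU,0)$, and weighting $y$ by $p_y(U)$ destroys the uniformity needed for the absorption.

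In short, the skeleton is right, but without the $2$-design/Paley--Zygmund anti-concentration and the hiding-of-$x$-into-$U$ step your argument cannot pass from ``most outcomes are good for each circuit'' to ``the outcome $0$ is good for $\geq 49\%$ of Cliffords'', which is what the conjecture actually demands.
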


{{}}{We note that our hardness result is implied by the hardness results presented in Refs.~\cite{gao2017quantum, bermejo2017architectures}, however; our proof is able to use a more plausible average case hardness conjecture than these references due to the fact that we are proving hardness of \esim ~rather than proving the hardness of the yet weaker notion of approximate weak simulation employed by these references. }

{{}}{In Appendix \ref{ps vs ac section} we also present Theorem \ref{ac vs ps thm}. This theorem shows that the properties of poly-sparsity and anti-concentration are mutually exclusive.}

The flow chart in Fig.~\ref{fig:FC} summarizes the main results in this paper by categorizing 
an{{}}{y} given family of quantum circuits in terms of its computation power based on whether or not the circuit family admits certain properties related to simulability.

  \begin{figure}[h!]
  \centering
	\includegraphics[width=1.0\textwidth, trim=30mm 70mm 35mm 45mm, clip]{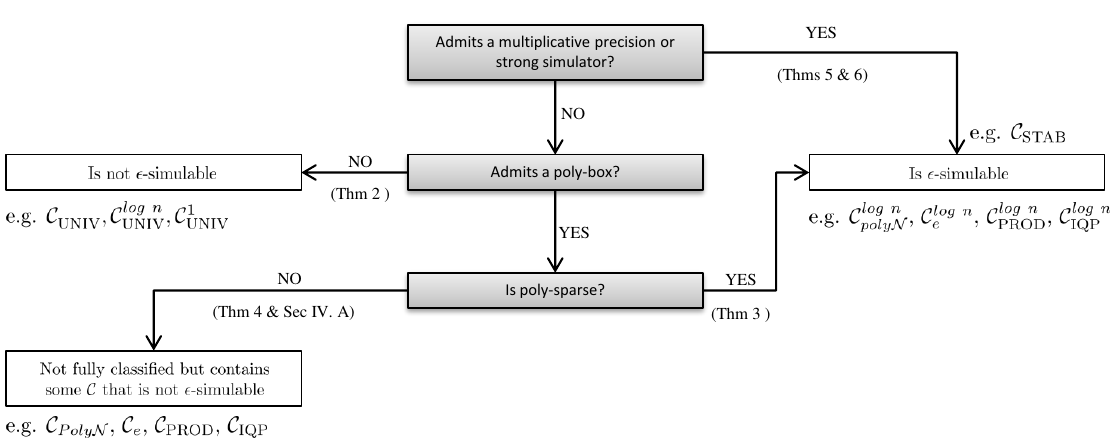}
  \caption{An overview of the main results. An arbitrary family of quantum circuits $\mc{C}$ is partially classified by its computational power relative to universal classical computers. The unclassified category (admits a poly-box and is not poly-sparse) is known to contain circuit families that are hard to $\epsilon$-simulate assuming some plausible complexity theoretic conjectures. We give examples of circuits families in these categories. Here, $\mc{C}^{*}_{\rm UNIV}, \mc{C}^{*}_{\rm STAB}, \mc{C}^{*}_{poly\mc{N}}$ and $\mc{C}^{*}_{\rm IQP}$ refer to the following families of circuits: universal circuits, stabilizer circuits, circuits with polynomially bounded negativity and IQP circuits respectively. The circuit families \cencode~and \cmany~are discussed in some detail in Sec.~\ref{polybox not suff} and \ref{defn cprod} respectively. The presence of a superscript represents an upper bound on the number of qubits to be measured.}
	\label{fig:FC}
  \end{figure}

\section{Defining simulation of a quantum computer}\label{simulation}

While there has been a breadth of recent results in the theory of simulation of quantum systems, this breadth has  been accompanied with a plethora of different notions of simulation.  This variety brings with it challenges for comparing results.  Consider the following results, which are all based on (often slightly) different notions of simulation. As a first example, the ability to perform \emph{strong simulation} of certain classes of quantum circuits would imply a collapse of the polynomial hierarchy, while under a weaker (but arguably more useful) notion of simulation this collapse is only implied if additional mathematical conjectures hold true~\cite{aaronson2011computational, bremner2016average}. As another example, Ref.~\cite{morimae2017power} shows that the quantum complexity class BQP is contained in the second level of the polynomial hierarchy if there exist efficient classical probabilistic algorithms for sampling a particular outcome (from the quantum circuits considered) {{}}{with a probability} that is exponentially close to the true quantum probability in terms of additive error (or polynomially close in terms of multiplicative error). As additional examples, Refs.~\cite{pashayan2015estimating, bennink2017unbiased} present efficient classical algorithms for additive polynomial precision estimates of Born rule probabilities. While many such technical results are crucially sensitive to these distinctions in the meaning of simulation, there is a growing need to connect the choice of simulation definition used in a proof against (or for) efficient classical simulability to a statement about proofs of quantum advantage (or ability to practically classically solve a quantumly solvable problem). In particular, to the non-expert it can be unclear what the complexity of classical simulation (in each of the above mentioned notions of simulation) of a given quantum device says about the hardness of building a classical device that can efficiently solve the computational problems that are solvable by the quantum device.

{{}}{In this section, we will discuss a meaningful notion of approximate weak simulation, which we call \esim.  This notion of simulation is a natural mathematical relaxation of exact weak simulation and has been used in prior works, e.g., in Refs.~\cite{aaronson2011computational, fefferman2015power, bravyi2016improved}. Further, this notion of simulation is closely related to the class of problems in complexity theory known as sampling problems \cite{aaronson2014equivalence}.  Here, we define \esim~and prove that up to polynomial equivalence, an \esimer~of a quantum computer is effectively a perfect substitute for any task that can be performed by the quantum computer itself. In particular, we will show that \esimer s satisfy \emph{efficient indistinguishability} meaning that they can remain statistically indistinguishable from (according to a computationally unbounded referee) and have a polynomially equivalent run-time to the quantum computer that they simulate. We argue that efficient indistinguishability is a natural choice of a rigorously defined global condition which minimally captures the concept of computational power. The accuracy requirements of \esim~are rigorously defined at the local level of each circuit and correspond to solving a sampling problem (as defined in \cite{aaronson2014equivalence}) based on the outcome distribution of the circuit. Thus our result shows that the ability to solve all sampling problems solvable by a quantum computer $\mathcal{C}$ is a necessary and sufficient condition to being efficiently indistinguishable from $\mathcal{C}$ or ``computationally as powerful as $\mathcal{C}$''.}

\subsection{Strong and weak simulation}\label{defn strong and weak sim} 

We note that every quantum circuit has an associated probability distribution that describes the statistics of the measurement outcomes. We will refer to this as the \emph{circuit's quantum probability distribution}. As an example, Fig.~\ref{fig:QC} below depicts a quantum circuit. The output of running this circuit is a classical random variable $X=(X_1, \ldots, X_k)$ that is distributed according to the quantum probability distribution.

  \begin{figure}[h!]
  \centering
  \includegraphics[width=.4\textwidth, trim=15mm 208mm 132mm 45mm, clip]{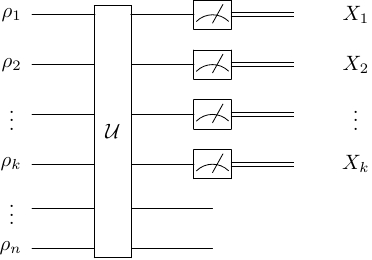}
  \caption{\label{fig:QC}An example of a quantum circuit.  This circuit acts on $n$ qubits (or, in general, qudits).  The initial state is a product state.  The unitary operation $\mathcal{U}$ must be constructed out of a sequence of local unitary gates.  The first $k$ qubits in this example are each measured in a fixed basis, yielding outcome $(X_1,X_2,\ldots,X_k)$.  Qubits $i>k$, shown without a measurement, are traced over (marginalized).}
  \end{figure}
  
Two commonly used notions of simulation are \emph{strong simulation} and \emph{weak simulation}. A weak simulator of a quantum circuit generates samples from the circuit's quantum probability distribution. In the strict sense of the term, a weak simulator generates samples from the \emph{exact} quantum probability distribution.  Loosely, having a weak simulator for a quantum system is an equivalent resource to using the quantum system itself.

The term \emph{weak simulation} has also been used in reference to classical algorithms which sample from distributions which \emph{approximate} the target probability distribution. There exist at least four distinct notions of approximate weak simulation appearing in the quantum computation literature. As background, we give a brief description of these here although the focus of this paper will be on only one of these and will be discussed in some detail later in this sections.
\begin{enumerate}
	\item The first notion of approximate weak simulation requires that the classical algorithm sample from a distribution that is exponentially close (in $L_1$-norm) to the target distribution.  This notion was used in Ref.~\cite{jozsa2003role, nest2008classical}.
	\item Another notion of apprximate weak simulation requires that the sampled distribution be sufficiently close to the target distribution so as to ensure that for every outcome $x$, the sampled distribution satisfies ${\abs{P_{sampled}(x)-P_{target}(x)}\leq \epsilon P_{target}(x)}$ for some fixed $\epsilon>0$. See Ref.~\cite{Terhal2004} and also \cite{bremner2010classical, morimae2017power} for related variants.
	\item The third notion of approximate weak simulation requires that the classical algorithm sample from a distribution that is inverse polynomially close (in $L_1$-norm) to the target distribution. This notion of simulation has been used in prior works, e.g., in Refs.~\cite{aaronson2011computational, aaronson2014equivalence, fefferman2015power, bravyi2016improved} both in the context of hardness of classical simulation and existence of classical simulators. We call this \esim.
	
	\item The final prominent example of approximate weak simulation, requires that the classical algorithm sample from a distribution that is within some small fixed constant $L_1$-norm of the target distribution. This definition has predominantly featured in hardness proofs \cite{bremner2016average, gao2017quantum, bermejo2017architectures, morimae2017hardness, bouland2017quantum}.
	It has also feature in proofs of efficient classical simulability of noisy boson sampling circuits \cite{oszmaniec2018classical} and noisy IQP circuits \cite{bremner2017achieving}. 
\end{enumerate}

A strong simulator, in contrast, outputs probabilities or marginal probabilities associated with the quantum distributions. More specifically, a strong simulator of a circuit is a device that outputs the quantum probability of observing any particular outcome or the quantum probability of an outcome marginalized\footnote{{{}}{A distribution $P(x)$ over  bit-strings $x\in \set{0,1}^n$ is said to have a marginal distribution $P_{\set{i_1,\ldots,i_m}}(\tilde{x})=\sum_{x_{i_1}} \ldots \sum_{x_{i_m}} P(x)$ (marginalized over the bits $\set{i_1,\ldots,i_m}$ where $\tilde{x}\in \bitstring{n-m}$ is given by modifying the vector $x$ by removing the entries $\set{i_1, \ldots, i_m}$.}} over one or more of the measurements.   Note that a strong simulator requires an input specifying the event for which the probability of occurrence is required. Taking Fig.~\ref{fig:QC} as an example, a strong simulator could be asked to return the probability of observing the event $(X_1,X_2)=(1,0)$, marginalized over the measurements $3$ to $k$.  The requirement that a strong simulator can also output estimates of marginals is weaker than requiring them to estimate the quantum probability associated with any event (subset of the outcome space).

While the names `strong' and `weak' simulation suggest that they are in some sense different magnitudes of the same type of thing, we note that these two types of simulation produce different types of output. In particular, a strong simulator outputs probabilities.  (More specifically, it outputs exponential additive precision estimates of Born rule probabilities and their marginals.)  In contrast a weak simulator outputs samples (from the exact target probability distribution).

{{}}{Ref.~\cite{nest2008classical} provides a compelling argument advocating for the use of weak simulation in place of strong simulation by showing that there exist  classically efficiently weak simulable probability distributions that are \#P-hard to strong simulate, thus showing that aiming to classically strong simulate is an unnecessarily challenging goal. In a similar vein, here we will advocate for the notion of \esim~ over other notions of simulation including the alternative notions of approximate weak simulation.}

\subsection{\esim}

A weak simulator, which generates samples from the exact quantum probability distribution, is a very strict notion.  Often, it would be sufficient to consider a simulator that generates samples from a distribution that is only sufficiently close to the quantum distribution, for some suitable measure of closeness.  Such a relaxation of the requirement of weak simulation has been used by several authors, e.g., in Refs.~ {{}}{\cite{jozsa2003role, nest2008classical,aaronson2011computational, aaronson2014equivalence, fefferman2015power, bravyi2016improved,bremner2016average, gao2017quantum, morimae2017hardness, bouland2017quantum, oszmaniec2018classical,bremner2017achieving}
}.  Here, we define the notion of \esim , which is a particular relaxation of the notion of weak simulation, and motivate its use.

We first define a notion of sampling from a distribution that is only \emph{close} to a given distribution.  Consider a discrete probability distribution $\mc{P}$.  Let $B(\mc{P},\epsilon)$ denote the $\epsilon$ ball around the target $\mc{P}$ according to the $L_1$ distance (or equivalently, up to an irrelevant constant, the total variation distance).  We define $\epsilon$-sampling of a probability distribution $\mc{P}$ as follows:

\defn{Let $\mc{P}$ be a discrete probability distribution. We say that a classical device or algorithm can \emph{$\epsilon$-sample} $\mc{P}$ iff for any $\epsilon >0$, it can sample from a probability distribution $\mc{P}^{\epsilon} \in B(\mc{P},\epsilon)$.  In addition, its run-time should scale at most polynomially in $1/\epsilon$.}

We note that the use of the $L_1$-norm in the above is motivated by the fact that the  {{}}{$L_1$-distance upper bounds on the one-shot success probability of distinguishing between two distributions}. More details can be found in the proof of Theorem \ref{indistinguishability thm} in Appendix \ref{indistinguishability proof}.

The definition above does not require the device to sample from precisely the quantum \pdist~$\mc{P}$, but rather allows it to sample from any \pdist~ $\mc{P}^{\epsilon}$ which is in the $\epsilon$ ball around the target \pdist, $\mc{P}$.  We note that the device or algorithm will in general take time (or other resources) that depends on the desired precision $\epsilon$ in order to output a sample, hence the efficiency requirement ensures that these resources scale at most polynomially in the precision $1/\epsilon$.

\defn{We say that a classical device or algorithm can $\epsilon$-simulate a quantum circuit if it can $\epsilon$-sample from the circuit's associated output probability distribution $\mc{P}$.}

We note that each of the above mentioned notions of simulation refers to the simulation of a single quantum circuit. More generally, we may be interested in (strong, weak, or $\epsilon$) simulators of uniform families of quantum circuits. In this setting we can discuss the efficiency of a simulator with respect to $n$, the number of qubits\footnote{As a technical condition, we require the circuit size, run-time (or any other resources) as well as the length of the circuit's description to be upper-bounded by $poly(n)$.}. As an example, consider a family of circuits described by a mapping from $\mc{A}^*$ (finite strings over some finite alphabet $\mc{A}$) to some set of quantum circuits
$\mc{C}=\set{c_{a}~|~a\in \mc{A}^*}$ where for each $a \in \mc{A}^*$, $c_{a}$ is a quantum circuit with some efficient description\footnote{Such a description must satisfy the uniformity condition. This can be done by fixing a finite gate set, input state and measurement basis and explicitly defining an efficiently computable mapping between $\mc{A}^*$ and the gate sequence.} given by the index $a$. In the case of strong (weak) simulation, we say that a device can efficiently strong (weak) simulate the family of quantum circuits $\mc{C}$ if the resources required by the device to strong (weak) simulate $c_{a} \in \mc{C}$ are upper-bounded by a polynomial in $n$.  In the case of \esim, we require that the simulator be able to sample a distribution within $\epsilon$ distance of the quantum distribution efficiently in both $n$ and $1/{\epsilon}$.

\defn{We say that a classical device or algorithm can $\epsilon$-simulate a uniform family of quantum circuit $\mc{C}$ if for all $\epsilon>0$ and for any $c\in \mc{C}$ (with number of qubits $n$ and quantum distribution $\mc{P}$) it can sample from a probability distribution $\mc{P}^{\epsilon}\in B(\mc{P},\epsilon)$ in run-time $O(poly(n,\frac{1}{\epsilon}))$.}

\subsection{{{}}{\esim~and efficient indistinguishability}}\label{sec:motivating}

As noted earlier, this definition ensures that \esim~is a weaker form of simulation than exact weak simulation.  However, we point out that the notion of exact sampling may be weakened in a number of ways, with the \esim~approach being well suited to many applications related to quantum simulators. As an example, if the definition of simulation allowed for a fixed but small amount of deviation in $L_1$ distance (as opposed to one that can be made arbitrarily small) then computational power of a simulator will immediately be detectably compromised. The above notion of \esim~requires a polynomial scaling between the precision ($1/{\epsilon}$) of the approximate sampling and the time taken to produce a sample. Below (Theorem \ref{indistinguishability thm}), we will use a statistical indistinguishability argument to show that  a polynomial scaling is precisely what should be demanded from a simulator. In particular, we will show that a run-time which scales sub-polynomially in $1/\epsilon$ puts unnecessarily strong demands on a simulator while a super-polynomial run-time would allow the simulator's output to be statistically distinguishable from the output of the device it simulates.

We now introduce the hypothesis testing scenario we consider.

\begin{test}
Suppose Alice possesses a quantum computer capable of running a (possibly non-universal) family of quantum circuits $\mathcal{C}$, and Bob has some simulation scheme for $\mathcal{C}$ (whether it's an \esimer ~is to be decided).  Further, suppose that a referee with unbounded computational power {{}}{and with full knowledge of the specifications of $\mathcal{C}$,} will request data from either Alice or Bob and run a test that aims to decide between the hypotheses:
{

$H_a$: \emph{The requested data came from Alice's quantum computer} or 

$H_b$: \emph{The requested data came from Bob's simulator}. 
}

The setup will be as follows: At the start of the test, one of Alice or Bob will be randomly appointed as ``the candidate''.  Without knowing their identity, the referee will then enter into a finite length interactive protocol with the candidate (see Fig \ref{interactive protocol}).
Each round of the protocol will involve the referee sending a circuit description to the candidate requesting the candidate to run the circuit and return the outcome. The choice of requests by the referee may depend on all prior requests and data returned by the candidate. The rules by which the referee:
\begin{enumerate}
	\item chooses the circuit requested in each round,
	\item chooses to stop making further circuit requests and
	\item decides on $H_a$ versus $H_b$ given the collected data
\end{enumerate}
define \emph{the hypothesis test}.
 The goal of the referee is as follows. For any given $\delta>0$ decide $H_a$ versus $H_b$ such that $P_{correct}\geq \frac{1}{2}+\delta$ where $P_{correct}$ is the probability of deciding correctly. Bob's goal is to come up with a ($\delta$-dependent) strategy for responding to the referee's requests such that it jointly achieves:
\begin{itemize}
	\item \emph{indistinguishablity:} for any $\delta>0$ and for any test that the referee applies, $P_{correct}< \frac{1}{2}+\delta$ and
	\item \emph{efficiency:} for every choice of circuit request sequence $\alpha$, Bob must be able to execute his strategy using resources which are $O(poly(N(\alpha), \frac{1}{\delta}))$ where $N(\alpha)$ is the resource cost incurred by Alice for the same circuit request sequence.
\end{itemize}
  
\end{test}

We note that the referee can always achieve a success probability $P_{correct}=\frac{1}{2}$ simply by randomly guessing $H_a$ or $H_b$. {{}}{Importantly, the referee has complete control over the number of rounds in the test and additionally does not have any upper bound imposed on the number of rounds. Hence, $P_{correct}$ is the ultimate one shot probability of the referee correctly deciding between $H_a$ or $H_b$ and in no sense can this probability be amplified through more rounds of information requests. As such, we will say that the referee achieves distinguishability between Alice and Bob if $\forall \delta>0$, there exists a test that the referee can apply ensuring that $P_{correct}\geq 1-\delta$ (independent of Bob's strategy). Alternatively, we will say that Bob achieves indistinguishability (from Alice) if $\forall \delta>0$, there exists a response strategy for Bob such that  $P_{correct}\leq \frac{1}{2}+\delta$ (independent of what test the referee can apply).} 
We will show that if Bob has an \esimer ~then there exists a strategy for Bob such that he jointly achieves indistinguishablity (i.e. the referee cannot improve on a random guess by any fixed probability $\delta>0$) and efficiency. {{}}{In this case, Bob can at the outset choose any $\delta>0$ and ensure that $P_{correct}< \frac{1}{2}+\delta$ for all strategies the referee can employ.}

The efficiency requirement imposed on Bob's strategy is with respect to the resource cost incurred by Alice. Here we will define what this means and justify the rationale behind this requirement. Let us first note that for any circuit $c_a \in \mc{C}$, there are resource costs $R(c_a)$ incurred by Alice in order to run this circuit. This may be defined by any quantity as long as this quantity is upper and lower-bounded by some 
polynomial in the number of qubits. For example, $R(c_a)$ may be defined by run-time, number of qubits, number of elementary gates, number of qubits plus gates plus measurement, length of circuit description etc. Since this quantity is polynomially equivalent to the number of qubits, without loss of generality, we can treat $n_a$ (the number of qubits used in circuit $c_a$) as the measure of Alice's resource cost $R(c_a)$. We now note that for a given test, the referee may request outcome data from some string of circuits $c_1, \ldots, c_m \in \mc{C}$. Thus we define the resource cost for Alice to meet this request by $N:=n_1+\ldots +n_m$. 

Bob's resource cost (run-time) with respect to each circuit $c_a\in \mc{C}$ is polynomially dependent on both $n_a$ and the inverse of his choice of accuracy parameter $\epsilon$. Thus, Bob's strategy is defined by the rules by which he chooses $\epsilon_j$, the accuracy parameter for his response in the $j^{th}$ round\footnote{Bob must posses{{}}{s} some computational power in order to execute these rules. We will only require that Bob have some small amount of memory (to keep count of the rounds in the protocol) and compute  simple arithmetic functions of this counter.}. Thus, for a given sequence of circuit requests $a_1, \ldots, a_m \in \mc{A}^*$, Bob will incur a resource cost $T=t_1+\ldots +t_m$ where $t_j\sim poly(n_{a_j}, 1/\epsilon_j)$ is Bob's resource in the $j^{th}$ round. Thus the efficiency condition requires that there exists some polynomial $f(x, y)$ such that for all $\delta>0$ and for all possible request sequences $\alpha=(a_1, \ldots, a_m)$, $T(\alpha)\leq f(N(\alpha), \frac{1}{\delta})$. 
The efficiency requirement imposed on Bob's strategy thus naturally requires that the resource costs of Alice and Bob be polynomial equivalent for the family of tests that the referee can apply.

\begin{thm}\label{indistinguishability thm}
{Bob has an \esimer ~of Alice's quantum computer if and only if given the hypothesis testing scenario considered above, there exists a strategy for Bob which jointly achieves indistinguishablity and efficiency.}
\end{thm}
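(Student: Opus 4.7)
The statement is a biconditional, so I would address both directions. For the $\Rightarrow$ direction, suppose Bob possesses an \esimer~for $\mc{C}$. Given $\delta > 0$, my plan is for Bob to fix in advance a per-round accuracy schedule $\epsilon_j$ that (i) is summable with $\sum_j \epsilon_j$ a small constant multiple of $\delta$, and (ii) satisfies $1/\epsilon_j = \mathrm{poly}(j, 1/\delta)$. A clean choice is $\epsilon_j = 6\delta/(\pi^2 j^2)$, giving $\sum_j \epsilon_j = \delta$ and $1/\epsilon_j = O(j^2/\delta)$. In round $j$, Bob invokes his \esimer~on the referee's circuit request with accuracy $\epsilon_j$ and returns the resulting sample.

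For indistinguishability, I would bound the $L_1$ distance between the full-transcript distributions under $H_a$ and $H_b$. Since Bob's per-round response distribution is $\epsilon_j$-close in $L_1$ to Alice's, and the choice of $\epsilon_j$ depends only on the round index, a standard hybrid/coupling argument over the adaptive transcript bounds the transcript-level $L_1$ distance by $\sum_j \epsilon_j = \delta$. The Bayes-optimal distinguisher with uniform prior over $H_a, H_b$ then attains $P_{correct} = \frac{1}{2} + \frac{1}{4}\|P^A - P^B\|_1 \leq \frac{1}{2} + \frac{\delta}{4} < \frac{1}{2} + \delta$. Efficiency uses the observation $m \leq N$ (since $n_j \geq 1$), so $j \leq N$ throughout and $1/\epsilon_j = O(N^2/\delta)$; thus $t_j = \mathrm{poly}(n_j, 1/\epsilon_j) = \mathrm{poly}(N, 1/\delta)$, and summing over the at most $N$ rounds preserves a polynomial bound in $N$ and $1/\delta$.

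For the $\Leftarrow$ direction, I would extract an \esimer~from Bob's strategy by specializing to a single-round test. Given a circuit $c \in \mc{C}$ with $n$ qubits and target accuracy $\epsilon > 0$, set $\delta = \epsilon/4$ and consider the referee whose strategy is: request $c$ in round 1, receive the sample $x$, and apply the Bayes-optimal single-sample test distinguishing Alice's true distribution $P_c$ from Bob's round-1 response distribution $Q_c$. An unbounded-computation referee can implement this test. The indistinguishability hypothesis forces $P_{correct} < \frac{1}{2} + \delta$, while the optimal single-sample test achieves $P_{correct} = \frac{1}{2} + \frac{1}{4}\|P_c - Q_c\|_1$, yielding $\|P_c - Q_c\|_1 < 4\delta = \epsilon$. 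The efficiency clause gives $t_1 = \mathrm{poly}(n, 1/\delta) = \mathrm{poly}(n, 1/\epsilon)$, so Bob's first-round response procedure is exactly an $\epsilon$-simulator of $c$.

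The main obstacle is in the forward direction: Bob must commit to an accuracy schedule without knowing the total number of rounds $m$ the referee will use. The key trick is to let $\epsilon_j$ decay polynomially in $j$ so that the series $\sum_j \epsilon_j$ converges (for indistinguishability) while $1/\epsilon_j$ stays polynomial in $j$ (for efficiency), and then to exploit $m \leq N$ to bound all round indices by the Alice-cost $N$. A secondary technicality is justifying the $L_1$ hybrid bound in the adaptive-request setting, which is routine once the transcript is viewed as the full interactive record and per-round $L_1$ differences are taken in expectation over the history.
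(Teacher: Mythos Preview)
Your proposal is correct and follows essentially the same approach as the paper: the same $1/j^2$ accuracy schedule (you use $\epsilon_j = 6\delta/(\pi^2 j^2)$ where the paper uses $24\delta/(\pi^2 j^2)$), the same hybrid/telescoping bound on the transcript $L_1$ distance, the same use of the Bayes-optimal formula $P_{correct}=\tfrac12+\tfrac14\|P^A-P^B\|_1$, the same $m\le N$ observation for efficiency, and the same single-round specialization for the converse. Your constant in fact gives the strict inequality $P_{correct}<\tfrac12+\delta$ more cleanly than the paper's, and your efficiency argument is a slightly streamlined version of the paper's explicit computation.
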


The proof for this theorem can be found in Appendix \ref{indistinguishability proof}. 
{{}}{The proof uses the fact that the $L_1$ distance between Alice and Bob's output distributions over the entire interactive protocol can be used to upper bound the  probability of correctly deciding between $H_a$ and $H_b$. Further, we show that the total $L_1$ distance between Alice and Bob's output distributions over the entire interactive protocol grows at most additively in the $L_1$ distance of each round of the protocol. We also note that an \esimer~allows Bob to ensure that the $L_1$ distance of each round decays like an inverse quadratic ensuring that the sum of the $L_1$ distances converges to the desired upper bound. The convergence of the inverse quadratic series, which is an inverse polynomial, thus motivates the significance of \esimer s i.e. simulators with run-time $O(poly(n,1/\epsilon))$.}

{{}}{We note that the ``if'' component of the theorem says that meeting the definition of \esimer ~is necessary for achieving efficient indistinguishability, thus the notion of simulation cannot be weakened any further without compromising efficient indistinguishability.}

{{}}{Throughout this paper we view a quantum computer as a uniform family of quantum circuits $\mc{C}=\set{c_{a}~|~a\in \mc{A}^*}$. We note that by committing to the circuit model of quantum computation, our language including important definitions such as \esim~are not necessarily well suited to other models of computation unless these are first translated to the circuit model.  For example, in a quantum computational model that makes use of intermediate measurements, such as the measurement based quantum computing (MBQC) model, consider a procedure where a part of the state is measured then conditioned on the outcome, a second measurement is conducted. This procedure (consisting of 2 rounds of measurement) can be described as a single circuit in the circuit model, but cannot be broken up into two rounds involving two separate circuits.  This limitation becomes apparent when we consider the hypothesis testing scenario.  If the referee is performing a multi-round query, expecting the candidate to possess an MBQC-based quantum computer, then even Alice with a quantum computer may be unable to pass the test unless her computer operates in an architecture that can maintain quantum coherence between rounds.  In the setting we consider, such a query by the referee in not allowed.}

\subsection{{{}}{\esim~and computational power}}\label{sec:esim and comp power}

{{}}{In addition to the technical contribution of Theorem \ref{indistinguishability thm}, we wish to make an argument for the conceptual connection between computational power and efficient indistinguishability. Intuitively, we wish to say that an agent $A$ is at least as computationally powerful as agent $B$ if $A$ can ``do'' every task that $B$ can do using an equivalent amount of resources. In our setting, we can restrict ourselves to polynomially equivalent resources and the most general task of sampling from a target probability distribution given an efficient description of it. However, defining what constitutes an acceptable solution to the sampling task is not only of central importance but also difficult to conceptually motivate. Given a description of a probability distribution, can anything short of sampling \emph{exactly} from the specified distribution constitute success? An answer in the negative seems unsatisfactory because very small deviations}\footnote{For example consider the scenario that whenever an agent is asked to sample from some distribution $\mathcal{P}$, they output samples from exactly $\mathcal{P}$ every time, possibly with one exception. In particular, a memory bit stores if the exception has ever taken place. If it has occurred, then forever more, when the agent is asked to sample from $\mathcal{P}$, an exact sample is produced. If the exception has not yet taken place then with some very small probability, the agent will output the most likely outcome instead of an exact sample from $\mathcal{P}$.}
 {{}}{from exact sampling are ruled out. However, an answer in the positive presents the subtlety of specifying the exact requirement for achieving the task. 
It is easy to offer mathematically reasonable requirements for what constitutes success at the local level of each task but significantly more difficult to conceptually justify these as precisely the right notion. In our view, this difficulty arises because a well formed conceptually motivated requirement at the local level of each task must be inherited from a global requirement imposed at the level of the agent across their performance on any possible task.}

{{}}{We advocate for efficient indistinguishability as the right choice of global requirement for defining computational power and implicitly defining what constitutes a solution to a sampling task. If an agent is efficiently indistinguishable from another then, for any choice of $\delta>0$ chosen at the outset, the referee cannot assign any computational task to the candidate to observe a consequence that will improve (over randomly guessing) their ability to correctly decide between $H_a$ and $H_b$ by a probability $\delta$.
Thus, there is no observable consequence}\footnote{{{}}{Since the observer is the computationally unbounded referee, then any event is an observable consequence i.e. if we let $S$ be the set of all possible responses across all rounds from both Alice and Bob, then any element of the power set of $S$ is an observable consequence.}}{{}}{ to substituting an agent with another efficiently indistinguishable agent. For these reasons, we argue that in the setting where the agents are being used as computational resources, an agent's ability to (efficiently and indistinguishably) substitute another naturally defines containment of computational power.
In light of this, the ``only if'' component of Theorem \ref{indistinguishability thm} says that, the computational power of Bob (given an \esimer ~of $\mc{C}$) contains that of Alice (given $\mc{C}$) and the ``if'' component says that an \esimer~is the \emph{minimal} simulator that achieves this since any simulator to achieve efficient indistinguishibility is an \esimer.}

{{}}{The referee can be seen as a mathematical tool for bounding the adversarial ability of any natural process to distinguish an agent from an efficiently indistinguishable substitute. As such one may argue for further generalization of the concept of efficient indistinguishability from one which is defined with respect to (w.r.t.) a computationally unbounded referee to a notion dependent on the computational power of the referee. If we take the view that the computational power of all agents within this universe is bounded by universal quantum computation, then a particularly interest generalization is efficiently indistinguishability w.r.t. a referee limited to universal quantum computation. We return to this generalization in the discussion, elsewhere focusing on efficient indistinguishability w.r.t. a computationally unbounded referee.}

\section{Probability Estimation}\label{estim}

As described in the previous section, an exact (or approximate in the sense of Ref.~\cite{jozsa2003role}) weak simulator produces outcomes sampled from the exact {{}}{(or exponentially close to the exact)} Born rule probability distribution associated with a quantum circuit. {{}}{The notion of \esim~is a weaker notion of simulation, a fact we aim to exploit by constructing algorithms for \esim~ that would not satisfy the above-mentioned stronger notions of simulation.}
  In this paper, we describe an approach to \esim~of quantum circuits based on two components:  first, estimating Born rule probabilities for specific outcomes of a quantum circuit to a specified precision, and then using such estimates to construct a simulator.  In this section, we describe this first component, coined a \emph{poly-box}. {{}}{We motivate and define poly-boxes, discuss their conceptual importance and give a number of important examples.} In the next section, we employ such an estimator to construct an \esimer~under certain conditions.

\subsection{{{}}{Born rule probabilities and estimators}}

Consider the description $c=\set{\rho, U, \mathcal{M}}$ of some ideal quantum circuit, with $\rho$ an initial state, $U=U_L U_{L-1} \cdots  U_1$ a sequence of unitary gates, and $\mathcal{M}$ a set of measurement operators (e.g., projectors).  Associated with the measurement $\mathcal{M}=\{E_{x}~|~x\in \bitstring{k}\}$ is a set of possible measurement outcomes $x\in \bitstring{k}$. The Born rule gives us the exact quantum predictions associated with observing any particular outcome $x$: 
\begin{align}
	\mc{P}(x):=\tr(U\rho U^{\dagger} E_{x})\,. \label{Born prob outcomes}
\end{align}
Further, probabilities associated with events $S\subseteq \bitstring{k}$ are given by:
\begin{align}
	\mc{P}(S):=\sum_{x\in S} \tr(U\rho U^{\dagger} E_{x}) \label{Born prob events}
\end{align}

{{}}{The task of efficiently classically estimating these probabilities with respect to general quantum circuits is of great practical interest, but is known to be hard  even for rather inaccurate levels of estimation.} For example, given a circuit $c_{a}$ from a family of universal quantum circuits with a Pauli $Z$ measurement of the first qubit only,
deciding if $\mc{P}_{a}(0)>\frac{2}{3}$ or $<\frac{1}{3}$ is BQP-complete.

Monte Carlo methods are a common approach to estimating Born rule probabilities that are difficult to calculate directly \cite{pashayan2015estimating, howard2017application, bennink2017unbiased}.  Let $p$ be an unknown parameter we wish to estimate, e.g., a Born rule probability.  In a Monte Carlo approach, $p$ is estimated by observing a number of random variables $X_1, \ldots, X_s$ and computing some function of the outcomes $\hat{p}_s(X_1, \ldots, X_s)$, chosen so that $\hat{p}_s$ is close to $p$ in expectation. In this case, $\hat{p}_s$ is an estimator of $p$.

We first fix some terminology regarding the precision of as estimator, and how this precision scales with resources.  We say that an estimator $\hat{p}_s$ of $p$ is \emph{additive $(\epsilon, \delta)$-precision} if:
\begin{align}
	\pr\bigl(\abs{p-\hat{p}_s}\geq \epsilon\bigr)\leq \delta \,, \qquad \text{additive $(\epsilon, \delta)$-precision.}
\end{align}
We say that $\hat{p}_s$ is \emph{multiplicative $(\epsilon, \delta)$-precision} if:
\begin{align}
	\pr\bigl(\abs{p-\hat{p}_s}\geq \epsilon p\bigr)\leq \delta \,, \qquad \text{multiplicative $(\epsilon, \delta)$-precision.}
\end{align}
In the case where $p\leq 1$ is a probability, a multiplicative precision estimator is more accurate than an additive precision estimator. 

For any estimator based on the Monte Carlo type of approach described above, there is a polynomial (typically linear) resource cost associated with the number of samples $s$.  For example, the time taken to compute $\hat{p}_s$ will scale polynomially in $s$. More generally, $s$ may represent some resource invested in computing the estimator $\hat{p}_s$ such as the computation run-time. For this reason, we may wish to classify additive/multiplicative $(\epsilon, \delta)$-precision estimators by how $s$ scales with $1/{\epsilon}$ and $1/{\delta}$.  We say that $\hat{p}_s$ is an additive \emph{polynomial precision estimator} of $p$ if there exists a polynomial $f(x,y)$ such that for all $\epsilon,\delta>0$, $\hat{p}_s$ is an additive $(\epsilon, \delta)$-precision estimator for all $s\geq f(\epsilon^{-1}, \log\delta^{-1})$.  We say that $\hat{p}_s$ is a multiplicative \emph{polynomial precision estimator} of $p$ if there exists a polynomial $f(x,y)$ such that for all $\epsilon,\delta>0$, $\hat{p}_s$ is a multiplicative $(\epsilon, \delta)$-precision estimator for all $s\geq f(\epsilon^{-1}, \delta^{-1})$ \footnote{{{}}{The observant reader will notice that additive and multiplicative precision estimators have different scalings in $\delta$. Of course one can define an alternative notion of additive estimation where $s\geq f(\epsilon^{-1}, \delta^{-1})$ or an alternative notion of multiplicative estimation where $s\geq f(\epsilon^{-1}, \log~\delta^{-1})$. Here, we have chosen to define the notions that are most useful as motivated by the existence of techniques and associated inequalities bounding their performance. In particular, Hoeffding's inequality allows the construction of additive $(\epsilon, \delta)$-precision estimators while Chebyshev's inequality motivates the multiplicative $(\epsilon, \delta)$-precision estimator definition.}}.

A useful class of polynomial additive precision estimators is given by application of the Hoeffding inequality. Suppose $\hat{p}_1$ resides in some interval $[a,b]$ and is an unbiased estimator of $p$ {{}}{(i.e.} $\mathbb{E}(\hat{p}_1)=p${{}}{)}. Let $\hat{p}_s$ be defined as the average of $s$ independent observations of $\hat{p}_1$. Then, by the Hoeffding inequality, we have:
\begin{align}
	\pr\bigl(\abs{p-\hat{p}_s}\geq \epsilon\bigr)\leq 2 \exp \Bigl({\frac{-2s \epsilon^2}{(b-a)^2}}\Bigr)\,,
\end{align}
for all $\epsilon>0$. We note that for $s(\epsilon^{-1}, \log\delta^{-1})\geq \frac{(b-a)^2}{2 \epsilon^2}\log(2\delta^{-1})$, $\hat{p}_s$ is an additive $(\epsilon, \delta)$-precision estimator of $p$.  With this observation, we see that additive polynomial precision estimators can always be constructed from unbiased estimators residing in a bounded interval.

{{}}{As an important example let us consider one way an agent can generate Born probability estimates when given access to some classical processing power and a family of quantum circuits $\mathcal{C}$.  Given a description of an event $S$ and a description of a quantum circuit $c_{a}\in \mathcal{C}$, the agent can efficiently estimate $p=\mc{P}_{a}(S)$.  In this example, the agent can construct the estimator $\hat{p}_s$ by independently running the circuit $s$ times.  On each of the runs $i=1, \ldots, s$, she observes if the outcome $x$ is in the event $S$ (in this case, $X_i=1$) or not in $S$ (in this case, $X_i=0$). We then define $\hat{p}_s=\frac{1}{s}\sum_{i=1}^s X_i$. Using the Hoeffding inequality, it is easy to show that the Born rule probability estimator $\hat{p}_s$ is an additive polynomial precision estimator of $p$. Thus, for all $a\in \mc{A}^*, \epsilon, \delta>0$, there is a choice of $s\in \bbn$ such that this procedure can be used to compute an estimate $\hat{p}$ of $p:=\mc{P}_{a}(S)$ such that $\hat{p}$ satisfies the accuracy requirement:}
	\begin{align}
		\pr\bigl(\abs{p-\hat{p}}\geq \epsilon\bigr)\leq \delta\label{accuracy req PB}
	\end{align}
{{}}{and the run-time required to compute the estimate $\hat{p}$ is $O(poly(n,\epsilon^{-1},\log~\delta^{-1}))$.}

{{}}{Let us now discuss an important aspect that we have been ignoring: namely the restrictions that need to be placed on the events $S$. We first note that since each event $S$ is an element of the power set of $\bitstring{k}$, the total number of events grows doubly exponentially implying that any polynomial length description of events can only index a tiny fraction of the set of all events. Even once we make a particular choice as to how (and hence which) events are indexed by polynomial length descriptions, deciding if a bit-string $x$ is in the event $S$ is not computationally trivial (with the complexity depending on the indexing). Since the estimation procedure requires a computational step where the agent checks whether $x$ is in $S$, there will be restrictions place on the allowed events depending on the computational limitations of the agent and the complexity of the indexing of events.}

{{}}{
When discussing poly-boxes, we will be interested in the restricted set of events $S\in \set{0,1,\bullet}^k$. We use this notation to indicate the set of all specific outcomes and marginals. Specifically, $S\in  \set{0,1,\bullet}^{k}$ is a subset of $\bitstring{k}$ where $\bullet$ is a ``wild card'' single qubit measurement outcome and hence is consistent with both a $0$ and a $1$ element. For example, $S=(0,\bullet,1):=\set{(0,0,1), (0,1,1)}$. If $S$ is a vector with bits $x_1, \ldots, x_{k-m}$ in positions $i_1, \ldots, i_{k-m}$ and a $\bullet$ in each of the positions $j_1, \ldots, j_{m}$; then $S$ represents the set of $2^m$ outcomes where the qubits numbered $i_1, \ldots, i_{k-m}$ produced single qubit measurement outcomes corresponding to $x_1, \ldots, x_{k-m}$ while the remaining qubits (those numbered $j_1, \ldots, j_{m}$) produce either a $0$ or a $1$ measurement outcome. The probability corresponding to such an event $S$ is the marginal probability associated with observing the outcome bit-string $x_1, \ldots, x_{k-m}$ on the qubits numbered $i_1, \ldots, i_{k-m}$ marginalized over the qubits $j_1, \ldots, j_m$.}

\subsection{{{}}{The poly-box:  generating an additive polynomial precision estimate}}

Given a family of quantum circuits $\mc{C}$, we will be interested in constructing an \esimer ~of $\mc{C}$ using estimates of Born rule probabilities associated with circuits in $\mc{C}$. For this purpose  {{}}{we} define a poly-box over $\mc{C}$.

\defn{\emph{(poly-box).}  A poly-box over a family of quantum circuits $\mc{C}=\set{c_a~|~a\in \mc{A}^*}$ with associated family of probability distributions $\mathbb{P}=\set{\mc{P}_a~|~a\in \mc{A}^*}$ is a classical algorithm that, for all $a\in \mc{A}^*, \epsilon, \delta>0$ and $S\in \set{0,1,\bullet}^{k_n}$, can be used to compute an estimate $\hat{p}$ of $\mc{P}_{a}(S)$ such that $\hat{p}$ satisfies the accuracy requirement:
	\begin{align}
		\pr\bigl(\abs{p-\hat{p}}\geq \epsilon\bigr)\leq \delta\label{accuracy req PB}
	\end{align}
and, the run-time required to compute the estimate $\hat{p}$ is $O(poly(n,\epsilon^{-1}, \log~\delta^{-1}))$.
}

Eq.~\eqref{accuracy req PB}, gives an upper bound on the probability that the computed estimate, $\hat{p}$, is far from the target quantity. This probability is over the potential randomness in the process used to generate the estimate $\hat{p}$. In addition we implicitly assume that the output of this process is independent of prior output. In particular, let $\alpha=(a, \epsilon, \delta, S)$ be an input into a poly-box and $\hat{p}_{\alpha}$ the observed output. Then, we implicitly assume that the probability distribution of $\hat{p}_{\alpha}$ only depends on the choice of input $\alpha$ and in particular is independent of prior output.

Note that a poly-box over a family of quantum circuits $\mc{C}=\set{c_a~|~a\in \mc{A}^*}$ with associated family of probability distributions $\mathbb{P}=\set{\mc{P}_a~|~a\in \mc{A}^*}$ is a classical algorithm that can be used to compute additive polynomial precision estimators $\hat{p}_s$ of $\mc{P}_{a}(S)$ for all $a\in \mc{A}^*, s\in \bbn, S\in \set{0,1,\bullet}^{k_n}$ efficiently in $s$ and $n$.

\subsection{{{}}{Conceptual significance of a poly-box}}

Whether or not a family of quantum circuits $\mc{C}$ admits a poly-box has bearing on both the complexity of sampling problems and decision problems solvable by $\mc{C}$, and so we will find that the notion of a poly-box is a useful concept.  We {{}}{first note} that the existence of a poly-box is  a necessary condition for \esim.

\begin{thm}\label{necessity of polybox}
If $\mc{C}$ is a family of quantum circuits that does not admit a  poly-box algorithm, then $\mc{C}$ is not  \esimable. 
\end{thm}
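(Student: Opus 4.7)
The plan is to prove the contrapositive: if $\mc{C}$ is $\epsilon$-simulable, then $\mc{C}$ admits a poly-box. The core idea is that an $\epsilon$-simulator directly yields a Monte Carlo style poly-box via empirical frequencies, once we carefully balance the simulator's intrinsic $L_1$ error against the statistical sampling error.

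Concretely, suppose we are given an $\epsilon$-simulator for $\mc{C}$ and fix inputs $a \in \mc{A}^*$, $\epsilon, \delta > 0$, and $S \in \set{0,1,\bullet}^{k_n}$. First, I would run the simulator with precision parameter $\epsilon/2$; by definition this produces, in time $O(\mathrm{poly}(n, 1/\epsilon))$ per sample, a single draw from some distribution $\mc{P}^{\epsilon/2}_a$ satisfying $\dist{\mc{P}^{\epsilon/2}_a}{\mc{P}_a} \leq \epsilon/2$. Next, I would draw $s$ independent samples $X_1, \ldots, X_s$ from the simulator and define the indicator variables $Y_i := \mathds{1}[X_i \in S]$, where membership in the pattern $S$ is checked in time $O(k_n)$ by comparing the non-wildcard positions. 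The estimator is $\hat{p}_s := \frac{1}{s}\sum_{i=1}^s Y_i$, which has expectation $\mc{P}^{\epsilon/2}_a(S)$.

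For the error analysis, the triangle inequality gives
\begin{align}
\abs{\hat{p}_s - \mc{P}_a(S)} \leq \abs{\hat{p}_s - \mc{P}^{\epsilon/2}_a(S)} + \abs{\mc{P}^{\epsilon/2}_a(S) - \mc{P}_a(S)}\,.
\end{align}
The second term is at most $\dist{\mc{P}^{\epsilon/2}_a}{\mc{P}_a} \leq \epsilon/2$ deterministically. For the first term, since each $Y_i \in [0,1]$, Hoeffding's inequality yields $\pr(\abs{\hat{p}_s - \mc{P}^{\epsilon/2}_a(S)} \geq \epsilon/2) \leq 2 e^{-s\epsilon^2/2}$. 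Choosing $s = \lceil (2/\epsilon^2) \log(2/\delta) \rceil$ makes this probability at most $\delta$, so with probability at least $1-\delta$ the estimator is within $\epsilon$ of $\mc{P}_a(S)$, meeting the poly-box accuracy requirement (\ref{accuracy req PB}).

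The total runtime is $s$ calls to the $\epsilon/2$-simulator plus $O(s \cdot k_n)$ bookkeeping, giving $O(\mathrm{poly}(n, 1/\epsilon, \log(1/\delta)))$ as required. There is no real obstacle here beyond correctly partitioning the budget of $\epsilon$ between the simulator's $L_1$ error and the Monte Carlo fluctuation — the proof is essentially an observation that $\epsilon$-sampling plus Hoeffding automatically implements the poly-box interface for \emph{any} event in $\set{0,1,\bullet}^{k_n}$, since pattern matching against $S$ is a trivial classical postprocessing step. Taking the contrapositive then gives the theorem.
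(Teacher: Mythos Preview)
Your proof is correct and takes exactly the same approach as the paper: both prove the contrapositive by using the \esimer~as a sample source and estimating $\mc{P}_a(S)$ by the empirical frequency of outcomes landing in $S$. The paper's proof is a one-sentence sketch of this ``obvious'' construction, whereas you have simply filled in the routine details (splitting the $\epsilon$-budget between the simulator's $L_1$ error and the Hoeffding fluctuation, and checking the runtime).
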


\begin{proof}
We note that given an \esimer ~of $\mc{C}$, a poly-box over $\mc{C}$ can be constructed in the obvious way simply by observing the frequency with which the \esimer ~outputs outcomes in $S$ and using this observed frequency as the estimator for $\mc{P}(S)$.
\end{proof}

{{}}{A poly-box over $\mc{C}$ is not only necessary for the existence of an \esimer~over $\mc{C}$, but as we will show in Theorem  \ref{simable}, combined with an additional requirement, it is also sufficient.} In addition, we note that if $\mc{C}$ admits a poly-box then all ``generalized decision problems'' solvable by $\mc{C}$ are solvable within BPP. As an illustrative but unlikely example, suppose there exists a classical poly-box over a universal quantum circuit family \cuniv. Then, for any instance $x$ of a decision problem $L$ in BQP, there is a quantum circuit $c_a \in $ \cuniv~that decides if $x \in L$ (correctly on at least $2/3$ of the runs), simply by outputting the decision ``$x \in L$'' when the first qubit measurement outcome is $1$ on a single run of $c_a$ and conversely, outputting the decision ``$x \not \in L$'' when the first qubit measurement outcome is $0$. We note that, in order to decide if $x\in L$ one does not need the full power of an \esimer~over \cuniv. In fact it is sufficient to only have access to the poly-box over \cuniv. Given a poly-box over \cuniv, one can request an $(\epsilon, \delta)$-precision estimate $\hat{p}$ for the probability $p$ that the sampled outcome from $c_a$ is in $S=(1,\bullet, \ldots, \bullet)$. For $\epsilon < 1/6$ and $\delta <1/3$, one may decide ``$x\in L$'' if $\hat{p}\geq 1/2$ and ``$x \not \in L$'' otherwise. This will result in the correct decision with probability $\geq 2/3$ as required. A poly-box over $\mc{C}$ offers the freedom to choose any $S\in \set{0,1,\bullet}^n$ which can in general be used to define a broader class of decision problems. Of course in the case of \cuniv, this freedom cannot be exploited because for every choice of $a$ and $S\neq (1,\bullet, \ldots, \bullet)$, there is a alternative easily computable choice of $a'$ such that the probability that a run of $c_{a'}\in $ \cuniv~results in an outcome in $(1,\bullet, \ldots, \bullet)$ is identical to the probability that a run of $c_{a}\in $ \cuniv~results in an outcome in $S$. {{}}{However, since we are considering the general case of not necessarily universal families of quantum circuits, it is feasible that a poly-box over $\mathcal{C}$ will be computationally more powerful than a poly-box over $\mathcal{C}$ restricted to only estimating probabilities of events of the form $S= (1,\bullet, \ldots, \bullet)$. On the other hand, we do not wish to make poly-boxes exceedingly powerful. If we view a poly-box over $\mc{C}$ as a black box containing an agent with access to $\mc{C}$ and processing an estimation algorithm as per the aforementioned example, then by restricting the allowable events as above and choosing such a simple method of indexing these, we are able to limit the additional computational power given to the agent and/or poly-box.}

\subsection{Examples of poly-boxes}

\subsubsection{Poly-boxes from quasiprobability representations}

There are a number of known algorithms for constructing poly-boxes over certain non-universal families of quantum circuits~\cite{aaronson2004improved, stahlke2014quantum, pashayan2015estimating, bravyi2016improved, bennink2017unbiased}. In particular, we focus on the algorithm presented in Ref.~\cite{pashayan2015estimating}, which can be used to construct a poly-box over any family of quantum circuits $\mc{C}$ where the \emph{negativity} of quantum circuits grows at most polynomially in the circuit size. We refer the interested reader to Ref.~\cite{pashayan2015estimating} for a definition of the negativity of a quantum circuit, but note that this quantity depends on the initial state, sequence of unitaries and the final POVM measurement that defines the circuit.  For general quantum circuits, the negativity can grow exponentially in both the number of qudits and the depth of the circuit.

A key application of this approach is to Clifford circuits. In odd dimensions, stabilizer states, Clifford gates, and measurements in the computational basis do not contribute to the negativity\footnote{{{}}{with respect to either the phase point operator or  stabilizer states choice of frame}} of a Monte Carlo based estimator.  {{}}{I}ncluding product state preparations or measurements that are not stabilizer states, or non-Clifford gates such as the $T$ gate, may contribute to the negativity of the circuit.  Nonetheless, these non-Clifford operations can be accommodated within the poly-box provided that the total negativity is bounded polynomially.  In addition, a poly-box exists for such circuits even in the case where the negativity of the initial state, \emph{or} of the measurement, is exponential~\cite{pashayan2015estimating}.

\subsubsection{A poly-box over \cmany}\label{defn cprod}
As a nontrivial example of a class of Clifford circuits for which there exists a poly-box, consider the family of circuits \cmany. This family consists of quantum circuits with an $n$-qubit input state $\rho$ that is an arbitrary product state\footnote{As an additional technical requirement, we impose that the input product state is generated from $\ket{0}^{\otimes n}$ by the application of polynomially many gates from a universal single qubit gate set with algebraic entries.} (with potentially exponential Wigner function negativity~\cite{pashayan2015estimating} in the input state). The allowed transformations are non-adaptive Clifford unitary gates, and $k\leq n$ qubits are measured at the end of the circuit, in the computational basis.  Such a circuit family has been considered by Jozsa and Van den Nest~\cite{jozsa2014classical}, where it was referred to as INPROD, OUTMANY, NON-ADAPT. 
This circuit family will be discussed again in Sec.~\ref{hardness} where we will show the classical hardness of simulating this family according to another notion of simulation. 
Aaronson and Gottesman~\cite{aaronson2004improved} provide the essential details of a poly-box for this family of circuits; for completeness, we present an explicit poly-box for \cmany ~in the following lemma.

\begin{lemma}{
	A classical poly-box exists for the Clifford circuit family \cmany.}
\end{lemma}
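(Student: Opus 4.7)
The plan is to build the poly-box via Monte Carlo sampling that exploits (i) the Heisenberg-picture propagation of the diagonal projectors through the Clifford unitary and (ii) the tensor-product structure of the input state. Fix a circuit $c_a \in \cmany$ with input $\rho = \rho_1 \otimes \cdots \otimes \rho_n$, Clifford unitary $U$, and outcome specification $S \in \set{0,1,\bullet}^{k}$. Let $F \subseteq \set{1,\ldots,k}$ denote the set of coordinates on which $S_l \in \set{0,1}$; the coordinates marked $\bullet$ together with the untouched qubits $k+1,\ldots,n$ contribute only identity factors. Passing to the Heisenberg picture,
\begin{align*}
	\mc{P}_a(S) = {\rm Tr}\Bigl(\rho \cdot \prod_{l \in F} \tfrac{I + (-1)^{S_l} P_l}{2}\Bigr), \qquad P_l := U^{\dagger} Z_l U,
\end{align*}
where each $P_l$ is a signed $n$-qubit Pauli operator.

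First I would use the Aaronson--Gottesman stabilizer tableau to compute each $P_l$ (its $X$/$Z$ support plus overall sign) in $poly(n)$ time, since $U$ is a sequence of $poly(n)$ Clifford gates and each gate updates the tableau in $O(n)$ time. Next, I would expand the commuting projector,
\begin{align*}
	\prod_{l \in F}\tfrac{I + (-1)^{S_l} P_l}{2} = \frac{1}{2^{|F|}}\sum_{T \subseteq F} (-1)^{\sum_{l \in T} S_l}\, P_T, \qquad P_T := \prod_{l \in T} P_l,
\end{align*}
so that the target probability becomes the expectation
\begin{align*}
	\mc{P}_a(S) = \mathbb{E}_{T}\bigl[(-1)^{\sum_{l \in T} S_l}\, {\rm Tr}(\rho\, P_T)\bigr],
\end{align*}
with $T$ uniform over the subsets of $F$. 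Because $\rho$ is a product state, ${\rm Tr}(\rho\, P_T) = \pm \prod_i {\rm Tr}(\rho_i\, P_{T,i}) \in [-1,1]$, and each single-qubit expectation ${\rm Tr}(\rho_i\, \sigma)$ for $\sigma \in \set{I,X,Y,Z}$ is computable to inverse-polynomial additive precision in $poly(n)$ time from the algebraic description of $\rho_i$. Drawing $s$ independent samples $T_1,\ldots,T_s$, setting $X_j := (-1)^{\sum_{l \in T_j} S_l}\,{\rm Tr}(\rho\, P_{T_j})$, and returning $\hat{p} = \frac{1}{s}\sum_j X_j$, Hoeffding's inequality (for variables bounded in $[-1,1]$) gives an additive $(\epsilon,\delta)$-precision estimate whenever $s = \Omega(\epsilon^{-2}\log\delta^{-1})$. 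Each sample costs $O(n|F|) = O(n^{2})$ to assemble $P_{T_j}$ via tableau composition and $O(n)$ to form the product of single-qubit traces, giving overall runtime $O(poly(n)\cdot \epsilon^{-2}\log\delta^{-1})$, which satisfies the poly-box accuracy and efficiency requirements.

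The step I expect to require the most care is the bookkeeping of signs in the Pauli products $P_T$ together with the numerical precision of the scalar expectations: one must choose the working precision of each ${\rm Tr}(\rho_i\, \sigma)$ so that the multiplicative-error propagation across $n$ factors in $[-1,1]$ leaves the bias of each $X_j$ at $o(\epsilon)$. Both aspects are routine---sign tracking is native to the Aaronson--Gottesman tableau formalism, and polynomial-precision arithmetic over algebraic numbers is standard---but they are where the proof has to be spelled out carefully. No other step relies on structure beyond product inputs, Clifford evolution, and computational-basis readout, so this argument uses exactly the defining features of \cmany.
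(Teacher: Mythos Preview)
Your proposal is correct and follows essentially the same approach as the paper: both expand the diagonal projector as a uniform average over signed Pauli strings, propagate the sampled Pauli through the Clifford in the Heisenberg picture via the Gottesman--Knill tableau, factorize the resulting expectation over the product input state, and conclude with Hoeffding on $[-1,1]$-bounded samples. The only cosmetic difference is the order of operations---the paper samples the Pauli on the measurement side first and then conjugates by $U^{\dagger}$, whereas you precompute each $U^{\dagger}Z_{l}U$ and then sample the subset $T$---but the estimator and its analysis are identical.
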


\begin{proof}
Give an arbitrary circuit $c=\{\rho, U, \mc{M}\}\in$ \cmany ~and an event $S\in \set{0,1,\bullet}^n$ we construct an estimator $\hat{p}_s$ of the probability $\mc{P}(S)$ as follows:
\begin{enumerate}
	\item Let $\Pi=\otimes_{i=1}^n \Pi_i$ be the projector corresponding to $S$. Here, we set:
			\begin{equation}
						 \Pi_i = 
							\begin{cases}
									\frac{I+Z}{2} & \text{if the $i^{th}$ entry of $S$ is $0$} \\
									\frac{I-Z}{2} & \text{if the $i^{th}$ entry of $S$ is $1$} \\
									I & \text{if the $i^{th}$ entry of $S$ is $\bullet$}
							\end{cases}
					\end{equation}
	\item For each $i$ where the $i^{th}$ entry of $S$ is not $\bullet$, $\Pi_i=\frac{I\pm Z}{2}$. In these cases, define a local Pauli operator $P_i$ by sampling either $I$ or $\pm Z$ with equal probability. For each $i$ where the $i^{th}$ entry of $S$ is a $\bullet$, we deterministically set $P_i=I$.
	\item We construct the $n$-qubit Pauli operator ${P}:=\otimes_{i=1}^n {P}_i$, (including its sign $\pm$).
	\item Using the Gottesman-Knill theorem~\cite{aaronson2004improved}, we compute the Pauli operator $P'=\otimes_{i=1}^n P'_i:=U^{\dagger}{P}U$.
	\item We compute the single sample estimate $\hat{p}_1$ using the equation:
			\begin{equation}
				\hat{p}_1:=\tr(\rho P')
				=\prod_{i=1}^n \tr(\rho_i P'_i)\,.
			\end{equation}
	\item We compute the estimator $\hat{p}_s$ by computing $s$ independent single sample estimates and taking their average.
\end{enumerate}

	It is straightforward to show that the expectation value of $\hat{p}_s$ is the target quantum probability $p:=\mc{P}(S)$. Further, the single sample estimates are bounded in the interval $[-1,1]$. Hence, by the Hoeffding inequality,
	\begin{align}
		\pr(|\hat{p}_s-p|\geq \epsilon)\leq 2 e^{\frac{-s\epsilon^2}{2}}\text{.}
	\end{align}
	This algorithm can be executed efficiently in $s$ and in $n$ and produces additive polynomial precision estimates of $\mc{P}(S)$ for any circuit $c \in$ \cmany ~and any $S\in \set{0,1,\bullet}^n$ and is thus a poly-box.
\end{proof}

\subsubsection{A poly-box over \ciqp}

As an additional example, we note that \ciqp, the family of Instantaneous Quantum Polynomial-time (IQP) quantum circuits~\cite{shepherd2009temporally, shepherd2010quantum, bremner2010classical, bremner2016average} that consist  of computational basis preparation and measurements with all gates diagonal in the $X$ basis admits a poly-box.  One can construct such a poly-box over \ciqp~by noting that Proposition 5 from Ref.~\cite{shepherd2010binary} gives a closed form expression for all Born rule probabilities and marginals of these circuits. This expression:
\begin{align}
	\mc{P}_{P}(S)= \mathbb{E}_{r \in {\rm span} \set{\vec{e}_i~|~i\in \set{i_1, \ldots, i_k}}} \left[(-1)^{r \cdot s} \alpha\left(P_{r}, \frac{\pi}{4}\right) \right]
\end{align}
is an expectation value over $2^k$ vectors in $\mathbb{Z}_2^n$ where:
\begin{itemize}
	\item $\set{i_1, \ldots, i_k}$ are the set of indices where the entries of $S$ are in $\set{0,1}$;
	\item $s\in \mathbb{Z}_2^n$ is defined by $s_i=S_i$ when $i\in \set{i_1, \ldots, i_k}$ and $s_i=0$ otherwise;
	\item $P_r$ is the \emph{affinification} of the $m \times n$ binary matrix $P$ which defines a Hamiltonian of the IQP circuit constructed from Pauli $X$ operators according to $H_P:=\sum_{i=1}^m \otimes_{j=1}^n X^{P_{ij}}$;
	\item $\alpha (P, \theta)$ is the normalized version of the weight enumerator polynomial (evaluated at $e^{-2 i \theta}$) of the code generated by the columns of $P$.
\end{itemize}

We note that this is an expectation over exponentially many terms which have their real part bounded in the interval $[-1,1]$. Further, for each $r$, the quantity $\alpha\left(P_{r}, \frac{\pi}{4}\right)$ can be evaluated efficiently using Vertigan's algorithm \cite{vertigan1998bicycle} and Ref.~\cite{shepherd2010binary}. As such, one can construct an additive polynomial precision estimator for all Born rule probabilities and marginals simply by evaluating the expression:
\begin{align}
	\hat{p}_1= {\rm Re} \left[(-1)^{r \cdot s} \alpha\left(P_{r}, \frac{\pi}{4}\right) \right]
\end{align}
 for polynomially many independent uniformly randomly chosen $r \in {\rm span} \set{\vec{e}_i~|~i\in \set{i_1, \ldots, i_k}}$ and computing the average over all choices. This can be shown to produce a poly-box by application of the Hoeffding inequality.

\section{From estimation to simulation}\label{samp}

{{}}{Given the significance of \esim~as the notion that minimally preserves computational power, here we turn our attention to the construction of an efficient algorithms for lifting a poly-box to an \esimer s. We give strong evidence that in the general case, such a construction is not possible. This suggests that a poly-box is statistically distinguishable from an \esimer~and hence computationally less powerful. However, by restricting to a special family of quantum circuits, we show an explicit algorithm for lifting a poly-box to an \esimer. Combined with Theorem \ref{necessity of polybox} this shows that within this restricted family a poly-box is computationally equivalent to an \esimer.}

{{}}{The significance of \esim~also motivates the need to understand the relationship to other simulators defined in terms of Born probability estimation. At the end of this section and in Appendices \ref{strong sim proof} and \ref{mult sim proof} we present two algorithms which lift an estimator of probabilities and marginals to a sampler.}

\subsection{A poly-box is not sufficient for \esim}\label{polybox not suff}

{{}}{This section focuses on the relation between poly-boxes and \esim.} With a poly-box, one can efficiently estimate Born rule probabilities of outcomes of a quantum circuit with additive precision.   {{}}{However, assuming BQP$\neq$ BPP, a poly-box alone is not a sufficient computational resource for \esim.}  We illustrate this using a simple but somewhat contrived example,  {{}}{wherein an encoding into a large number of qubits is used to obscure (from the poly-box) the computational power of sampling.} 

Define a family of quantum circuits $\mc{C}_e$ using a universal quantum computer as an oracle as follows:
\begin{enumerate}
	\item take as input a quantum circuit description $a \in \mc{A}^*$ (this is a description of some quantum circuit with $n$ qubits);
	\item call the oracle to output a sample outcome from this quantum circuit.  Label the first bit of the outcome by $X$;
	\item sample an $n$-bit string $Y \in \set{0,1}^n$ uniformly at random;
	\item output $Z=(X \oplus \mathrm{Par}(Y), Y) \in \bitstring{n+1}$, where $\mathrm{Par}(Y)$ is the parity function on the input bit-string $Y$.
\end{enumerate}

We note that $\mc{C}_e$ cannot admit an \esimer~unless BQP$\subseteq$BPP, since simple classical post processing reduces the \esimer~over $\mc{C}_e$ to an \esimer~over universal quantum circuits restricted to a single qubit measurement. 

We now show that $\mc{C}_e$ admits a poly-box:
\begin{enumerate}
	\item take as input $a\in \mc{A}^*$, $\epsilon, \delta>0$ and $S\in \set{0,1,\bullet}^{n+1}$.   Our poly-box will output probability estimates that are deterministically within $\epsilon$ of the target probabilities and hence we can set $\delta=0$;
	\item if $S$ specifies a marginal probability i.e. $k<n+1$, then the poly-box outputs the estimate $2^{-k}$ (where $k$ is the number of non-marginalized bits in $S$); otherwise,
	\begin{enumerate}
	\item small $\epsilon$ case: if $\epsilon<1/2^n$, explicitly compute the quantum probability $p:=\pr(X=1)$;
	\item large $\epsilon$ case: if $\epsilon\geq 1/2^n$, output the probability $2^{-(n+1)}$ as a guess.
	\end{enumerate}
\end{enumerate}

This algorithm is not only a poly-box over $\mc{C}_e$ but it in fact outputs probability estimates that have exponentially small precision. 

\begin{lemma}
For all $a\in \mc{A}^*$, $\epsilon>0$ and $S\in \set{0, 1, \bullet}^n$, the above poly-box can output estimates within $\epsilon$ additive error of the target probability using $O(poly(n, 1/\epsilon))$ resources. Further, the absolute difference between estimate and target probabilities will be $\leq \min\set{2^{-(n+1)}, \epsilon}$. 
\end{lemma}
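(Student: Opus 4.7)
The plan is to reduce the claim to a direct verification in each of the three branches of the algorithm, after first writing down the exact form of the $\mc{C}_e$ output distribution. First I would compute, for any full outcome $z = (z_0, y) \in \bitstring{n+1}$, the Born probability $Pr(Z = z) = 2^{-n} \cdot Pr(X = z_0 \oplus \mathrm{Par}(y))$, where $p := Pr(X=1)$ is the first-bit output probability of the underlying universal circuit. This immediately shows that every full-outcome probability lies in $\set{p/2^n, (1-p)/2^n}$, and so differs from the guess $2^{-(n+1)}$ by at most $|p - 1/2|/2^n \leq 2^{-(n+1)}$; that handles the large-$\epsilon$ branch.

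Next I would verify that every marginal probability equals $2^{-k}$ exactly, where $k$ is the number of non-$\bullet$ entries of $S$. I would split on whether position zero is a wildcard or is fixed. In the former case, marginalizing over $z_0$ collapses the Born probability to $2^{-n}$ per consistent $y$, giving $2^{-k}$ directly. In the latter case, since $k < n+1$ forces at least one bit of $Y$ to remain free, the parity of $y$ is unbiased on the consistent subcube, so the sum $\sum_y 2^{-n} Pr(X = z_0 \oplus \mathrm{Par}(y))$ splits equally between the $Pr(X=0)$ and $Pr(X=1)$ contributions and again collapses to $2^{-k}$, independent of $p$. So the marginal branch returns the exact probability.

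For the small-$\epsilon$ branch ($\epsilon < 2^{-n}$), the algorithm computes $p$ exactly by dense state-vector simulation of the $n$-qubit circuit, which costs $\mathrm{poly}(2^n, n) = \mathrm{poly}(1/\epsilon, n)$, and then returns the exact Born probability. Combining the three cases: the error is $0$ in the marginal and small-$\epsilon$ branches, and is bounded by $2^{-(n+1)}$ in the large-$\epsilon$ branch, where moreover $\epsilon \geq 2^{-n} > 2^{-(n+1)}$. Hence the error is at most $\min\set{2^{-(n+1)}, \epsilon}$ in every case, and the total run time is clearly $O(\mathrm{poly}(n, 1/\epsilon))$ since the marginal and large-$\epsilon$ branches are essentially $O(n)$.

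The only step requiring genuine care is the marginal analysis, where one must exploit the parity-averaging argument to see that the true marginal has no $p$-dependence at all; everything else is routine bookkeeping across the three cases, together with the observation that the exponential cost of exact state-vector simulation is permitted precisely because the small-$\epsilon$ branch is activated only when $1/\epsilon$ is already exponentially large in $n$.
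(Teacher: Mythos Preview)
Your proof is correct and follows essentially the same approach as the paper: both compute $Pr(Z=z)=2^{-n}Pr(X=z_0\oplus\mathrm{Par}(y))$, bound the full-outcome error in the large-$\epsilon$ branch by $2^{-(n+1)}$, show marginals equal $2^{-k}$ exactly via a parity-averaging argument, and note that the exponential cost of exact simulation is absorbed into $\mathrm{poly}(1/\epsilon)$ in the small-$\epsilon$ branch. Your marginal analysis is in fact slightly more explicit than the paper's, which treats only the case of a single marginalized bit and then asserts that ``this argument can be seen to imply'' the general case; your split on whether position zero is a wildcard makes that implication concrete.
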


\begin{proof}
We note that the resource cost of this algorithm is $O(poly(n,1/\epsilon))$. Since in the case of small $\epsilon$ it is $O(poly(2^n)) \subseteq O(poly(1/\epsilon))$ and in the case of large $\epsilon$ it is $O(n)$.

We now consider the machine's precision by considering the case with no marginalization and the case with marginalization separately. We restrict the below discussion to the large $\epsilon$ case as the estimates are exact in the alternate case.

Let $z=(z_0, \ldots, z_n)\in \bitstring{n+1}$ be fixed and define $z':=(z_1, \ldots, z_n)$. Then, 
\begin{equation}
	\pr(Z=z)=\pr(Z_0=z_0~|~Y=z')Pr(Y=z')=\pr(X=z_0\oplus \mathrm{Par}(z'))2^{-n}\,.
\end{equation}

So for $S=z$ (i.e. no marginalization), we have an error given by $\underset{r\in \set{p,1-p}}{\max} \abs{2^{-(n+1)}-\frac{r}{2^n}}\leq 2^{-(n+1)}$.

For the case where $S_i=\bullet$ (i.e. there is marginalization over the $i^{th}$ bit only and $k=n$), we note that the quantum marginal probability $p(S)$ is given exactly by:
\begin{equation}
	p(S)=\sum_{z_i=0}^1 \pr(Z=z)
	=\sum_{z_i=0}^1 \pr(X=z_0\oplus \mathrm{Par}(z'))2^{-n}
	={p}2^{-n}+(1-p)2^{-n}
	=2^{-k}\,,
\end{equation}
 where $z_j :=S_j$ for $j\neq i$. This {{}}{implies }that for all $k<n+1$, the quantum probability is exactly $2^{-k}$. Thus, in the worst case (no marginalization and $\epsilon\geq 2^{-n}$), the error is $\leq 2^{-(n+1)}$.

\end{proof}

This example clearly demonstrates that the existence of a poly-box for a class of quantum circuits is not sufficient for \esim.  In the following, we highlight the role of the \emph{sparsity} of the output distribution in providing, together with a poly-box, a sufficient condition for \esim.

\subsection{Sparsity and sampling}

Despite the fact that in general the existence of a poly-box for some family $\mc{C}$ does not imply the existence of an \esimer ~for $\mc{C}$, for some quantum circuit families, a poly-box suffices. Here, we show that one can construct an \esimer~for a family of quantum circuits $\mc{C}$ provided that there exists a poly-box over $\mc{C}$ and that the family of probability distributions corresponding to $\mc{C}$ satisfy an additional constraint on the \emph{sparsity} of possible outcomes.
We begin by reviewing several results from Schwarz and Van den Nest~\cite{schwarz2013simulating} regarding sparse distributions.  In Ref.~\cite{schwarz2013simulating}, they define the following property of discrete probability distributions:

\defn{(\ets). A discrete probability distribution is \ts if at most $t$ outcomes have a non-zero probability of occurring. A discrete probability distribution is \ets~if it has a ${L}_1$ distance less than or equal to $\epsilon$ from some \ts distribution.}

The lemma below is a (slightly weakened) restatement of Theorem 11 from Ref.~\cite{schwarz2013simulating}.

\begin{lemma}\label{SchwarzVdNLem}
{(Theorem 11 of Ref.~\cite{schwarz2013simulating}). Let $\mc{P}$ be a distribution on $\bitstring{k}$ that satisfies the following conditions:
\begin{enumerate}
	\item $\mc{P}$ is promised to be \ets, where $\epsilon \leq 1/6$;
	\item For all $S\in \set{0,1,\bullet}^k$, there exists an $(s,k)$-efficient randomized classical algorithm for sampling from $\hat{p}_s$, an additive polynomial estimator of $\mc{P}(S)$.
\end{enumerate}
Then it is possible to classically sample from a probability distribution $\mc{P}' \in B(\mc{P},12\epsilon+\delta)$ efficiently in $k$, $t$, $\epsilon^{-1}$ and $\log\delta^{-1}$.
}
\end{lemma}

We note that for every discrete probability distribution $\mc{P}$, there is some unique minimal function $t(\epsilon)$ such that for all $\epsilon \geq 0$, $\mc{P}$ is $\epsilon$-approximately $t$-sparse. We note that if this function is upper-bounded by a polynomial in $\epsilon^{-1}$, then a randomized classical algorithm for sampling from estimators of $\mc{P}(S)$ can be extended to a randomized classical algorithm for sampling from some probability distribution $\mc{P}'\in B(\mc{P}, \epsilon)$ efficiently in $\epsilon^{-1}$. This fact motivates the following definition:

\defn{(poly-sparse) Let $\mc{P}$ be a discrete probability distribution. We say that $\mc{P}$ is poly-sparse if there exists a polynomial $P(x)$ such that for all $\epsilon > 0$, $\mc{P}$ is \ets~whenever $t\geq P(\frac{1}{\epsilon})$. 

Let $\mathbb{P}$ be a family of probability distributions with $\mc{P}_{a}\in \mathbb{P}$ a distribution over $\bitstring{k_{a}}$.  We say that $\mathbb{P}$ is poly-sparse if there exists a polynomial $P(x)$ such that for all $\epsilon>0$ and $a \in \mc{A}^{*}$, $\mc{P}_{a}$ is \ets~whenever $t\geq P(k_{a}/\epsilon)$.}

The notion of poly-sparse is related to the notion of smooth max entropy $H_{\max}^{\epsilon}$. In particular, $\mathbb{P}$ is poly-sparse iff there exists a polynomial $P(x)$ such that for every $\mc{P} \in \mathbb{P}$ with domain cardinality $2^n$, we have:
\begin{align}
	{{}}{2^{H_{\max}^{\epsilon}(\mc{P})}\leq P\left(\frac{n}{\epsilon}\right)}
\end{align}
where $H_{\max}^{\epsilon}(\mc{P}):= \underset{\mc{P}'}{\inf}~\log_2~|{\rm Supp}(\mc{P}')|$, $|{\rm Supp}(\mc{P}')|$ is the cardinality of the support of the distribution $\mc{P}'$ and the infimum is taken over all distributions $\mc{P}'$ subject to $\frac{1}{2}||\mc{P}'-\mc{P}||_1\leq \epsilon$. This notion was first defined in Ref.~\cite{Renner2004Smooth} where it corresponds to the $\epsilon$-smooth R\'{e}nyi entropy of order $\alpha=0$.

\subsection{Conditions for \esim}

With this notion of output distributions that are poly-sparse, we are in a position to state our main theorem of this section:

\begin{thm}\label{simable}
{Let $\mc{C}$ be a family of quantum circuits with a corresponding family of probability distributions $\mathbb{P}$. Suppose there exists a poly-box over $\mc{C}$, and that $\mathbb{P}$ is poly-sparse. Then, there exists an \esimer~of $\mc{C}$.}
\end{thm}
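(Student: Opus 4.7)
The plan is to derive this theorem as a direct consequence of Lemma \ref{SchwarzVdNLem} (Schwarz and Van den Nest), with a parameter-tuning argument that converts the poly-box and poly-sparsity hypotheses into the two hypotheses required by that lemma. The main intellectual work has already been done in \cite{schwarz2013simulating}; my job is essentially to check that the two resources a poly-box and poly-sparsity are precisely what the lemma asks for, and to propagate the polynomial dependencies correctly.

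More concretely, fix any target error $\epsilon>0$ and any $c_a \in \mc{C}$ with associated distribution $\mc{P}_a$ on $\bitstring{k_a}$, with $k_a\leq n$. First, I would set $\epsilon' := \min\set{\epsilon/24,\, 1/6}$ and $\delta := \epsilon/2$, chosen so that $12\epsilon'+\delta \leq \epsilon$ and $\epsilon' \leq 1/6$ (satisfying the hypothesis on $\epsilon$ in Lemma \ref{SchwarzVdNLem}). By the poly-sparsity of $\mathbb{P}$, there is a fixed polynomial $P$ such that $\mc{P}_a$ is \ets{} with $\epsilon=\epsilon'$ for $t := \lceil P(k_a/\epsilon') \rceil$; this supplies condition 1 of the lemma. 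For condition 2, note that a poly-box is by definition a classical algorithm that, on input $(a,\epsilon,\delta,S)$ with $S\in \set{0,1,\bullet}^{k_a}$, outputs an additive $(\epsilon,\delta)$-precision estimate of $\mc{P}_a(S)$ in time $\mathrm{poly}(n,\epsilon^{-1},\log\delta^{-1})$, which is exactly the type of additive polynomial precision estimator for all marginals that condition 2 demands.

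Applying Lemma \ref{SchwarzVdNLem} with these parameters then yields a randomized classical algorithm that samples from some $\mc{P}'_a \in B(\mc{P}_a,\, 12\epsilon' + \delta) \subseteq B(\mc{P}_a,\epsilon)$ in time polynomial in $k_a$, $t$, $\epsilon'^{-1}$ and $\log\delta^{-1}$. Substituting back the choices of $\epsilon'$, $\delta$, and $t = \lceil P(k_a/\epsilon') \rceil$, every one of these quantities is bounded by a polynomial in $n$ and $\epsilon^{-1}$, so the overall runtime is $O(\mathrm{poly}(n,1/\epsilon))$. This matches the definition of an \esimer{} of the uniform family $\mc{C}$, completing the argument.

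I do not anticipate a serious obstacle here, since the theorem is effectively the composition of two off-the-shelf pieces. The main things to be careful about are: (i) verifying that the poly-box, which admits arbitrary events $S\in \set{0,1,\bullet}^{k}$ including wildcards, supplies the marginal estimators needed by Lemma \ref{SchwarzVdNLem} (it does, since marginals are precisely those $S$ with wildcard entries); (ii) keeping track of how the internal accuracy parameter $\epsilon'$ of the poly-box interacts with the factor of $12$ and the additive $\delta$ slack in the lemma's guarantee, which is handled by the constant rescaling above; and (iii) noting that the poly-sparsity polynomial $P(k_a/\epsilon')$ is polynomial in both $n$ and $1/\epsilon$, so $t$ does not blow up. A corollary worth stating is that, because the proof only invokes the poly-box as a black box, it is constructive: any new poly-box (such as those for \cmany{} or $\mc{C}_{\rm IQP}$ described in Section~\ref{estim}) combined with a poly-sparsity certificate immediately produces an \esimer{}.
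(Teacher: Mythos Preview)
Your proposal is correct and follows essentially the same route as the paper's own proof: both invoke Lemma~\ref{SchwarzVdNLem} directly, using the poly-box to supply the marginal estimators and poly-sparsity to supply the approximate $t$-sparsity, then observe that all relevant parameters are polynomial in $n$ and $1/\epsilon$. Your version is slightly more explicit about the constant rescalings (setting $\epsilon'=\min\{\epsilon/24,1/6\}$ and $\delta=\epsilon/2$) than the paper, which absorbs the factor of $12$ and the $\delta$-dependence more tersely, but there is no substantive difference in approach.
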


\begin{proof}
Let $a \in \mc{A}^{*}$ and $\epsilon>0$ be arbitrary. Then there exist $t=t(a,\epsilon)$ such that $\mc{P}_{a}$ is \ets. Further, due to the existence of the efficient classical poly-box over $\mc{C}$, for all $S \in \set{0,1,\bullet}^{k_{a}}$, there exists an $(s, k_{a})$-efficient randomized classical algorithm for sampling from an additive polynomial estimator of $\mc{P}_{a}(S)$. Thus by Lemma \ref{SchwarzVdNLem}, it is possible to classically sample from a probability distribution $\mc{P}_{a}^{\epsilon}\in B(\mc{P}_{a}, \epsilon)$ efficiently in $\epsilon^{-1}, t$ and  $k_{a}$. We note that here we have removed the dependence on $\delta$ since we can make $\delta \leq \epsilon$ whilst remaining efficient in $\epsilon^{-1}, t$ and  $k_{a}$. Finally, since poly-sparsity guarantees the existence of a $t(a, \epsilon)$ that can be upper-bounded by a polynomial in $\frac{k_a}{\epsilon}$, we arrive at the desired result.
\end{proof}

As an example, consider families of quantum circuits $\mc{C}$ where each circuit of size $n$ can only produce outcomes from some set of size at most $poly(n)$. Then $\mc{C}$ is poly-sparse (even if the output distributions are uniform over the $poly(n)$ sized support). Hence, if $\mc{C}$ also admits a poly-box, then by Thm.~\ref{simable} one can with high probability repeatedly sample from this space of $poly(n)$ outcomes hidden within a exponentially large space of bit-strings.

We have shown that having a poly-box and a poly-sparsity guarantee for a family of quantum circuits gives us an \esimer. We emphasize that the proof of this Theorem is constructive, and allows for new simulation results for families of quantum circuits for which it was not previously known if they were efficiently simulable.  As an example, our results can be straightforwardly used to show that Clifford circuits with sparse outcome distributions and with small amounts of local unitary (non-Clifford) noise, as described in Ref.~\cite{bennink2017unbiased}, are \esimable.

{{}}{Theorem \ref{simable} requires a promise of poly-sparsity. Since this is a property of infinite families of probability distributions, one cannot hope to algorithmically verify (or even falsify) it through sampling from member distributions. Nevertheless, for distributions generated by some particular family of quantum circuits, a proof that this property holds may be possible.}

In summary, the results of Thms.~\ref{simable} and \ref{necessity of polybox} imply that in order to construct an \esimer ~of any particular family of quantum circuits, it is necessary to construct a poly-box and further, if the family is poly-sparse, this is also sufficient. In Sec.~\ref{polybox not suff}, we also showed that there exists a somewhat artificial family of quantum circuits $\mc{C}_e$ with respect to which a poly-box is insufficient for \esim. In the next section, we show that this phenomenon also occurs with much more natural families of quantum circuits.

\subsection{{{}}{On lifting stronger estimators to approximate samplers}}

{{}}{
In contrast to poly-boxes, certain stronger nations of simulation based on Born rule probability estimation can be lifted to \esimer s (or even stronger approximate weak simulators). In Appendices  \ref{strong sim proof} and \ref{mult sim proof} we present two such efficient classical algorithms.

The algorithm presented in Appendix \ref{mult sim proof} uses an estimator with multiplicative precision to construct an \esimer~(it can in fact construct an approximate weak simulator based on the stronger notion from Ref.~\cite{Terhal2004}). This algorithm exploits the fact that ratios of multiplicative precision estimators are multiplicative precision in order to sequentially, one qubit's measurement outcome at a time, sample from the marginal probability of the next qubit's measurement conditioned on the sampled measurement outcomes of the prior measurements. This algorithm and its variants have been presented in Refs.~\cite{valiant2002quantum, terhal2002classical, Terhal2004} and are well known within the simulation-of-quantum-circuits community.}

{{}}{The algorithm presented in Appendix \ref{strong sim proof} uses an estimator with exponentially small additive precision to construct an \esimer~(it can in fact construct an approximate weak simulator based on the stronger notion from Ref.~\cite{jozsa2003role}). This algorithm aims to map a bit-string $r$ (approximately representing a uniformly sampled point from the unit interval) to a bit-string representing the outcome of running the circuit. Such a mapping is defined for every ordering of the measurement outcomes. This algorithm makes intuitive use of marginal probability estimates to do a binary search for the measurement outcome corresponding to $r$. This technique avoids computing ratios of probability estimates making it useful in regimes where additive errors are small but larger than some of the probabilities in the target distribution. Hence, this algorithm has some advantages compared to that of Appendix \ref{mult sim proof}. In particular, it can be used to lift an additive $\varepsilon$ precision estimator to a sampler from within $L_1$ distance $O(2^{n}\varepsilon)$. This can be used to construct a \esimer~ in certain cases where the algorithm in Appendix \ref{mult sim proof} would fail. An example is when one has access to an estimator with additive precision $\varepsilon=2^{-n} \kappa$ where $\kappa>0$ can be made arbitrarily small in run-time $O(poly(n,1/\kappa))$.}

\section{Hardness results}\label{hardness}

In the previous section, we have shown that one can construct an \esimer~for a family of quantum circuits $\mathcal{C}$ given a poly-box for this family together with a promise of poly-sparsity of the corresponding probability distribution.  We also discussed a contrived construction of a family of quantum circuits that admits a poly-box but is not \esimable~{{}}{(unless BQP=BPP)}. In this section, we provide strong evidence (dependent only on standard complexity assumptions and a variant of the now somewhat commonly used \cite{aaronson2011computational, bremner2016average, gao2017quantum, bermejo2017architectures, fefferman2015power, morimae2017hardness, bouland2017quantum} ``average case hardness'' conjecture) that {{}}{a condition such as poly-sparsity}  is necessary even for natural families of quantum circuits. One such family has already been identified by noting that \ciqp~admits a poly-box and is likely hard to $\epsilon$-simulate~\cite{bremner2016average}. Here, we also show the likely hardness of $\epsilon$-simulating the non-poly-sparse Clifford circuit family~\cmany~(defined in Sec.~\ref{estim}). {{}}{These results mean that at least two (and possibly more) of the intermediate models of quantum computing have the property that the probability of individual outcomes and marginals can be estimated to $1/poly(n)$ additive error but due to non-sparsity, their $\epsilon$-simulability is implausible.} 

Our hardness result for classical \esim ~of \cmany ~closely follows the structure of several similar results, and in particular that of the IQP circuits result of Ref.~\cite{bremner2016average}. {{}}{We note that this hardness result is implied by the hardness results presented in Refs.~\cite{gao2017quantum, bermejo2017architectures}, however; our proof is able to use a more plausible average case hardness conjecture than these references due to the fact that we are proving hardness of \esim ~rather than proving the hardness of the yet weaker notion of approximate weak simulation employed by these references. }

 Despite the existence of a poly-box over \cmany, we show that there cannot exist a classical \esimer~of this family unless the {{}}{average case hardness conjecture fails or} the polynomial hierarchy collapses to the third level. We note that the hardness of \emph{exact} weak simulation of \cmany~was shown in Ref.~\cite{jozsa2014classical}. In contrast here we show the hardness of \esim~for this family. Our proof relies on a conjecture regarding the hardness of estimating Born rule probabilities to within a small multiplicative factor for a substantial fraction of randomly chosen circuits from \cmany. This average case hardness conjecture is a strengthening of the worst case hardness of multiplicative precision estimation of probabilities associated with circuits from \cmany.  

The hardness of $\epsilon$-simulating \cmany ~circuits is shown by first noting that the existence of a classical \esimer~implies, via the application of the Stockmeyer approximate counting algorithm \cite{stockmeyer1983complexity}, the existence of an algorithm (in the third level of the PH) for estimating the probabilities associated with the output distribution of the \esimer ~to within a multiplicative factor. These estimates can then be related to estimates of the exact quantum probabilities by noting two points:
 \begin{enumerate}
	 \item that the deviation between the \esimer's probability of outputting a particular outcome and that of the exact quantum probability will be exponentially small for the vast majority of outcomes.  We show this fact using Markov's inequality.
	\item that a significant portion of outcomes associated with randomly chosen circuit in \cmany ~must have outcome probabilities larger than a constant fraction of $2^{-n}$.  We show this property using our proof that these circuits anti-concentrate.
 \end{enumerate}
These observations are combined to show that if there exists an \esimer~of \cmany, then there exists a classical algorithm (in the third level of the PH) that can estimate Born rule outcome probabilities to within a multiplicative factor for almost 50\% of circuits sampled from \cmany. This is in contradiction with Conjecture \ref{conj} thus implying that an \esimer~does not exist.

\subsection{Conjecture regarding average case hardness}
\label{subsec:Conjecture}

We begin by stating our conjecture that multiplicative precision estimation of \cmany ~is $\#\text{P}$-hard in the average case.

	\begin{conj}\label{conj}
		There exist an input product state $\rho$ over $n$ qubits such that given a uniformly random Clifford unitary $U$ acting on $n$ qubits, estimating $p:=\tr\left(U\rho U^{\dagger} |0 \rangle \langle 0| \right)$ to within a multiplicative error of $1/poly(n)$ for $49\%$ or more of the sampled Clifford unitaries is $\#$P-hard.
	\end{conj}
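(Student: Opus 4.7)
The statement is a worst-case to average-case hardness conjecture, so my plan breaks into two steps following the template used for the analogous IQP and random-circuit conjectures cited in the paper.

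Step one is worst-case $\#$P-hardness of multiplicative $1/\text{poly}(n)$ estimation of $p = \tr(U \rho U^\dagger \density{0})$ for some specifically chosen $(\rho, U) \in $ \cmany. As the paper notes, this follows by an argument analogous to Theorem~5.1 of Ref.~\cite{bouland2017quantum}. Concretely, I would encode an arbitrary post-selected universal quantum computation into a circuit in \cmany\ by placing magic states (e.g.\ $T\ket{+}$) on a designated subset of the qubits of the product state input $\rho$ and using the Clifford $U$ to implement, via magic-state injection, a universal gate set. Multiplicative estimation of the first-qubit marginal then estimates postBQP probabilities to multiplicative precision, which by Aaronson's postBQP $=$ PP theorem together with Toda's theorem is $\#$P-hard.

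Step two is the average-case reduction, which is the heart of the conjecture. Fix a worst-case Clifford $U^*$. Using the poly-box computation from Lemma~1, $p(U) = \prod_i \tr(\rho_i P'_i)$ with $P' = U^{\dagger} Z U$ computed via the Gottesman-Knill tableau; so $p$ depends algebraically on the symplectic representation of $U$ over $\mathbb{F}_2$. I would attempt to construct a one-parameter family $\{U_t\}$ of Cliffords with $U_0 = U^*$ such that (i) the marginal distribution of $U_t$ for random $t$ is close to uniform on the Clifford group, and (ii) $p(U_t)$ is a low-degree polynomial in $t$. Given such a family, one queries an assumed $49\%$-correct average-case oracle on $\text{poly}(n)$ random parameter values and runs Berlekamp--Welch style decoding to recover $p(U^*)$, thereby converting average-case success into worst-case estimation and reaching the $\#$P-hardness established in step one.

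The main obstacle, and the reason this must be stated as a conjecture rather than a theorem, is step two: the Clifford group is a finite discrete set, not a continuously parameterized algebraic variety, so the Lipton-style polynomial interpolation that works for the permanent has no direct analogue. Any realization of the family $\{U_t\}$ would need to live on a low-degree polynomial curve in the symplectic representation, stay inside the Clifford group for sufficiently many parameter values, and induce a near-uniform measure on them --- all simultaneously. The most plausible attack I see would try to leverage the $\mathbb{F}_2$-linear action of the Clifford group on the Pauli group through its symplectic representation, interpolating in $\mathbb{F}_q$ for a suitable extension $q$ and then lifting back; but whether this yields a genuine reduction is precisely the open question that forces the authors (and the analogous works on IQP, BosonSampling, and random circuit sampling) to leave Conjecture~\ref{conj} as an assumption rather than a theorem.
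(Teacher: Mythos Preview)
Your proposal is correct in the essential point: the statement is a \emph{conjecture}, not a theorem, and the paper does not prove it either. The paper's treatment (Section~\ref{subsec:Conjecture}) is exactly parallel to yours: it observes that the \emph{worst-case} version (replacing ``$49\%$'' by ``$100\%$'') is provable by the argument of Theorem~5.1 in Ref.~\cite{bouland2017quantum}, and then leaves the average-case strengthening as an assumption, noting only that analogous conjectures appear in Refs.~\cite{aaronson2011computational, bremner2016average, gao2017quantum, fefferman2015power, morimae2017hardness, bouland2017quantum}.

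Two small remarks. First, your worst-case sketch differs slightly from the paper's: you invoke magic-state injection and $\text{postBQP}=\text{PP}$, whereas the paper cites three ingredients---$\#$P-hardness of multiplicative estimation for universal circuits \cite{goldberg2017complexity}, the inverse-exponential lower bound on nonzero output probabilities for algebraic gate sets \cite{kuperberg2015hard}, and universality of post-selected \cmany\ \cite{jozsa2014classical}. The second ingredient is needed to convert post-selection into multiplicative-precision estimation and is missing from your sketch, though the overall route is equivalent. Second, your discussion of why the average-case step resists proof---the discreteness of the Clifford group obstructing Lipton-style polynomial interpolation---goes beyond anything the paper says; the paper offers no candidate reduction strategy at all and simply adopts the conjecture as an assumption.
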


We note that this average case hardness conjecture has an analogous worst case hardness version\footnote{This is the same statement as per Conjecture \ref{conj} but with ``$49\%$ or more'' replaced by ``$100\%$''.}. The worst case
hardness can be proven {{}}{by applying the result of Refs.~\cite{Bravyi2005magic, jozsa2014classical, goldberg2017complexity, kuperberg2015hard, Fujii2017commuting} and } by an argument essentially identical to the proof of Theorem 5.1 in Ref.~\cite{bouland2017quantum}. We omit the proof here but note that this proof relies on three key facts:
\begin{enumerate}
	\item that estimating Born rule probabilities for universal (indeed even IQP) circuits{{}}{ that use a gate set with algebraic entries,} to within any multiplicative factor {{}}{in the open interval $(1,\sqrt{2})$}  is $\#$P-hard {{}}{\cite{Fujii2017commuting, goldberg2017complexity}} ;
	\item for gate sets with algebraic entries, all non-zero output probabilities are lower bounded by some inverse exponential~\cite{kuperberg2015hard};
	\item that \cmany ~circuits with post-selection (or adaptivity) are universal for quantum computation {{}}{\cite{Bravyi2005magic, jozsa2014classical}}.
\end{enumerate}

We emphasize that similar conjectures are commonly used in related hardness proofs, such as Refs.~\cite{aaronson2011computational, bremner2016average, gao2017quantum, fefferman2015power, morimae2017hardness, bouland2017quantum}.  

\subsection{Anti-concentration of outcomes for \cmany}

Next, we prove that Clifford circuits chosen uniformly at random from the family \cmany ~satisfy an anti-concentration property.  
\begin{lemma}
	{Let $d$ be a prime. For each $n\in \mathbb{N}$, let $c_n$ be an $n$-qudit Clifford circuit chosen by fixing an arbitrary $n$ qudit input 
	state $\rho$, applying a uniformly random Clifford unitary $U$ acting on $n$ qudits and doing a computational basis measurement on all qudits.  Then for all $\alpha\in (0,1)$ and for any fixed choice of $x \in \set{0,\ldots,d-1}^n$:
	\begin{align}
		\underset{U}{\pr} \left(p_x \geq \frac{\alpha}{d^n} \right)> \frac{(1-\alpha)^2}{2}\text{,}\label{anticoncentration}
	\end{align}
	where $p_x :=\tr\left(U\rho U^{\dagger} |x \rangle \langle x| \right)$ is the Born rule probability for the outcome $x$.
	}
	\end{lemma}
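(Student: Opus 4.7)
The plan is to use the Paley-Zygmund second-moment inequality applied to the non-negative random variable $p_x$, where the randomness is over the Clifford unitary $U$. This will require computing the first and second moments of $p_x$, for which the key resource is that the uniform distribution on the $n$-qubit Clifford group forms a unitary $2$-design (indeed a $3$-design), so Haar-averaging formulas for operators of the form $\mathbb{E}_U[(U\rho U^\dagger)^{\otimes 2}]$ apply equally to Clifford averaging.

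First I would compute $\mathbb{E}_U[p_x]$. Since $\mathbb{E}_U[U\rho U^\dagger] = I/d$ with $d = 2^n$ (the $1$-design property), one immediately gets $\mathbb{E}_U[p_x] = \mathrm{tr}(\rho)/d = 1/d$. Next, to bound $\mathbb{E}_U[p_x^2]$, I would write
\begin{align*}
\mathbb{E}_U[p_x^2] = \mathrm{tr}\!\left[\bigl(|x\rangle\langle x|\otimes|x\rangle\langle x|\bigr)\,\mathbb{E}_U\bigl[(U\rho U^\dagger)^{\otimes 2}\bigr]\right],
\end{align*}
and apply the Schur-Weyl expression for the $2$-design average,
\begin{align*}
\mathbb{E}_U\bigl[(U\rho U^\dagger)^{\otimes 2}\bigr] = \frac{1 - d^{-1}\mathrm{tr}(\rho^2)}{d^2-1}\,I + \frac{\mathrm{tr}(\rho^2) - d^{-1}}{d^2-1}\,\mathrm{SWAP}.
\end{align*}
Tracing against $|x\rangle\langle x|^{\otimes 2}$ yields $(1 + \mathrm{tr}(\rho^2))/(d(d+1))$, and since $\mathrm{tr}(\rho^2) \leq 1$, I get the clean upper bound $\mathbb{E}_U[p_x^2] \leq 2/(d(d+1))$.

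Finally, applying Paley-Zygmund to the non-negative random variable $p_x$ with parameter $\alpha \in (0,1)$:
\begin{align*}
\Pr\!\left(p_x \geq \alpha\,\mathbb{E}_U[p_x]\right) \geq (1-\alpha)^2\,\frac{\mathbb{E}_U[p_x]^2}{\mathbb{E}_U[p_x^2]} \geq (1-\alpha)^2\,\frac{1/d^2}{2/(d(d+1))} = (1-\alpha)^2\,\frac{d+1}{2d},
\end{align*}
which is strictly larger than $(1-\alpha)^2/2$ for every $d \geq 1$, giving the claimed bound with $\alpha/d = \alpha/2^n$.

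The main obstacle, if any, is simply invoking the correct design-level moment formula and noting that it applies to Clifford (not only Haar) averages; the rest is essentially a one-line application of Paley-Zygmund. A minor point to be careful about is that the bound $\mathrm{tr}(\rho^2) \leq 1$ holds whether $\rho$ is a pure product state or not, so the result is actually insensitive to any extra structure of the input product state beyond it being a valid density operator.
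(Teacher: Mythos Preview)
Your proposal is correct and is essentially the same argument as the paper's: both compute $\mathbb{E}_U[p_x]=1/d$ via the $1$-design property, bound $\mathbb{E}_U[p_x^2]\leq 2/(d(d+1))$ via the Clifford $2$-design property (the paper writes this using the symmetric-subspace projector $P_{\mathrm{Sym}}=\tfrac12(I+\mathrm{SWAP})$ rather than the $I/\mathrm{SWAP}$ Weingarten expansion, but the computation is identical), and then apply Paley--Zygmund. Your extra remark that the bound uses only $\mathrm{tr}(\rho^2)\leq 1$ and not the product structure of $\rho$ is also exactly what the paper's computation shows.
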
\label{anti-concentration}
	
	\begin{proof}
		We use the unitary 2-design property of the Clifford group.
	\begin{align}
		\mathbb{E}(p_x)
		&=
		\tr\left( \mathbb{E}(U \rho U^\dagger) |x\rangle \langle x|\right) =
		\tr\left( \mathbbm{1}/{d^n} |x\rangle \langle x|\right) 
		=\frac{1}{d^n}\label{first moment} \\
		\mathbb{E}(p_x^2)&=
		\tr\left( \mathbb{E}\big(U\otimes U (\rho \otimes \rho) U^\dagger \otimes U^\dagger\big) |x\rangle |x\rangle\langle x| \langle x|\right) \notag\\
		&=
		\frac{\tr\big( P_{\mathrm{Sym}} (\rho \otimes \rho)\big)}{\tr P_{\mathrm{Sym}}}\,
		\tr\left(  P_{\mathrm{Sym}} |x\rangle |x\rangle\langle x| \langle x|\right) \notag\\
		&=
		\frac{2 \tr\big( P_{\mathrm{Sym}}(\rho \otimes \rho)\big)}{d^n(d^n+1)} \notag\\
		&=
		\frac{\left(\tr(\rho^2) + (\tr\rho)^2\right)}{d^n(d^n+1)}\notag\\
		&\leq
		\frac{2}{d^n(d^n+1)}, \label{second moment}
	\end{align}
	where $P_{\mathrm{Sym}}=\frac12(\mathbbm{1} + \mathrm{SWAP})$ is the projection onto the symmetric subspace of $\mathbb{C}^{d^n}\otimes \mathbb{C}^{d^n}$.
We use the Paley-Zygmund inequality, which states that for a non-negative random variable $R$ with finite variance, and for any $\alpha\in (0,1)$:
\begin{align}
	\pr\left(R\geq \alpha{\mathbb{E}[ R]} \right)\geq (1-\alpha)^2 \frac{\mathbb{E}^2[R]}{\mathbb{E}[R^2]}\,,\qquad \text{(Paley-Zygmund inequality)}
\end{align}
Application of this inequality with Eqs.~(\ref{first moment}-\ref{second moment}) then gives the desired result.
	\end{proof}

{{}}{We point out that the property of anti-concentration is inconsistent with poly-sparsity. This result is shown in Theorem \ref{ac vs ps thm} of Appendix \ref{ps vs ac section}.}
	
\subsection{Hardness theorem}
	
We are now in a position to prove our main theorem: 
	\begin{thm}\label{esim no go}
	If there exists an \esimer~of \cmany ~and Conjecture~\ref{conj} holds, then the polynomial hierarchy collapses to the third level.
	\end{thm}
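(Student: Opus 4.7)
The plan is to derive a contradiction with Conjecture~\ref{conj} (modulo the standard assumption that $\text{PH}$ does not collapse), by showing that an $\epsilon$-simulator for $\mathcal{C}_{\text{PROD}}$ would enable multiplicative-precision estimation of Born rule probabilities for a $49\%$ fraction of uniformly random Clifford circuits, via a $\text{BPP}^{\text{NP}}$ algorithm. The overall chain of reductions is: $\epsilon$-simulator $\to$ additive-precision estimate of the simulator's output probability (via Stockmeyer) $\to$ additive-precision estimate of the true quantum probability (via Markov) $\to$ multiplicative-precision estimate (via anti-concentration). This chain follows the template of \cite{bremner2016average} but with the anti-concentration Lemma of the previous subsection replacing the one for IQP.

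First, I would invoke Stockmeyer's approximate counting theorem: given the classical randomized $\epsilon$-simulator as a black box, one can compute, in $\text{BPP}^{\text{NP}} \subseteq \Delta_3^{\text{P}}$, a multiplicative-precision estimate $\tilde{q}_x$ of $q_x$, the probability that the simulator outputs the string $x$. Specifically, for any polynomial precision $1/\text{poly}(n)$, one can produce $\tilde{q}_x$ with $|\tilde{q}_x - q_x| \leq q_x/\text{poly}(n)$ using a machine in the third level of PH. Next, I would choose the $\epsilon$-simulator's target precision to be an inverse polynomial $\epsilon(n)$, so that $\sum_x |p_x - q_x| \leq \epsilon(n)$, where $p_x$ is the exact Born probability of the randomly chosen circuit.

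The key step is then the combined Markov/anti-concentration argument. Averaging the $L_1$ deviation over the uniform Clifford measure and applying Markov's inequality to the random variable $\sum_x |p_x - q_x|$ shows that, with probability at least $1 - \eta$ over the choice of Clifford $U$, the simulator's distribution is within $\epsilon(n)/\eta$ in $L_1$ of the true distribution. Then a second application of Markov's inequality \emph{over outcomes $x$} shows that, except for a small fraction of $x$'s, $|p_x - q_x|$ is of order $\epsilon(n)/(\eta \cdot 2^n)$; combined with the anti-concentration Lemma, which guarantees that for any fixed $x$ we have $p_x \geq \alpha/2^n$ with probability at least $(1-\alpha)^2/2$ over $U$, a union bound gives that for a $49\%$ fraction of Cliffords, a suitably chosen $x$ (say, the all-zeros string used in Conjecture~\ref{conj}) simultaneously satisfies $p_x \geq \alpha/2^n$ and $|p_x - q_x| \leq p_x/\text{poly}(n)$. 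Thus $\tilde{q}_x$ is a multiplicative $1/\text{poly}(n)$ approximation to $p_x$ for at least $49\%$ of random Cliffords.

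Combining everything, the existence of the $\epsilon$-simulator produces a $\text{BPP}^{\text{NP}}$ algorithm that, on at least $49\%$ of uniformly random Cliffords $U$, outputs a multiplicative $1/\text{poly}(n)$ estimate of $\mathrm{tr}(U\rho U^\dagger |0\rangle\langle 0|)$. By Conjecture~\ref{conj} this task is $\#\text{P}$-hard, so $\#\text{P} \subseteq \text{BPP}^{\text{NP}} \subseteq \Sigma_3^{\text{P}}$, which by Toda's theorem collapses $\text{PH}$ to the third level. The main delicate step is the quantitative bookkeeping in the double Markov argument: one must balance the choice of $\epsilon(n)$, the failure probability $\eta$, the fraction of ``bad'' outcomes one can tolerate, and the anti-concentration parameter $\alpha$, so that the final fraction of Cliffords on which the multiplicative estimate is good exceeds $49\%$ and matches the hypothesis of Conjecture~\ref{conj} exactly; this is precisely where the $(1-\alpha)^2/2$ bound from the anti-concentration Lemma is needed to be strong enough, and where one must invoke the freedom to pick $\epsilon(n)$ as small as any inverse polynomial while preserving $\text{poly}(n, 1/\epsilon)$ efficiency of the $\epsilon$-simulator.
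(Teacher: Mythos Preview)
Your proposal follows essentially the same template as the paper's proof---Stockmeyer, Markov over outcomes, anti-concentration, intersection of events, Toda---but two points deserve comment.

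First, the ``first Markov step'' you describe (averaging $\sum_x |p_x - q_x|$ over the uniform Clifford measure) is unnecessary: by the definition of an $\epsilon$-simulator, the bound $\sum_x |p_x - q_x| \leq \epsilon$ holds for \emph{every} circuit, not merely in expectation over $U$. There is nothing to average, and the parameter $\eta$ can be dropped. The paper uses only a single Markov step, over uniformly random $x$ for a fixed circuit.

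Second, and more substantively, you gloss over a step the paper makes explicit. After Markov (a statement over random $x$ for each fixed $U$) and anti-concentration (a statement over random $U$ for each fixed $x$), the intersection bound gives you a lower bound on $\Pr_{U,x}[\text{both events}]$---a statement about the \emph{product measure} on $(U,x)$. But Conjecture~\ref{conj} concerns a $49\%$ fraction of Cliffords for the \emph{fixed} outcome $x = 0^n$. The passage from one to the other is not a ``union bound''; it requires the observation that for any bit-string $x$, post-composing $U$ with the Pauli operator $\bigotimes_i X^{x_i}$ (which is Clifford and hence preserves the uniform measure on the Clifford group) sends the outcome-$x$ probability of $U$ to the outcome-$0$ probability of the new Clifford. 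The paper uses this symmetry to absorb the randomness over $x$ into the randomness over $U$, so that $\Pr_{U,x}[\cdots] = \Pr_U[\cdots \text{ at } x = 0^n]$. Without this step your argument only establishes the multiplicative estimate for a $49\%$ fraction of \emph{pairs} $(U, x)$, which does not directly match the hypothesis of Conjecture~\ref{conj}.
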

	\begin{proof}
	 Assuming there exists an \esimer~of \cmany, we can treat the \esimer~as a deterministic Turing machine with a random input. Let $\mc{T}$ be the Turing machine that takes as an input $\epsilon>0$ (representing the $L_1$ error required), $r\in \bitstring{poly(n/\epsilon)}$ (representing the random bit-string) and $d_c\in \mc{A}^{poly(n)}$ (representing an efficient description of an $n$ qubit circuit $c\in$ \cmany) and outputs an outcome $X^{\epsilon}\in \bitstring{k}$ with the correct statistics (over uniformly random $r$ inputs) up to $\epsilon$ in $L_1$ distance in time ${\rm poly}(n,1/\epsilon)$. That is, the output satisfies:
\begin{align}
	\dist{p}{p^{\epsilon}}:=\sum_{x\in \bitstring{k}}\abs{p_x-p^{\epsilon}_x}\leq \epsilon
\end{align}
where $p_x:=\pr(X=x)$ is the probability of observing outcome $x$ on a single run of the quantum circuit $c$ and $p^{\epsilon}_x:=\underset{r\sim unif}{\pr}(X^{\epsilon}=x)$ is the probability of observing outcome $x$ on a single run of the Turing machine $\mc{T}$ for a uniformly distributed random $r$ and fixed $\epsilon$, $d_c$ inputs.
	
We now note that the problem of computing the proportion $p^{\epsilon}_x$ of bit-strings $r$ that result in $\mc{T}(\epsilon, r, d_c)=x$ is a problem in $\#$P. Thus, the Stockmeyer algorithm gives us a means of estimating $p^{\epsilon}_x$ to within a multiplicative error in the complexity class $\text{FBPP}^\text{NP}$.

More precisely, there exists an algorithm in $\text{FBPP}^\text{NP}$ which will output an estimate $\tilde{p}^{\epsilon}_x$ such that:
\begin{align}
	\abs{p^{\epsilon}_x-\tilde{p}^{\epsilon}_x}\leq \frac{p^{\epsilon}_x}{poly(n)}
\end{align}

Thus we have that for all $c$ and for all $x$:
\begin{align}
	\abs{p_x-\tilde{p}^{\epsilon}_x}&\leq \abs{p_x-p^{\epsilon}_x}+\abs{p^{\epsilon}_x-\tilde{p}^{\epsilon}_x}\notag\\
	&\leq \abs{p_x-p^{\epsilon}_x}+\frac{p^{\epsilon}_x}{poly(n)}\notag\\
	&\leq \abs{p_x-p^{\epsilon}_x}+\frac{p_x+\abs{p_x-p^{\epsilon}_x}}{poly(n)}\notag\\
	&=\abs{p_x-p^{\epsilon}_x}\left(1+\frac{1}{poly(n)} \right)+\frac{p_x}{poly(n)}\label{additive error UB}
\end{align}

We note that the expectation value of $\abs{p_x-p^{\epsilon}_x}$ over random choice of $x\sim {\rm unif}(\bitstring{k})$ is upper-bounded by $2^{-n} \epsilon$. That is:
\begin{align}
	\underset{x}{\mathbb{E}}[\abs{p_x-p^{\epsilon}_x}]&=\frac{1}{2^{k}}\sum_{x}\abs{p_x-p_x^{\epsilon}}=\frac{1}{2^{k}}\dist{p}{p^{\epsilon}}
	\leq\frac{\epsilon}{2^{k}}\notag
\end{align}

Restricting our attention to circuits in \cmany ~where all of the qubits are measured i.e. $k=n$, we have:
\begin{align}
	\underset{x}{\mathbb{E}}[\abs{p_x-p^{\epsilon}_x}]\leq \frac{\epsilon}{2^n}
\end{align}

We apply Markov's inequality, which states that for $R$ a non-negative random variable and $\gamma>0$:
\begin{align}
	\pr\left(R\geq \frac{\mathbb{E}[ R]}{\gamma} \right)\leq \gamma \,, \qquad \text{(Markov's inequality)}
\end{align}
we have that for all $\beta>0$:
\begin{align}
	\underset{x}{\pr}\left(\abs{p_x-p^{\epsilon}_x}\geq \frac{\underset{x}{\mathbb{E}}[\abs{p_x-p^{\epsilon}_x}]}{\beta} \right)\leq \beta
\end{align}

That is:
\begin{align}
	\underset{x}{\pr}\left(\abs{p_x-p^{\epsilon}_x}< \frac{\epsilon}{\beta 2^n} \right)> (1-\beta)
\end{align}

Applying this to the upper bound in Eq.~\eqref{additive error UB}, we find that for all $\beta>0$:
\begin{align}
	\underset{x}{\pr}\left(\abs{p_x-\tilde{p}^{\epsilon}_x}< \frac{\epsilon}{\beta 2^n}\left(1+\frac{1}{poly(n)} \right) +\frac{p_x}{poly(n)} \right)> (1-\beta)\label{prob of B}
\end{align}

For any fixed choices of $\alpha \in (0,1)$, $\beta,\epsilon>0$, let us define the following events:
\begin{itemize}
	\item Event A: $\frac{p_x}{\alpha}\geq \frac{1}{2^n}$
	\item Event B: $\abs{p_x-\tilde{p}^{\epsilon}_x}< \frac{\epsilon}{\beta 2^n}\left(1+\frac{1}{poly(n)} \right) +\frac{p_x}{poly(n)}$.
\end{itemize}

By Eq.~\eqref{anticoncentration}, we have $\underset{U}{\pr}(A)> \frac{(1-\alpha)^2}{2}$ and by  Eq.~\eqref{prob of B}, we have $\underset{x}{\pr}(B)>(1-\beta)$.   Recall that the intersection bound tells us that ${\pr}(A \cap B)\geq \max\{0, {\pr}(A)+{\pr}(B)-1\}$ for events $A$ and $B$.  Thus, we have $\pr(A \cap B)\geq \frac{(1-\alpha)^2-2\beta}{2}$. This immediately implies the following:
\begin{align}
	\underset{U, x}{\pr}\left(\abs{p_x-\tilde{p}^{\epsilon}_x}< \frac{\epsilon p_x}{\alpha \beta}\left(1+\frac{1}{poly(n)} \right) + \frac{p_x}{poly(n)} \right) > \frac{(1-\alpha)^2-2\beta}{2}\label{prob U,x}
\end{align}

This can be further simplified by incorporating the randomness over $x$ into the uniform randomness over the Clifford unitaries. Specifically, let $y\in \bitstring{n}$ be arbitrarily fixed. Further, let $U_x:=\otimes_{i=1}^n X^{x_i}$. Then, noting that for all $n$ qubit Cliffords $V$: 
\begin{align}
\underset{U,U_x}{\pr}(U_x U=V)&=\underset{U,U_x}{\pr}(U=U_x V)\\
&=\underset{U}{\pr}(U= V)\\
&=\underset{U}{\pr}(U_y U= V)
\end{align}

where probabilities over $U_x$ are chosen uniformly over all $x \in \bitstring{n}$. Applying this to Eq.~\eqref{prob U,x} we find that for all $y\in \bitstring{n}$ and for all $n$ qubit product states $\rho$;
\begin{align}
	\underset{U}{\pr}\left(\abs{p_y-\tilde{p}^{\epsilon}_y}< \frac{\epsilon p_y}{\alpha \beta}\left(1+\frac{1}{poly(n)} \right) + \frac{p_y}{poly(n)} \right) > \frac{(1-\alpha)^2-2\beta}{2}
\end{align}

We recall that for an \esimer , $\epsilon>0$ can be made polynomially small efficiently in run-time and $n$. Thus, as an example, we may assign the following scaling to $\alpha, \beta, \epsilon$:
\begin{align}
	\alpha=\frac{1}{n}\,, \qquad
	\beta=\frac{1}{2n^2}\,, \qquad
	\epsilon=\frac{\alpha \beta}{n}\,.
\end{align}
This argument shows that the existence of an \esimer ~of \cmany ~implies that there exists and algorithm in $\Delta^p_3$ that can for any fixed product states $\rho$ and measurement outcomes $x\in \bitstring{n}$, output an $O(1/n)$ multiplicative precision estimate of $p_x:=\tr(U\rho U^{\dagger}\density{x})$ for almost 50\% of randomly uniformly chosen Clifford unitaries $U$ acting on $n$ qubits. That is:
\begin{align}
	\underset{U}{\pr}\Bigl[\abs{p_x-\tilde{p}^{\epsilon}_x} < p_x O(1/n) \Bigr] > \frac{1}{2}-\frac{1}{n}
\end{align}
By conjecture \ref{conj}, this is $\#$P-hard. This implies that a $\#$P-hard problem is solved in $\text{FBPP}^\text{NP}$. By Toda's theorem \cite{toda1991pp}, this collapses the polynomial hierarchy to its third level.   
	\end{proof}

\section{Discussion}

{{}}{There is a substantial and growing body of results showing the classical ``simulability'' of some quantum computers and the hardness of ``simulability'' of others. We hope that the results presented here will significantly inform the interpretation of this literature in relation to the comparison of the computational power of the relevant quantum computer to the computational power of a universal classical computer. For some family of quantum circuits $\mathcal{C}$, these results typically make statements of the form either:}
\begin{itemize}
	\item {{}}{Simulability: $\mathcal{C}$ can be classically ``simulated'' or}
	\item {{}}{Hardness: $\mathcal{C}$ can be classically ``simulated'' implies some implausible outcome}
\end{itemize}
{{}}{In the case of simulability proofs, our results show that whenever the notion of simulation used is stronger or equivalent to \esim, the useful computational power of $\mathcal{C}$ is contained within classical. Further, if the notion of simulation is a poly-box (a weaker notion then \esim ), this still applies provided that $\mathcal{C}$ is poly-sparse. If $\mathcal{C}$ is not known to be poly-sparse but admits a poly-box then, we can still conclude that without non-trivial classical post-processing, $\mathcal{C}$ is incapable of solving decision problems outside of the complexity class BPP.}

{{}}{In the case of hardness proofs, our results show that whenever the notion of simulation used is weaker or equivalent to \esim, it is plausible that the useful computational power of $\mathcal{C}$ is beyond classical. However, for proofs of hardness based on weaker notions of simulation, it may be possible to alter the proof such that it shows the hardness of \esim~(rather than a yet weaker notion) with the added benefit that now the hardness is more plausible.}

{{}}{Some hardness results show the implausibility of classically simulating $\mathcal{C}$ with respect to a notion of simulation much stronger than \esim. Even if quantum computers can reliably achieve such a notion of simulation, these results cannot be seen as showing the implausibility of the existence of efficient classical devices that can be used as a perfectly good computational substitute to $\mathcal{C}$.}

{{}}{The perspective of efficient indistinguishability gives us a natural avenue to defining the set of all problems solvable by a quantum device. We have seen that the minimal notion of simulation to achieve this is \esim; a significantly weaker notion of simulation than many of the notions used in literature \cite{jozsa2003role, Terhal2004, bremner2010classical, morimae2017power}. 
	Thus, the gap between classical and quantum computational power can be closed not only by the development of more powerful classical simulation algorithms but also by significantly reducing the computational hurdle classical devices must overcome in order to act as efficient substitutes to quantum computers.
	Our results exploit this feature in order to show that any family of quantum circuits that both admits a poly-box and satisfies the poly-sparsity condition can be \esimed.  The existence of multiple known constructions of poly-boxes (see Refs.~\cite{aaronson2004improved, stahlke2014quantum, pashayan2015estimating, bravyi2016improved, bennink2017unbiased}) over restricted families of quantum circuits, and in particular Ref.~\cite{pashayan2015estimating}, demonstrates the significant advantages offered by weakening the minimal requirements on classical simulators from the stronger notions of weak simulation to that of \esim.}
	
	{{}}{For any given family of quantum circuits, poly-sparsity can be trivially guaranteed by upper bounding the number of measured qubits by $\log~n$. However, the condition of poly-sparsity permits significantly more complex probability distribution families (including families with exponentially growing support). Future exploration of how to non-trivially guarantee poly-sparsity offers yet more potential for identifying interesting families of quantum circuits that are \esimable~using the techniques outlined here.}

{{}}{In this paper, we have argued that \esim~minimally captures computational power. However, the term ``minimally'' is with respect to the computational power of the referee, which is unbounded in the setting we considered. This raises the importance of future work aimed at defining the notion of simulation which minimally captures the computational power of a quantum computer with respect to a referee that is computationally bounded to universal quantum computation (or equivalent).
In light of this observation, our work suggests that even requiring a simulator to be capable of solving all sampling problems (as defined in Ref.~\cite{aaronson2014equivalence}) solvable by the quantum device is too strong to be minimal (w.r.t. a universal quantum bounded referee).
Future results in this direction would inform us on precisely how to further weaken the notion of sampling problems and to define a yet weaker complexity class than SampBQP (or more generally Samp$\mathcal{C}$) that (w.r.t. a universal quantum bounded referee) minimally captures computational power.}

{{}}{In an experimental setting where there is a constant lower bound to the noise present in the quantum device, the minimal requirements for efficient indistinguishability become yet weaker. In this setting, it is plausible that for IQP circuits and boson sampling circuits, classical computation can achieve the minimal requirements for efficient indistinguishability w.r.t. a universal quantum bounded referee. This possibility is supported by the existence of classical algorithms for simulating noisy IQP circuits \cite{bremner2017achieving} and noisy boson sampling circuits \cite{oszmaniec2018classical}. In the constant lower bounded noise setting, these algorithms fail to achieve efficient indistinguishability w.r.t. a computationally unbounded referee. However, whether or not they achieve efficient indistinguishability w.r.t. a universal quantum bounded referee remains a question to be resolved.}

 {{}}{Aiming to tighten the separation between simulability and hardness is an important goal toward a deeper understanding of the computational power of quantum verses classical circuits. Specifically, the aim is to move towards a full classification of simulablity by gradually reducing the ``unclassified'' space (of parameters describing a quantum computer that are both outside the range to ensure simulabilty and outside the range to ensure hardness of simulability). By focusing on the tension between anti-concentration and poly-sparsity our work has made modest progress in this direction with potential for further consolidation and progress with respect to this aim.}

{{}}{We have shown the poly-sparsity and anti-concentration properties to be mutually exclusive. If we assume that the polynomial hierarchy does not collapse and restrict to quantum computers that admit a poly-box and average case hardness (some plausible candidates being \ciqp, \cmany, and their poly-sparse restricted counterparts) we see that either poly-sparsity holds ensuring $\epsilon$-simulability or anti-concentration holds ensuring hardness.}

{{}}{For general quantum circuit families, poly-sparsity and anti-concentration are not exhaustive. Future work directed towards finding interesting spaces of quantum computers where the two notions are exhaustive would help to classify more of the yet unclassified computers in Fig.~\ref{fig:FC} (admits a poly-box and not poly-sparse). Restricted to this setting, all quantum computers that admit the appropriate average case hardness property would admit a hardness proof. Further, such work can give a much needed new perspective on the peculiar nature of the transition from \esimable~to hardness that IQP and magic state injected Clifford circuit families undergo as they transition from poly-sparse to non-poly-sparse. In particular, this may shed light on whether this behavior (shared by IQP, magic state injected Clifford circuit and possibly others) is common to intermediate models of quantum computing for a good reason or simply a coincidence.}

{{}}{Our work establishes the conceptual importance of a poly-box as a notion of simulation. Through the Hoeffding inequality and powerful sampling techniques such as Monte Carlo simulations, we inherit a number of important examples of poly-boxes including IQP circuits, magic state injected Clifford circuit and circuits with polynomially bounded negativity (see also Refs.~\cite{aaronson2004improved, stahlke2014quantum, pashayan2015estimating, bravyi2016improved, bennink2017unbiased}). This is of immediate practical interest as admitting a poly-box is sufficient for many useful problems such as finding certain expectation values or estimating the probability associated with certain events.}

{{}}{Whether or not a family of quantum circuits $\mathcal{C}$ admits a poly-box significantly informs our understanding of the computational power of $\mathcal{C}$ relative to classical. Simulability of a family of quantum circuits $\mathcal{C}$ according to the notion of a poly-box, implies that, without additional classical computational resources, $\mathcal{C}$ cannot solve decision problems outside of classical. If $\mathcal{C}$ is also poly-sparse (with binary outcome circuits being a very special case) then even an agent with universal classical computational power and access to the quantum computer $\mathcal{C}$ is confined to universal classical computational power. However, when supplemented with a universal classical computer, if $\mathcal{C}$ admits a poly-box but is not poly-sparse then it may be capable of solving decision problems beyond BPP. This possibility is not ruled out by our analysis and is consistent with the fact that \cmany~and \ciqp~circuits both admit hardness proofs.}

{{}}{There is something conceptually unclear about circuit families that admit poly-boxes and a hardness proof of the type presented in Sec.~\ref{hardness}. In particular, it is unclear if these admit a poly-box purely due to the restriction placed on the types of events that a poly-box can be queried about, or if hardness of \esim~could manifest even in circuit families which allow efficient classical polynomial precision estimation of probabilities associated with any family of events decidable in BPP. In the latter case, an agent with access to such circuits cannot solve any decision problem outside of BPP even given access to a universal classical computer. The former case leaves open the possibility that these families of circuits will behave like \cencode~(introduced in Sec.~\ref{polybox not suff}) where some appropriate classical post-processing of outcome samples will render them more powerful than BPP (assuming BQP$\not \subseteq$ BPP).
	This question is closely related to an open question raised by Aaronson in Ref.~\cite{aaronson2014equivalence}.}
	
	{{}}{It is surprising that examples of families of circuits that admit a poly-box and a plausible hardness proof are far from rare and in fact may be typical among intermediate models of quantum computing. In addition to the families we have shown to be in this category (\cmany~ and \ciqp), we note that linear optical networks \clon~and circuits with polynomially bounded negativity \cpolyneg~are also plausible candidates.
		We note that due to an algorithm by Gurvits~\cite{gurvits2005complexity} (see also Ref.~\cite{aaronson2014generalizing}), the family of linear optical quantum circuits considered in the boson sampling setting of Ref.~\cite{aaronson2011computational} admit additive polynomial precision estimators of individual outcome probabilities. However, there is no known poly-box over this family since it is currently unclear how to produce such estimators for \emph{all} marginal probabilities.
		Alternatively, \cpolyneg~ is known to admit a poly-box \cite{pashayan2015estimating}. Also, for odd prime $d$ it contains the qudit generalization of \cmany~which is both universal under post-selection and anti-concentrates. Hence an average case hardness conjecture is also plausible implying that \cpolyneg~admits a proof of hardness essentially identical to that of \cmany.}
		In light of these considerations we are optimistic that useful and computationally interesting applications can be found for intermediate models of quantum computation.

\begin{acknowledgments}
The authors are grateful to Marco Tomamichel, Ryan Mann, Michael Bremner, Dax Koh, Daniel Brod and Joel Wallman for helpful discussions. This research is supported by the ARC via the Centre of Excellence in Engineered Quantum Systems (EQUS) project number CE170100009 and by the U.S. Army Research Office through grant W911NF-14-1-0103. HP also acknowledges support from the Australian Institute for Nanoscale Science and Technology Postgraduate Scholarship (John Makepeace Bennett Gift).  The work of DG is supported by the Excellence Initiative of the German Federal and State Governments (ZUK 81), the DFG within the CRC 183 (project B01), and the DAAD.

\end{acknowledgments}

\newpage

\appendix

\section{Statistical indistinguishability proof}\label{indistinguishability proof}

We first show a well known connection between the optimal probability of choosing the correct hypothesis in a hypothesis test and the $L_1$ distance.

Suppose $\mc{P}_1$ and $\mc{P}_2$ are probability distribution over some finite set $I$, and suppose a sample $X$ is observed from the distribution $\mc{Q}$ where either $\mc{Q}=\mc{P}_1$ (hypothesis $H_1$) or $\mc{Q}=\mc{P}_2$ (hypothesis $H_2$).  Then, any hypothesis test must have some $H_1$ acceptance region $A_1\subseteq I$ and some $H_2$ acceptance region $A_2:=A_1^c\subseteq I$. The probability of a type I error is $\alpha:= \pr(X\in A_2 ~|~ X\sim \mc{P}_1)$ and the probability of a type II error is $\beta:= \pr(X\in A_1 ~|~ X\sim \mc{P}_2)$. The $L_1$ distance between $\mc{P}_1$ and $\mc{P}_2$ can be written as:
\begin{align*}
	\dist{\mc{P}_1}{\mc{P}_2}:&=\sum_{x\in I} |\mc{P}_1(x)-\mc{P}_2(x)|\\
	&=2 \underset{A_1\subset I}{\sup}~[\mc{P}_1(A_1)-\mc{P}_2(A_1)]\\
	&=2 \underset{A_1\subset I}{\sup}~[(1-\mc{P}_1(A_1^c) - \mc{P}_2(A_1)]\\
	&=2(1-\alpha^*-\beta^*)
\end{align*}
where, the second equality can be verified by noting that the supremum is achieved when $A_1=\set{x\in I~|~\mc{P}_1(x)\geq \mc{P}_2(x)}$. Here, $\alpha^*$ and $\beta^*$ are the type I and type II errors for the optimal choice of acceptance region / hypothesis test. We note that if a priori, $H_1$ and $H_2$ are equally likely, then the probability of choosing the correct hypothesis, based on a single sample, using the optimal test is thus given by:
\begin{align}
	P_{correct}&=1-Pr(X\in A_2 ~|~ X\sim \mc{P}_1) \pr(X\sim \mc{P}_1)-Pr(X\in A_1 ~|~ X\sim \mc{P}_2) \pr(X\sim \mc{P}_2)\notag\\
	&=1-\alpha^* \pr(H_1)-\beta^* \pr(H_2)\notag\\
	&=\frac{1}{2}+\frac{\dist{\mc{P}_1}{\mc{P}_2}}{4}\label{Pcorrect bound}\,.
\end{align}

\begin{figure}[h!]
  \centering
	\includegraphics{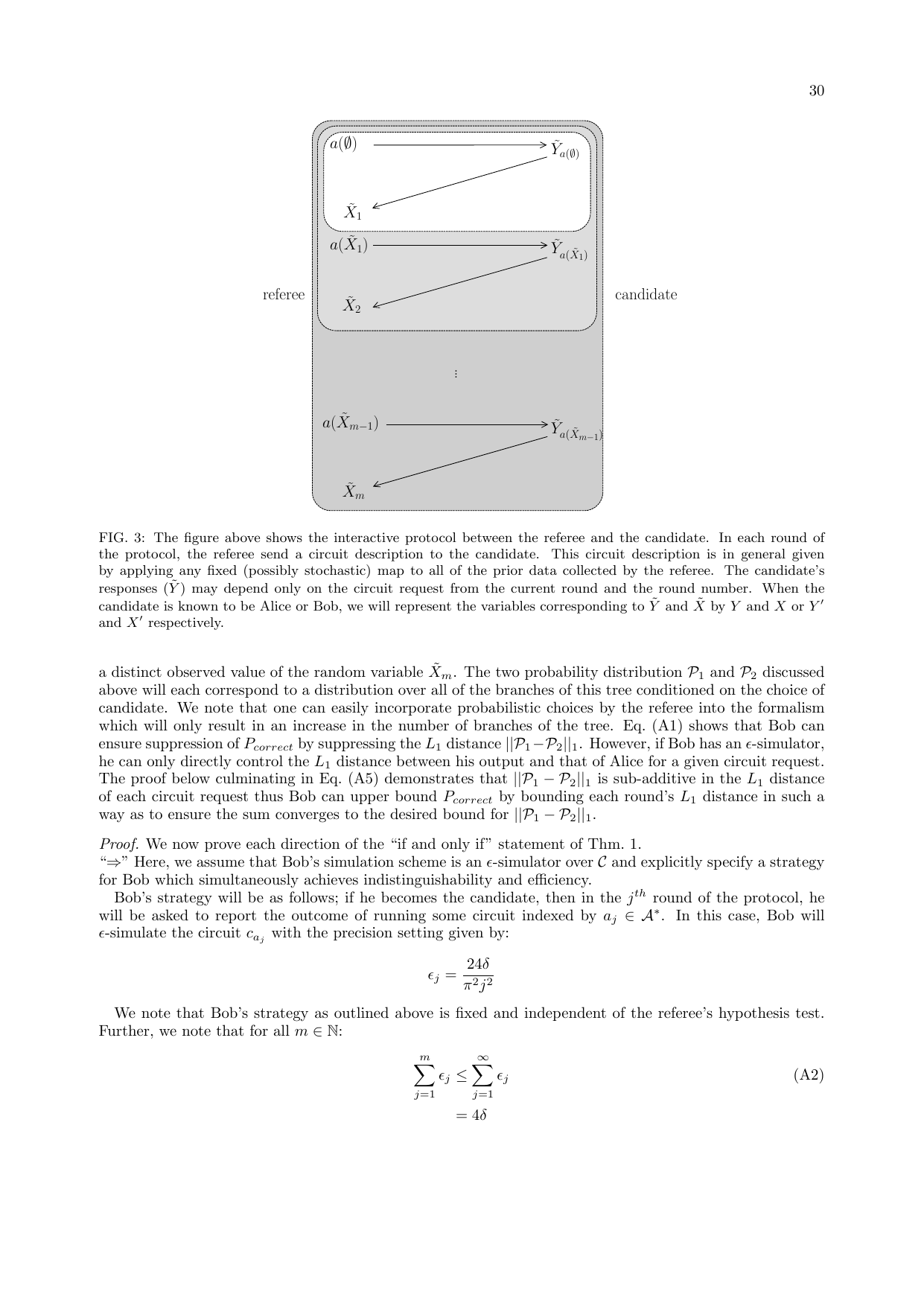}
  \caption{\label{interactive protocol} The figure above shows the interactive protocol between the referee and the candidate. In each round of the protocol, the referee send a circuit description to the candidate. This circuit description is in general given by applying any fixed (possibly stochastic) map to all of the prior data collected by the referee. The {{}}{candidate's} responses ($\tilde{Y}$) may depend only on the circuit request from the current round and the round number. When the candidate is known to be Alice or Bob, we will represent the variables corresponding to $\tilde{Y}$ and $\tilde{X}$ by ${Y}$ and ${X}$ or ${Y'}$ and ${X'}$ respectively.}
  \end{figure}

The interactive protocol between the referee and the candidate will proceed as follows (see Figure \ref{interactive protocol}):
\begin{enumerate}
	\item Initially, the referee will fix a test by choosing a function  {{}}{$a(\cdot)$} that dictates how all gathered data in prior rounds determines the next circuit request.  
	We note that while this can be further generalized by allowing stochastic maps (rather than functions), this has no baring on our results and our proof can fairly easily be extended if required.
	\item Initially the referee will make the circuit request $a_{\emptyset}\in \mc{A}^*$
	\item The response from the candidate is denoted by the random variable $\tilde{Y}_{a_{\emptyset}}$ and the string of random variables $a_{\emptyset}, \tilde{Y}_{a_{\emptyset}}$ will be represented by $\tilde{X}_1$
	\item The referee may make another circuit request by applying the map $a$ to $\tilde{X}_1$ thus defining the next circuit request $a(\tilde{X}_1)$.  
	\item On the $(j+1)^{th}$ round, the referee's circuit request will be represented by $a(\tilde{X}_j)$ and the response will be represented by $\tilde{Y}_{a(\tilde{X}_j)}$ where, $\tilde{X}_{j+1}$ represents the string of random variables $\tilde{X}_{j}, a(\tilde{X}_j), \tilde{Y}_{a(\tilde{X}_j)}$.
	\item In addition, at the end of the $j^{th}$ round for $j=1, 2, \ldots$, a fixed stochastic binary map $h$ will be applied to $\tilde{X}_j$ with the outcome determining whether or non to halt the interactive procedure. We will assume that the test will eventually halt and represent the final round of any given test by $m\in \bbn$. 
	\item Finally, the referee will decide $H_a$ vs $H_b$ by applying a fixed binary map $d$ to the full collected data set $\tilde{X}_m$. 
\end{enumerate}

We will use the notation convention above but in the case when the candidate is fixed to be Alice, we will remove the tilde (i.e. $\tilde{X}, \tilde{Y} \rightarrow X, Y$) and alternatively when the candidate is fixed to be Bob, we will replace the tilde with a prime (i.e. $\tilde{X}, \tilde{Y} \rightarrow X', Y'$).

{{}}{The set of all possible data collected by the referee (based on all probabilistic choices including the choice of the candidate) over the course of the entire test can be viewed as a tree where each branch corresponds to a distinct observed value of the random variable $\tilde{X}_m$. The two probability distribution $\mc{P}_1$ and $\mc{P}_2$ discussed above will each correspond to a distribution over all of the branches of this tree conditioned on the choice of candidate. We note that one can easily incorporate probabilistic choices by the referee into the formalism which will only result in an increase in the number of branches of the tree. Eq.~\eqref{Pcorrect bound} shows that Bob can ensure suppression of $P_{correct}$ by suppressing the $L_1$ distance $\dist{\mc{P}_1}{\mc{P}_2}$. However, if Bob has an \esimer, he can only directly control the $L_1$ distance between his output and that of Alice for a given circuit request. The proof below culminating in Eq.~\eqref{subadditivity} demonstrates that $\dist{\mc{P}_1}{\mc{P}_2}$ is sub-additive in the $L_1$ distance of each circuit request thus Bob can upper bound $P_{correct}$ by bounding each round's $L_1$ distance in such a way as to ensure the sum converges to the desired bound for $\dist{\mc{P}_1}{\mc{P}_2}$.
}

\begin{proof}
We now prove each direction of the ``if and only if'' statement of Thm.~\ref{indistinguishability thm}.
\newline ``$\Rightarrow$''
Here, we assume that Bob's simulation scheme is an \esimer ~over $\mc{C}$ and explicitly specify a strategy for Bob which simultaneously achieves indistinguishability and efficiency. 

Bob's strategy will be as follows; if he becomes the candidate, then in the $j^{th}$ round of the protocol, he will be asked to report the outcome of running some circuit indexed by $a_j\in \mc{A}^*$. In this case, Bob will $\epsilon$-simulate the circuit $c_{a_j}$ with the precision setting given by:
\begin{align*}
	\epsilon_j=\frac{24\delta}{\pi^2 j^2}
\end{align*}

We note that Bob's strategy as outlined above is fixed and independent of the referee's hypothesis test. Further, we note that for all $m\in \bbn$:
\begin{align}
	\sum_{j=1}^m \epsilon_j &\leq \sum_{j=1}^{\infty} \epsilon_j \label{Bob error}\\
	&=4\delta \notag
\end{align}

We define the map $\mc{E}[X,X']$ from any pair of random variables $X$ with probability distribution $\mc{P}$ and $X'$ with probability distribution $\mc{P}'$ to $\bbr$ as the $L_1$ distance between $\mc{P}$ and $\mc{P}'$.
 
We will show that for every test, the quantity on the LHS of Eq.~\eqref{Bob error} upper bounds $\mc{E}[X_m, X'_m]$. Hence:
\begin{align}
	\mc{E}[X_m, X'_m]\leq 4\delta
\end{align}

\begin{align}
\mc{E}[X_{j+1},X'_{j+1}]&=\sum_{\alpha,\beta} \abs{\Pr(Y_{a(X_j)}=\beta|X_j=\alpha)\Pr(X_j=\alpha)-\Pr(Y'_{a(X'_j)}=\beta|X'_j=\alpha)\Pr(X'_j=\alpha)}\notag\\
&=\sum_{\alpha,\beta} |\Pr(Y_{a(X_j)}=\beta|X_j=\alpha)\Pr(X_j=\alpha)-\Pr(Y_{a(X_j)}=\beta|X_j=\alpha)\Pr(X'_j=\alpha)\notag\\
&~~~~~~~~~~~~~~~+\Pr(Y_{a(X_j)}=\beta|X_j=\alpha)\Pr(X'_j=\alpha)-\Pr(Y'_{a(X'_j)}=\beta|X'_j=\alpha)\Pr(X'_j=\alpha)|\notag\\
&\leq \sum_{\alpha,\beta} \Pr(Y_{a(X_j)}=\beta|X_j=\alpha)\abs{\Pr(X=\alpha)-Pr(X'=\alpha)}\notag\\
&~~~~~~~~~~~~~~~+\sum_{\alpha,\beta} \Pr(X'_j=\alpha) \abs{\Pr(Y_{a(X_j)}=\beta|X_j=\alpha)-\Pr(Y'_{a(X'_j)}=\beta|X'_j=\alpha)}\notag\\
&\leq \mc{E}[X_j,X'_j]+\sum_\alpha {\mc{E}[Y_{a(\alpha)},Y'_{a(\alpha)}]\Pr(X'_j=\alpha)}\label{test indep}
\end{align}

where the sums are taken over $\alpha$ in the support of $\tilde{X}_j$ and $\beta$ in $\underset{a\in \mc{A}^*}{\cup}supp(\tilde{Y}_a)$.

We note that the precision of Bob's response in any round only depends on the round number. Thus, Eq.~\eqref{test indep} can be simplified by {{}}{replacing $\mc{E}[Y_{a(\alpha)},Y'_{a(\alpha)}]$ with the upper bound $\epsilon_{j+1}$}. Combined with the observation that $\mc{E}[X_1, X'_1]=\epsilon_1$, we have shown that:

\begin{align}
\mc{E}[X_{m},X'_{m}]&\leq \sum_{j=1}^m \epsilon_j \label{subadditivity}\\
	&\leq 4\delta\notag
\end{align}

This proves that Bob's strategy meets the indistinguishibility property. We now consider the efficiency of the strategy. We recall that given a circuit request sequence $\alpha$, Alice's and Bob's resource costs are represented by $N(\alpha)$ and $T(\alpha)$ respectively. Further, Alice's resource costs is lower bounded by $m$, the number of rounds of the Hypothesis test $\alpha$.

By definition of \esim, there exists $\kappa, c_1, c_2 \in \bbn$ such that for a given circuit index $a$, and precision $\epsilon$, $T(a)\leq c_1\left(\frac{N(a)}{\epsilon}\right)^\kappa +c_2$. For simplicity, we will set $c_1=1$ and $c_2=0$ as this is immaterial given sufficiently large $N(\alpha)$ and $\frac{1}{\epsilon}$. For $m=1$, clearly the strategy is efficient. Hence, given a string of inputs $\alpha=(a_1, \ldots, a_m)$, with $m\geq 2$ we have:
\begin{align}
	T(\alpha)&=\sum_{j=1}^m T(a_j)\\
	&\leq \sum_{j=1}^m \left(\frac{N(a_j)}{\epsilon_j}\right)^\kappa\\
	&=\sum_{j=1}^m \left(\frac{\pi^2 j^2 N(a_j)}{24\delta}\right)^\kappa\\
	&\leq \left(\frac{\pi^2}{24\delta}\right)^\kappa \left[\sum_{j=1}^{m-1} j^{2\kappa} + m^{2\kappa} [N(\alpha)-(m-1)]^\kappa \right] \label{min n is one}\\
	&\leq \left(\frac{\pi^2}{24\delta}\right)^\kappa \left[\left(\frac{m-0.5}{2\kappa +1}\right)^{2\kappa+1}+m^{2\kappa}N(\alpha)^{\kappa}-m^{2\kappa}(m-1)^{\kappa}\right]\label{integ approx}\\
	&\leq\left(\frac{\pi^2 m^2 N(\alpha)}{24\delta}\right)^\kappa \label{m geq 2}\\
	&\leq \left(\frac{\pi^2N(\alpha)^{3}}{24\delta}\right)^\kappa\label{Bob run-time UB}\\
	&\in O\left(poly(N(\alpha), \frac{1}{\delta})\right)
\end{align}
where:
\begin{itemize}
\item in Eq.~\eqref{min n is one} we have used the fact that $N(a_j)\geq 1$ for all $j$ hence the expression is maximized when $\alpha$ is chosen such that $N(a_j)=1$ for $j=1, \ldots, m-1$ and $N(a_m)=N(a_1)+\ldots+N(a_{m})-(m-1)$;
	\item in Eq.~\eqref{integ approx} we have used integration to show the inequality for any $k\in \bbn$; $\sum_{j=1}^m j^k< (\frac{m+0.5}{k+1})^{k+1}$; and 
	\item in Eq.~\eqref{integ approx} we have also used the fact that for $\kappa>1$ and $N\geq m>0$, one can show that $(N-m)^{\kappa}\leq N^{\kappa}-m^{\kappa}$
	\item in Eq.~\eqref{m geq 2} we have used the inequality $\left(\frac{m-0.5}{2\kappa +1}\right)^{2\kappa+1}-m^{2\kappa}(m-1)^{\kappa}\leq 0$ for $m\geq 2$ and $\kappa \geq 1$;
\end{itemize}

hence, there exists a polynomial $f(x,y)$ such that for all request strings $\alpha$ and $\delta>0$, $T(\alpha)\leq f(N(\alpha),\frac{1}{\delta})$.

\vspace{0.5cm}

``$\Leftarrow$'': We restrict ourselves to interactive protocols consisting of only one round. For each fixed circuit request, under the optimal choice of the decision map $d$, $\delta \propto \epsilon$ hence for all $c\in \mc{C}$ and for all $\epsilon>0$, Bob must be able to sample from some distribution $\mc{P}^{\epsilon}\in B(\mc{P},\epsilon)$. Further, since Bob's strategy meets the efficiency condition, for every $a \in \mc{A}^*$, Bob must be able to output the sample using resources $\in O\left(poly(N(a),\frac{1}{\delta})\right)\subseteq O\left(poly(n,\frac{1}{\epsilon})\right)$.
\end{proof}

\section{Strong simulation implies EPSILON-simulation}\label{strong sim proof}

In this appendix , we will show that the existence of a classical strong simulator of a family of quantum circuits implies the existence of an \esimer~ {{}}{(it can in fact construct an approximate weak simulator based on the stronger notion from Ref.~\cite{jozsa2003role})}. {{}}{This algorithm aims to map a bit-string (representing the outcome of running the circuit) to $r$, which is sampled uniformly from [0,1]. While such a mapping is defined for every ordering of the measurement outcomes, it cannot be efficiently computed. This algorithm makes intuitive use of marginal probability estimates to do a binary search for the bit-string corresponding to $r$. This technique avoids computing ratios of probability estimates making it useful in regimes where additive errors are small but larger than some of the probabilities in the target distribution.}

We start by giving a more precise definition of a strong simulator (than was presented in Sec.~\ref{defn strong and weak sim}).

\defn{\emph{(strong simulator).}  A strong simulator of a uniform family of quantum circuits $\mc{C}=\set{c_a~|~a\in \mc{A}^*}$ with associated family of probability distributions $\mathbb{P}=\set{\mc{P}_a~|~a\in \mc{A}^*}$ is a classical algorithm that, for all $a\in \mc{A}^*, \epsilon, \delta>0$ and $S\in \set{0,1,\bullet}^{k_n}$, can be used to compute an estimate $\hat{p}$ of $\mc{P}_{a}(S)$ such that $\hat{p}$ satisfies the accuracy requirement:
	\begin{align}
		\Pr\bigl(\abs{p-\hat{p}}\geq \epsilon\bigr)\leq \delta\label{accuracy req EB}
	\end{align}
	and, the run-time required to compute the estimate $\hat{p}$ is $O(poly(n, \log~\epsilon^{-1}, \log~\delta^{-1}))$.
}

We point out that much like a poly-box, a strong simulator outputs estimates of Born probabilities. The key difference is that the precision of a strong simulator is exponential compared to the polynomial precision of a poly-box. In particular, for any polynomial $f$, a strong simulator can (efficiently in $n$) output estimates such that Eq.~\eqref{accuracy req EB} is satisfied for $\epsilon \in \Omega(2^{-f(n)})$ (as opposed to a poly-box which generally requires $\epsilon \in \Omega(1/{f(n)})$). Hence, we note that the only difference between the definition of a strong simulator and that of a poly-box is the scaling of run-time in $\epsilon$. 

\begin{thm}\label{strong sim thm}
Let $\mc{C}$ be a uniform family of quantum circuits. If $\mc{C}$ admits a strong simulator, then $\mc{C}$ admits an \esimer.
\end{thm}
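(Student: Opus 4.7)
The plan is to sample the outcome bit-by-bit using the strong simulator to estimate marginal probabilities, a standard reduction. Given a circuit $c_a\in \mc{C}$ with $k\leq n$ measured qubits, I would build up a sample $(x_1, \ldots, x_k)\in \bitstring{k}$ one bit at a time: at step $j$, having already chosen $x_1, \ldots, x_{j-1}$, I query the strong simulator for estimates $\hat{p}_0, \hat{p}_1$ of the two marginal probabilities $\mc{P}_a(x_1, \ldots, x_{j-1}, b, \bullet, \ldots, \bullet)$ for $b\in \set{0,1}$, and then sample $x_j=1$ with probability $\hat{p}_1/(\hat{p}_0+\hat{p}_1)$, adopting a harmless convention (e.g.\ sample uniformly) if the denominator is too small or if an estimate turns out negative. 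After $k$ such steps the string $(x_1, \ldots, x_k)$ is output.

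The key parameter choice is to request each estimate from the strong simulator at accuracy $\epsilon_0:=\epsilon/(Ckn)$ and failure probability $\delta_0:=\epsilon/(Ckn)$, for a sufficiently large constant $C$. Since the strong simulator's run-time is $O(\text{poly}(n, \log\epsilon_0^{-1}, \log\delta_0^{-1}))$ per call, and $\log\epsilon_0^{-1}, \log\delta_0^{-1}\in O(\log(nk/\epsilon))$, the total run-time over the $2k$ marginal queries needed per sample is $O(\text{poly}(n, \log\epsilon^{-1}))$, which is in particular $O(\text{poly}(n, \epsilon^{-1}))$ and hence meets the efficiency requirement in the definition of \esim.

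The core step is the $L_1$ error analysis. I would first apply a union bound: with probability at least $1-2k\delta_0\geq 1-\epsilon/2$ all $2k$ estimates used in a given run are simultaneously within $\epsilon_0$ of their targets; call this good event $\mc{G}$. Conditional on $\mc{G}$, the procedure samples from some distribution $\mc{Q}$ on $\bitstring{k}$, and I would bound $\norm{\mc{P}_a-\mc{Q}}_1$ by writing both $\mc{P}_a$ and $\mc{Q}$ as products of conditional distributions and telescoping across the $k$ steps, reducing it to a sum of expected conditional $L_1$ distances. Each summand is controlled by expanding the sampling ratio $\hat{p}_b/(\hat{p}_0+\hat{p}_1)$ in terms of the additive errors $|\hat{p}_b-p_b|\leq \epsilon_0$, and by separately handling any ``small-denominator'' bit-prefix (say $\hat{p}_0+\hat{p}_1<4\epsilon_0$): such a prefix must itself carry true weight at most $O(\epsilon_0)$ and so contributes negligibly to the telescoped sum. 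A short calculation then gives $\norm{\mc{P}_a-\mc{Q}}_1\leq \epsilon/2$ on $\mc{G}$, and combining with the $\epsilon/2$ failure probability of $\mc{G}$ yields $\norm{\mc{P}_a-\tilde{\mc{Q}}}_1\leq \epsilon$ for the unconditional sampling distribution $\tilde{\mc{Q}}$, as required.

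The main obstacle I anticipate is exactly this last conditional-distance bookkeeping in the presence of small denominators; all other steps are routine. However, because the strong simulator delivers additive precision exponentially small in its run-time, the threshold defining a ``small denominator'' can be pushed polynomially (or even exponentially) close to zero at only polylogarithmic cost, making the contribution of the problematic branches easy to dominate. This flexibility is precisely what distinguishes a strong simulator from a generic poly-box: exponential additive precision enables direct sampling from the joint distribution without invoking the poly-sparsity hypothesis needed in Theorem~\ref{simable}.
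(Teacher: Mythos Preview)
Your bit-by-bit sampling strategy is the right idea and is close to the paper's proof, which also samples one bit at a time using marginal estimates from the strong simulator. The paper's variant draws a single uniform $r\in[0,1]$ once and at step $j$ sets $\tilde X_j=0$ iff the estimate of $\mc{P}_a(x_1,\ldots,x_{j-1},0,\bullet,\ldots,\bullet)$ is at least $r$ (an inverse-CDF walk), rather than your scheme of normalizing two fresh estimates at each step; both variants are standard.

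There is, however, a genuine gap in your parameter choice. With additive precision $\epsilon_0$ on $\hat p_0,\hat p_1$, the conditional $\hat p_1/(\hat p_0+\hat p_1)$ at a prefix of true marginal weight $p=p_0+p_1$ is only within $O(\epsilon_0/p)$ of the true conditional $p_1/p$, not $O(\epsilon_0)$. In the telescoped sum this contributes, at level $j$, a term $\sum_{x_{<j}}p(x_{<j})\cdot O(\epsilon_0/p(x_{<j}))=O(\epsilon_0)\cdot 2^{j-1}$, so the total over $j=1,\ldots,k$ is $O(\epsilon_0\cdot 2^k)$, not $O(k\epsilon_0)$. Concretely, for $\mc{P}_a$ uniform on $\bitstring{k}$ the denominator at step $j$ is $\approx 2^{-(j-1)}$, so an adversarial $\pm\epsilon_0$ shift biases the sampled conditional away from $1/2$ by $\epsilon_0\cdot 2^{j-1}$; once $j\gtrsim\log_2(1/\epsilon_0)$ the output is essentially deterministic and the $L_1$ error is $\Omega(1)$. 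Your ``small-denominator'' threshold $4\epsilon_0$ does not catch these prefixes. The paper's own bound, Eq.~(\ref{strong sim l1 error}), likewise carries an explicit $2^n$ prefactor and accordingly takes $\epsilon,\delta\leq 2^{-\text{poly}(n)}$.

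The repair is exactly the flexibility you note in your final paragraph: take $\epsilon_0,\delta_0\sim\epsilon\cdot 2^{-k}$ rather than $\epsilon/(Ckn)$. Since $\log\epsilon_0^{-1}=O(k+\log\epsilon^{-1})=O(n+\log\epsilon^{-1})$, each query to the strong simulator still costs $\text{poly}(n,\log\epsilon^{-1})$ and the efficiency requirement of \esim\ is met. So the method is sound; only the stated value of $\epsilon_0$ and the claim that a ``short calculation'' with it yields $\norm{\mc{P}_a-\mc{Q}}_1\leq\epsilon/2$ are wrong.
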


{{}}{In fact we will prove an even stronger statement; that a strong simulator implies approximate weak simulation in the much stronger sense of approximate weak simulation used in Ref.~\cite{jozsa2003role} (exponentially small error in $L_1$ norm).}

Before proving this theorem, we introduce an algorithm that uses output from a strong simulator to approximately sample from the output distribution of a circuit i.e. to produce output consistent with the definition of an \esimer. Without loss of generality, let $c \in \mc{C}$ be an arbitrary $n$ qubit circuit with all $n$ qubits measured. We will denote the quantum probabilities by $p_S$ and the output of the strong simulator by ${p}^{\epsilon, \delta}_S$ suppressing the dependence on $c$.

To give a rough intuition, the algorithm will first sample a polynomial length bit-string $\tilde{r}$ which will be mapped to a probability $r\in [0,1]$. This value will remain fixed and be used throughout the algorithm until a sample $\tilde{X}$ is generated from the approximate output distribution. This sample will be the output of the \esimer ~upon a single execution with the input $(\epsilon', c)$. The sample $\tilde{X}=(\tilde{X}_1,\ldots, \tilde{X}_n)$ will be generated by sampling one bit at a time starting with $\tilde{X}_1$. The choice of the $j^{th}$ bit $\tilde{X}_j$ is based on the comparisons between the output of the strong simulator ${p}^{\epsilon, \delta}_S$ and the probability $r$. This $n$ step process will require $n$ calls to the strong simulator where in each call, the only variation in the inputs is the events $S_j$. Each event $S_j$ will be chosen based on the previously sampled values $\tilde{X}_1,\ldots, \tilde{X}_{j-1}$.

The algorithm will proceed as follows:
\begin{enumerate}
\item Fix $m\in \bbn$ and $\epsilon, \delta>0$ based on $\mc{C}$ and the desired $L_1$ error upper bound, $\epsilon'$ (see later). 
\item Sample $\tilde{r}$ uniformly from $\bitstring{m}$. 
\item Compute $r=\sum_{i=1}^{m} {\tilde{r}_i}2^{-i}$
\item Set $S:=(s_1, \ldots, s_n)=(\bullet, \ldots, \bullet)$.
\item Set $j=1$.
\item Set $s_j=0$.
\item Set $S_j=S$.
\item Request ${p}^{\epsilon, \delta}_{S_j}$ from the strong simulator.
\item If ${p}^{\epsilon, \delta}_{S_j}\geq r$, then set $\tilde{X}_j=0$ otherwise, set $\tilde{X}_j=1$.
\item Set $s_j=\tilde{X}_j$.
\item If $j=n$, output the string $\tilde{X}=(\tilde{X}_1,\ldots,\tilde{X}_n)$ and end.
\item Reset $j\rightarrow j+1$ and go to step 6.
\end{enumerate}

We now prove Theorem \ref{strong sim thm}.
\begin{proof}
We wish to show that for all acceptable families of quantum circuits $\mc{C}$, choices of $c\in \mc{C}$ and $\epsilon'>0$:
\begin{itemize}
	\item there exist a polynomially bounded function $f(\epsilon',n)$ which determines $m$ and
	\item there exist functions for determining $\epsilon, \delta$
\end{itemize}
 such that given a strong simulator of $\mc{C}$, the above algorithm can be executed in run-time $O(poly(n, \epsilon'^{-1}))$ and produce output $\tilde{X}$ from a distribution $\tilde{\mc{P}}$ satisfying $\tilde{\mc{P}}\in B(\mc{P},\epsilon')$.

We note that the probability distribution over $x\in \bitstring{n}$ defines a partitioning (up to sets of measure zero) of the unit interval into $2^n$ intervals\footnote{Here, we use a looser notion of interval by allowing points $p\in \bbr$ to constitute an intervals $[p,p]$.} $V_x$ labeled by $x$ such that the uniform measure on these intervals corresponds to the quantum probability of outcome $x$. That is, we fix the partitioning such that for all $x\in \bitstring{n}$:
\begin{align}
	\mu(V_x)=p_x\text{.}
\end{align}
To be specific, we can define $V_x=[v^-_x,v^+_x]$ where:
\begin{align}
	v^-_x=\sum_{x'<x} p_{x'}\\
	v^+_x=\sum_{x'\leq x} p_{x'}\\
\end{align}
where, the above order on bit strings $x'$ and $x$ is defined by lexicographical ordering.

We note that given a uniform sample $p$ from the unit interval, $p$ will, up to measure zero, be strictly identified with an outcome $x\in \bitstring{n}$ through the mapping $o: [0,1]\setminus D \rightarrow \bitstring{n}$ implicitly defined by $p\in V_{o(p)}$ for all $p\in [0,1]\setminus D$ where $D:=\set{p_S|S\in \set{0,1,\bullet}^n}$. Further, in the ideal case where the strong simulator produces output which is deterministically exact i.e. ${p}^{\epsilon, \delta}_{S}=p_S$ for all $S$, we note that the above algorithm would, for a given $r$, produce output $\tilde{X}=o(r)$. For $r$ distributed uniformly on the unit interval, this ensures $\tilde{X}$ is sampled from exactly the quantum distribution. We thus note that two sources of error arise. The first is from the inaccuracies introduced by the strong simulator's output. The second is from having to approximate a uniform sample over $[0,1]$ by a uniform sample over $\bitstring{m}$.

Let $\tilde{p}^{\epsilon, \delta}_{x}$ denote the probability $\pr(\tilde{X}=x)$. Then, we have:
\begin{align}
	\tilde{p}^{\epsilon, \delta}_{x}=\sum_{\tilde{r}\in \bitstring{m}} 2^{-m} \pr(\tilde{X}=x~|~r)
\end{align}

Given an interval $V=[v^-,v^+]$ and $\alpha\in \bbr$, we define:
\begin{align}
V^{\alpha}=	\begin{cases}
															[v^- -\alpha, v^+ +\alpha],& \text{if } \alpha\geq 0 \text{ or } v^+-v^-\geq 2\alpha\\
															[\frac{v^- +v^+}{2}, \frac{v^- +v^+}{2}],& \text{otherwise}
													\end{cases}
\end{align}

If $p_x\geq 2\epsilon$ and $r\in V^{-\epsilon}_x$ then:
\begin{align}
	\pr(\tilde{X}=x|r)\geq(1-\delta)^n\text{.}
\end{align}
This can be seen by noting that with probability $\geq 1-\delta$, each requested probability estimate in step 8 will be within $\epsilon$ of the corresponding quantum probability resulting in $\tilde{X}_j=o(r)_j$.

Thus, we have:
\begin{align}
	\tilde{p}^{\epsilon, \delta}_{x}&\geq \underset{r\in V^{-\epsilon}_x}{\sum_{\tilde{r}\in \bitstring{m}}} 2^{-m} (1-\delta)^n\\
	&\geq l_x 2^{-m}(1-\delta)^n\\
	&\geq [p_x-2\epsilon -2^{-m}](1-\delta)^n
\end{align}
where
\begin{align}
l_x:&=\left\lfloor{\frac{\abs{V^{-\epsilon}_x}}{2^{-m}}}\right\rfloor\\
&=\left\lfloor{\frac{p_x-2\epsilon}{2^{-m}}}\right\rfloor
\end{align}
is a lower bound on the number of bit strings $\tilde{r}$ which under the map in step 3 must be contained in the interval $V^{-\epsilon}_x$.

If $r\in \bbr \setminus V^{+\epsilon}_x$ then:
\begin{align}
	\pr(\tilde{X}=x|r)\leq 1-(1-\delta)^n
\end{align}
since estimates within $\epsilon$ of the target probability at each of the $n$ iterations of step 8 will result in $\tilde{X}\neq x$.

Thus, we also have:
\begin{align}
	\tilde{p}^{\epsilon, \delta}_{x}&\leq \sum_{\underset{r\in V^{+\epsilon}_x}{{\tilde{r}\in \bitstring{m}}}} 2^{-m} \pr(\tilde{X}=x~|~r) + \sum_{\underset{r\in \bbr \setminus V^{+\epsilon}_x}{{\tilde{r}\in \bitstring{m}}}} 2^{-m} 1-(1-\delta)^n\\
	&\leq u_x 2^{-m} + \left[1-(1-\delta)^n\right]\\
	&\leq \left[p_x+2\epsilon + 2^{-m}\right] + \left[1-(1-\delta)^n\right]
\end{align}
where
\begin{align}
u_x:&=\left\lfloor{\frac{\abs{V^{+\epsilon}_x}}{2^{-m}}}\right\rfloor+1\\
&=\left\lfloor{\frac{p_x+2\epsilon}{2^{-m}}}\right\rfloor+1
\end{align}
is an upper bound on the number of bit strings $\tilde{r}$ which under the map in step 3 must be contained in the interval $V^{+\epsilon}_x$.

Thus $p_x\geq 2\epsilon$:

\begin{align}
	[-2\epsilon -2^{-m}](1-\delta)^n -p_x \left[1-(1-\delta)^n\right] \leq \tilde{p}^{\epsilon, \delta}_{x}-p_x \leq \left[2\epsilon + 2^{-m}\right] + \left[1-(1-\delta)^n\right]
\end{align}
i.e.
\begin{align}
	\abs{\tilde{p}^{\epsilon, \delta}_{x}-p_x} \leq \left[2\epsilon + 2^{-m}\right] + \left[1-(1-\delta)^n\right]\label{abs prob error strong sim}
\end{align}

Also, if $p_x\leq 2\epsilon$, then:
\begin{align}
	-p_x \leq \tilde{p}^{\epsilon, \delta}_{x}-p_x \leq \left[2\epsilon + 2^{-m}\right] + \left[1-(1-\delta)^n\right]
\end{align}
thus, the bound from Eq.~\eqref{abs prob error strong sim} also applies in this case.

This implies that:
\begin{align}
	\epsilon'\leq 2^n \left[2\epsilon + 2^{-m} + 1-(1-\delta)^n\right]\text{.}\label{strong sim l1 error}
\end{align}
 Clearly there exist choices of polynomials $f_1,f_2,f_3$ such that for $\epsilon'=\frac{1}{poly(n)}$ or even $\epsilon'=2^{-poly(n)}$, Eq.~\eqref{strong sim l1 error} can be satisfied by choosing $\epsilon\leq 2^{-f_1(n)}$, $\delta\leq 2^{-f_2(n)}$ and $m\geq f_3(n)$. We complete the proof by noting that these choices ensure that the run-time of the strong simulator and the above algorithm are efficient in $n$ and $1/\epsilon'$. 
\end{proof}

\section{Multiplicative precision simulation implies EPSILON-simulation}\label{mult sim proof}

{{}}{In this section, we present an algorithm (very similar to Ref.~\cite{Terhal2004}) which uses an estimator with multiplicative precision to construct an \esimer~(it can in fact construct an approximate weak simulator based on the stronger notion from Ref.~\cite{Terhal2004}). This algorithm exploits the fact that ratios of multiplicative precision estimators are multiplicative precision in order to sequentially, one qubit's measurement outcome at a time, sample from the marginal probability of the next qubit's measurement conditioned on the sampled outcomes of the prior measurements. This algorithm and its variants have also been presented in \cite{valiant2002quantum, terhal2002classical} and are well known within the simulation-of-quantum-circuits community.}

{{}}{Here, we claim without proof that this algorithm lifts a classical multiplicative precision simulator of a family of quantum circuits to an approximate weak simulator based on the stronger notion from Ref.~\cite{Terhal2004}.}
 {{}}{This result has been shown in Ref.~\cite{Terhal2004}, but we discuss it here for completeness.} 

We start by giving a definition of a multiplicative precision simulator.

\defn{\emph{(multiplicative precision simulator).}  A multiplicative precision simulator of a uniform family of quantum circuits $\mc{C}=\set{c_a~|~a\in \mc{A}^*}$ with associated family of probability distributions $\mathbb{P}=\set{\mc{P}_a~|~a\in \mc{A}^*}$ is a classical algorithm that, for all $a\in \mc{A}^*, \epsilon, \delta>0$ and $S\in \set{0,1,\bullet}^{k_n}$, can be used to compute an estimate $\hat{p}$ of $\mc{P}_{a}(S)$ such that $\hat{p}$ satisfies the accuracy requirement:
	\begin{align}
		\pr\bigl(\abs{p-\hat{p}}\geq \epsilon p \bigr)\leq \delta\label{accuracy req MPS}
	\end{align}
	and, the run-time required to compute the estimate $\hat{p}$ is $O(poly(n, \epsilon^{-1}, \delta^{-1}))$.
}

We claim that a multiplicative precision simulator can be used to construct an \esimer.

\begin{thm}\label{mult sim thm}
Let $\mc{C}$ be a uniform family of quantum circuits. If $\mc{C}$ admits a multiplicative precision simulator, then $\mc{C}$ admits an \esimer.
\end{thm}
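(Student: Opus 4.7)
The plan is to reuse the bit-by-bit sampling idea from the proof of Theorem~\ref{strong sim thm}, but to replace the ``uniform-sample-and-compare'' step with \emph{direct} sampling of each conditional bit using the multiplicative estimates. The key observation is that multiplicative errors propagate benignly through the chain rule for joint distributions, whereas the additive-error scheme of the strong-simulation proof required precision $\epsilon \in O(2^{-n})$ per query to beat the $L_1$ sum over $2^n$ outcomes --- a regime unreachable by a multiplicative simulator whose run-time scales as $O(\text{poly}(n, \epsilon^{-1}, \delta^{-1}))$ rather than $O(\text{poly}(n, \log \epsilon^{-1}, \log \delta^{-1}))$.

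First I would describe the sampler. Given a target $L_1$ error $\epsilon' > 0$ and a circuit $c \in \mc{C}$ on $n$ qubits with all qubits measured, fix per-query parameters $\epsilon_1, \delta_1 > 0$ (to be chosen at the end). The sampler processes qubits $1, \ldots, n$ in order. At step $j$, having already sampled $\tilde{X}_1 = x_1, \ldots, \tilde{X}_{j-1} = x_{j-1}$, it invokes the multiplicative simulator twice to obtain estimates $\hat{q}_0, \hat{q}_1$ of the marginals $p_b := \mc{P}((x_1, \ldots, x_{j-1}, b, \bullet, \ldots, \bullet))$ for $b \in \set{0,1}$, each with precision $(\epsilon_1, \delta_1)$. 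Provided $\hat{q}_0 + \hat{q}_1 > 0$, the sampler draws $\tilde{X}_j = 0$ with probability $\hat{q}_0/(\hat{q}_0 + \hat{q}_1)$ and $\tilde{X}_j = 1$ otherwise (choosing arbitrarily if both estimates vanish); the final output is $\tilde{X} = (\tilde{X}_1, \ldots, \tilde{X}_n)$.

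Next I would bound the $L_1$ error. Let $E$ denote the event that all $2n$ multiplicative estimates meet their accuracy guarantees; by a union bound, $Pr(E) \geq 1 - 2n\delta_1$. Conditional on $E$, direct manipulation of the bounds $\hat{q}_b \in [(1-\epsilon_1)p_b, (1+\epsilon_1)p_b]$ shows that $\hat{q}_0/(\hat{q}_0 + \hat{q}_1)$ lies within a multiplicative factor $1 + O(\epsilon_1)$ of the true conditional $\mc{P}(X_j = 0 \mid X_1 = x_1, \ldots, X_{j-1} = x_{j-1})$, and similarly for the complementary outcome. Invoking the chain rule $n$ times then gives $\tilde{\mc{P}}(x \mid E) = (1 + O(n\epsilon_1)) \mc{P}(x)$ for every bit string $x$, which immediately yields $\dist{\tilde{\mc{P}}(\cdot \mid E)}{\mc{P}} \leq O(n\epsilon_1)$. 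Folding the failure probability of $E$ in via the triangle inequality gives $\dist{\tilde{\mc{P}}}{\mc{P}} \leq O(n\epsilon_1 + n\delta_1)$.

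Finally, I would set $\epsilon_1 = \delta_1 = \Theta(\epsilon'/n)$ to make $\dist{\tilde{\mc{P}}}{\mc{P}} \leq \epsilon'$. The sampler performs $n$ rounds of two queries, each taking time $O(\text{poly}(n, \epsilon_1^{-1}, \delta_1^{-1})) = O(\text{poly}(n, 1/\epsilon'))$, so the algorithm is an \esimer~of $\mc{C}$. The principal technical obstacle I anticipate is the edge case $p_b = 0$: the multiplicative definition in Eq.~\ref{accuracy req MPS} becomes vacuous when $p = 0$ (since $\epsilon p = 0$ does not force $\hat{p} = 0$), so I would either adopt the natural convention that a multiplicative estimator must output $0$ on true-zero inputs, or refine the analysis by observing that bit strings with $\mc{P}(x) = 0$ contribute no $L_1$ mass and that spurious branches taken along such a path can be absorbed into the $O(n\delta_1)$ failure term.
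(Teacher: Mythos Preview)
Your proposal is correct and follows essentially the same route as the paper: bit-by-bit conditional sampling where each conditional probability is estimated as a ratio of multiplicative-precision marginal estimates, exploiting the fact that such ratios remain multiplicatively accurate and hence the chain rule propagates only an $O(n\epsilon_1)$ error. The only cosmetic difference is that the paper divides the current marginal estimate by the previous one (one query per bit) whereas you query both branches and normalize (two queries per bit); your variant is arguably cleaner, and your explicit treatment of the $p_b=0$ edge case and the $L_1$ bookkeeping go beyond the paper's sketch, which omits the full proof.
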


We omit a complete proof of this theorem as it makes straightforward use of standard techniques. However, we outline the algorithms which uses output {{}}{from}  a multiplicative precision simulator to approximately sample from the output distribution of a circuit. Without loss of generality, let $c \in \mc{C}$ be an arbitrary $n$ qubit circuit with all $n$ qubits measured. We will denote the quantum probabilities by $p_S$ and the output of the multiplicative precision simulator by ${p}^{\epsilon, \delta}_S$ suppressing the dependence on $c$.

The algorithm will proceed as follows:
\begin{enumerate}
\item Fix $m\in \bbn$ and $\epsilon, \delta>0$ based on $\mc{C}$ and the desired $L_1$ error upper bound, $\epsilon'$ {(see later)}. 
\item Set $S:=(s_1, \ldots, s_n)=(\bullet, \ldots, \bullet)$.
\item Set ${p}^{\epsilon, \delta}_{S_0}:=1$.
\item Set $j=1$.
\item Set $s_j=0$.
\item Set $S_j=S$.
\item Request ${p}^{\epsilon, \delta}_{S_j}$ from the {{}}{multiplicative precision}  simulator.
\item Compute $c_j:={p}^{\epsilon, \delta}_{S_j}/{p}^{\epsilon, \delta}_{S_{j-1}}$
\item Sample $\tilde{r}$ uniformly from $\bitstring{m}$.
\item Compute $r=\sum_{i=1}^{m} {\tilde{r}_i}2^{-i}$
\item If $c_j\geq r$, then set $\tilde{X}_j=0$ otherwise, set $\tilde{X}_j=1$.
\item Set $s_j=\tilde{X}_j$.
\item If $j=n$, output the string $\tilde{X}=(\tilde{X}_1,\ldots,\tilde{X}_n)$ and end.
\item Reset $j\rightarrow j+1$ and go to step 5.
\end{enumerate}

We note that multiplicative precision estimate can divide each other and still produce a multiplicative precision estimate. Hence $c_j$ computed in step 8 is a multiplicative precision estimate of the quantum conditional probability $p_{S_j}/p_{S_{j-1}}=\pr(X_j=x_j~|~X_1=x_1, \ldots, X_{j-1}=x_{j-1})$. This ensures that for $\epsilon'=\frac{1}{poly(n)}$, there exist polynomials $f_1, f_2, f_3$ such that $\epsilon\leq 1/f_1(n)$, $\delta\leq 1/f_2(n)$ and $m\geq f_3(n)$ satisfy the desired accuracy.

\section{On poly-sparsity and anti-concentration}\label{ps vs ac section}

In this section we will prove that poly-sparsity and anti-concentration cannot simultaneously be satisfied by any family of quantum circuits. This result is proven in Theorem \ref{ac vs ps thm}.

The condition of poly-sparsity forces the output distributions over exponentially many outcomes to concentrate on polynomially many outcomes. Alternatively, the property of anti-concentration forces the probabilities of observing any particular outcome, over random choices of circuits, to be low. Intuitively these properties do appear to oppose each other. However, since these properties are statements with respect to different probability spaces, we must first translate each property into a statement about a common probability space and with a common measure. This is done for anti-concentration and poly-sparsity in Lemma \ref{ac implication} and \ref{ps implication} respectively. We then state and prove our main claim in Theorem \ref{ac vs ps thm}.

First let us restate the relevant definitions is some detail.

\defn{(poly-sparse) Let $\mc{P}$ be a discrete probability distribution. We say that $\mc{P}$ is poly-sparse if there exists a polynomial $P(x)$ such that for all $\epsilon > 0$, $\mc{P}$ is \ets~whenever $t\geq P(\frac{1}{\epsilon})$. 

Let $\mathbb{P}$ be a family of probability distributions with $\mc{P}_{a}\in \mathbb{P}$ a distribution over $\bitstring{k_{a}}$.  We say that $\mathbb{P}$ is poly-sparse if there exists a polynomial $P(x)$ such that for all $\epsilon>0$ and $a \in \mc{A}^{*}$, $\mc{P}_{a}$ is \ets~whenever $t\geq P(k_{a}/\epsilon)$.}

\defn{(anti-concentration) Let $\mathcal{C}$ be a family of quantum circuits with $\mathbb{P}$ its associated family of probability distributions. For all $n\in \bbn$  let $\sigma_n$ be a probability measure over $\mathcal{A}^n$. We say that $\mathcal{C}$ anti-concentrates with respect to the set of measures $\Sigma:=\set{\sigma_n}_{n\in\bbn}$ iff $\forall n\in \bbn$, $\forall x\in \bitstring{n}$ and $\forall \alpha\in (0,1)$:
	\begin{align}
		\underset{a\in \mathcal{A}^n}{\pr} \left(\mathcal{P}_a (x) \geq \frac{\alpha}{2^n} \right)> \frac{(1-\alpha)^2}{2}\label{anticoncentration detailed}
	\end{align}
	where the probability is with respect to the measure $\sigma_n$.
	}
	
\begin{lemma}\label{ac implication}
For each $n\in \bbn$, let $\sigma_n$ be a probability measure over $\mathcal{A}^n$, let $\nu_n$ be any probability measure over over $\bitstring{n}$ and let $\tau_n$ be a probability measure over $\mathcal{A}^n \times \bitstring{n}$ defined as the product measure $\sigma_n \times \nu_n$. Then $\mathcal{C}$ anti-concentrates with respect to $\set{\sigma_n}_{n\in\bbn}$ implies that $\forall n\in \bbn$ and $\forall \alpha\in (0,1)$:
\begin{align}
	\underset{(a,x) \in \mathcal{A}^n\times \bitstring{n}}{\pr} \left(\mathcal{P}_a (x) \geq \frac{\alpha}{2^n} \right)> \frac{(1-\alpha)^2}{2}
\end{align}
where the probability is taken with respect to $\tau_n$.
\end{lemma}
\begin{proof}
For each $n\in \bbn$, we first define the sets $S_x:=\set{(a,x)\in \mathcal{A}^n\times \bitstring{n}~|~\mathcal{P}_a (x) \geq \frac{\alpha}{2^n}}$ and $S'_x:=\set{a\in \mathcal{A}^n~|~\mathcal{P}_a (x) \geq \frac{\alpha}{2^n}}$. Let $S:=\underset{x\in \bitstring{n}}{\cup} S_x$.

By the definition of anti-concentration, we have that $\forall n\in \bbn$, $\forall x\in \bitstring{n}$ and $\forall \alpha\in (0,1)$:
	\begin{align*}
		\sigma_n (S_x)> \frac{(1-\alpha)^2}{2}\text{.}
	\end{align*}
	Let us fix $\nu_n$ and note that:
	\begin{align}
\sum_{x\in \bitstring{n}} \nu_n (\set{x})\times \sigma_n (S'_x)&> \sum_{x\in \bitstring{n}} \nu_n (\set{x})\times\frac{(1-\alpha)^2}{2}\label{ac measure switch}
	\end{align}
	
	The LHS of Eq.~\eqref{ac measure switch} simplifies as follows:
	\begin{align*}
			\sum_{x\in \bitstring{n}} \nu_n (\set{x})\times \sigma_n (S'_x)&=\sum_{x\in \bitstring{n}} \tau_n (S_x)\\
			&=\tau_n(S)\text{.}
	\end{align*}
	
	By the fact that $\nu_n$ is a probability measure, the RHS simplifies to $\frac{(1-\alpha)^2}{2}$ thus proving the claim.
\end{proof}

\begin{lemma}\label{ps implication}
For each $n\in \bbn$, let $\sigma_n$ be any probability measure over $\mathcal{A}^n$, let $u_n$ be the uniform probability measure over over $\bitstring{n}$ and let $\tau_n$ be a probability measure over $\mathcal{A}^n \times \bitstring{n}$ defined as the product measure $\sigma_n \times \mu_n$. Then $\mathcal{C}$ is poly-sparse implies that $\forall \beta, \epsilon \in (0,1]$,   $\exists n_0 \in \bbn$ such that $\forall n \geq n_0$ and $\forall \gamma>0$:
\begin{align}
	\underset{(a,x) \in \mathcal{A}^n\times \bitstring{n}}{\pr} \left(\mathcal{P}_a (x) \geq \frac{\epsilon}{2^n \gamma (1-\beta)} \right)\leq \gamma+\beta
\end{align}
where the probability is taken with respect to $\tau_n$.

\end{lemma}
\begin{proof}
For any family of quantum circuits $\mathcal{C}$, $\forall n\in \bbn$ and for each $a\in \mathcal{A}^n$, a minimal function $t_a: [0,1]\rightarrow \bbn$ can be uniquely defined such that $\forall \epsilon \in [0,1]$, the probability of observing an outcome to be one of the $t_a(\epsilon)$ most likely outcomes (when circuit $c_a\in \mathcal{C}$ is run) is $\geq 1-\epsilon$. Poly-sparsity implies that there exists a polynomial $P$ such that $\forall \epsilon \in (0,1]$, $\forall n\in \bbn$, $\forall a\in \mathcal{A}^n$, $t_a(\epsilon)\leq P(n/\epsilon)$.

We apply Markov's inequality, which states that for $R$ a non-negative random variable and $\gamma>0$:
\begin{align*}
	\pr\left(R\geq \frac{\mathbb{E}[ R]}{\gamma} \right)\leq \gamma \text{.} \qquad \text{(Markov's inequality)}
\end{align*}

For any fixed $n\in \bbn$, $a\in \mathcal{A}^n$, let us consider uniformly randomly sampling from one of the $2^n-t_a(\epsilon)$ least likely outcomes. In this case:
\begin{align*}
	\mathbb{E}[\mathcal{P}_a (x)] &=\frac{\epsilon^+}{2^n-t_a(\epsilon)}\\
	&\leq \frac{\epsilon}{2^n-t_a(\epsilon^{+})}\\
	&= \frac{\epsilon}{2^n-t_a(\epsilon)-1}
\end{align*}
where $\epsilon^{\pm}$ are the two extremal points of the interval $V:=[\epsilon^-,\epsilon^+)$ containing $\epsilon$ such that $\forall \kappa \in V$ $t_a(\kappa)=t_a(\epsilon)$.

This implies that $\forall \epsilon \in (0,1]$, $\forall n\in \bbn$, $\forall a\in \mathcal{A}^n$, $\forall \gamma>0$, if we uniformly randomly sample from one of the $2^n-t_a(\epsilon)$ least likely outcomes, then:
\begin{align*}
	\underset{x \in \bitstring{n}}{\pr}\left(\mathcal{P}_a (x) \geq \frac{\mathbb{E}[\mathcal{P}_a (x)]}{\gamma} \right) &\leq\underset{x \in \bitstring{n}}{\pr}\left(\mathcal{P}_a (x) \geq \frac{\epsilon}{\gamma (2^n-t_a(\epsilon)-1)} \right)\\
	&\leq \gamma
\end{align*}

Hence, for $x$ uniformly sampled over \emph{all} bit-strings:
\begin{align*}
	\underset{x \in \bitstring{n}}{\pr}\left(\mathcal{P}_a (x) \geq \frac{\epsilon}{\gamma (2^n-t_a(\epsilon)-1)} \right) \leq \gamma+\frac{t_a(\epsilon)}{2^n}
\end{align*}
i.e.
\begin{align}
	\underset{x \in \bitstring{n}}{\pr}\left(\mathcal{P}_a (x) \geq \eta \right) \leq c \label{poly-sparsity markov}
\end{align}
where $\eta:=\frac{\epsilon 2^{-n}}{\gamma (1-t_a(\epsilon)2^{-n}-2^{-n})}$, $c:=\gamma+\frac{t_a(\epsilon)}{2^n}$ and the probability is over the uniform measure $u_n$ over $\bitstring{n}$.

Let us note that by the fact that poly-sparsity requires that there is a polynomial $P$ such that $t_a(\epsilon)\leq P(n/\epsilon)$, we have that $\forall \beta, \epsilon \in (0,1]$,   $\exists n_0 \in \bbn$ such that $\forall n \geq n_0$, $\forall a\in \mathcal{A}^n$ and $\forall \gamma>0$:
\begin{align*}
	\beta \geq \frac{t_a(\epsilon)+1}{2^n}
\end{align*}
implying that $\eta \leq \frac{\epsilon 2^{-n}}{\gamma (1-\beta)}$ and $c<\gamma+\beta$.

For all $n\in \bbn$, we define the sets 
\begin{displaymath}
  T_a:=\set{(a,x)\in \mathcal{A}^n \times \bitstring{n}~|~\mathcal{P}_a (x) \geq \eta},
\end{displaymath} 
$T'_a:=\set{x\in  \bitstring{n}~|~\mathcal{P}_a (x) \geq \eta}$ and $T:=\underset{a\in \mathcal{A}^n}{\cup} T_a$. We now note that Eq.~\eqref{poly-sparsity markov} can be rewritten as:
\begin{align*}
	u_n(T'_a) \leq c \text{.}
\end{align*}
For any fixed sequence of measures $\sigma_n$ over $\mathcal{A}^n$, we can define $\tau_n=\sigma_n \times u_n$ to get:
\begin{align}
	\sum_{a\in \mathcal{A}^n} \sigma_n({a}) \times u_n(T'_a) \leq \sum_{a\in \mathcal{A}^n} \sigma_n({a}) \times c \label{ps measure switch} \text{.}
\end{align}

The LHS of this equation can be simplified as follows:
\begin{align*}
	\sum_{a\in \mathcal{A}^n} \sigma_n({a}) \times u_n(T'_a)&=\sum_{a\in \mathcal{A}^n} \tau_n(T_a)\\
	&=\tau_n(T)\text{.}
\end{align*}

By the fact that $\sigma_n$ is a probability measure, the RHS of Eq.~\eqref{ps measure switch} simplifies to $c$ thus proving the claim.
\end{proof}

\begin{thm}\label{ac vs ps thm}
For each $n\in \bbn$ let $\sigma_n$ be a measures over $\mathcal{A}^n$ and let $\mathcal{C}=\underset{n\in \bbn}{\cup} \set{c_a}_{a\in \mathcal{A}^n}$ be a family of quantum circuits such that $\mathcal{C}$ anti-concentrates with respect to $\Sigma=\set{\sigma_n}_{n\in \bbn}$. Then $\mathcal{C}$ is not poly-sparse.
\end{thm}
\begin{proof}
We will show that the two conditions together give rise to a contradiction and hence are inconsistent. We apply Lemma \ref{ac implication} and set $\alpha=1/8$ and for each $n\in \bbn$, $\nu_n=u_n$, the uniform measure over $\bitstring{n}$ giving:

\begin{align}
	\underset{(a,x) \in \mathcal{A}^n\times \bitstring{n}}{\pr} \left(\mathcal{P}_a (x) \geq \frac{1}{2^n \times 2} \right)> \frac{1}{8}\text{.}\label{ac cond in thm}
\end{align}

We apply Lemma \ref{ps implication} and set $\beta=\gamma=1/16$ and $\epsilon=\frac{1}{64}\left(1-\frac{1}{16} \right)$ giving:

\begin{align}
	\underset{(a,x) \in \mathcal{A}^n\times \bitstring{n}}{\pr} \left(\mathcal{P}_a (x) \geq \frac{1}{2^n \times 4} \right)\leq \frac{1}{8} \label{ps cond in thm}
\end{align}
where both Eq.~\eqref{ac cond in thm} and \eqref{ps cond in thm} are with respect to the same measure $\tau_n$ thus implying a contradiction.

\end{proof}

Theorem \ref{ac vs ps thm} establishes that poly-sparsity and anti-concentration are mutually exclusive properties. However, these properties can nonetheless jointly fail to be satisfied. For example, we can take an infinite family of circuits (growing faster than $2^n$) which is poly-sparse then for each $n$ append a single circuit with output distribution that is uniform. This change ensures that the family now breaks poly-sparsity but is insufficient for anti-concentration to be instated. In addition, if we assume that the family admits a poly-box then we notice that by Theorem \ref{simable} this family was \esimable~before the change but after the change no longer satisfies the requirements for the application of Theorem \ref{simable} despite the fact that we have only added a sequence of uniform distributions which are \esimable. A clean mathematical characterization of when poly-sparsity and anti-concentration can jointly fail may help inform how to best resolve this undesirable situation.

\end{document}